\definecolor{BrickRed}{rgb}{0.8,0.25,0.33}
\theoremstyle{plain}
\newtheorem{thm}{Theorem}[section]
\newtheorem{cor}[thm]{Corollary}
\newtheorem{fact}[thm]{Fact}
\newtheorem{lem}[thm]{Lemma}
\newtheorem{cla}[thm]{Claim}
\newtheorem{obs}[thm]{Observation}
\crefname{thm}{Theorem}{theorems}
\crefname{cla}{Claim}{claims}
\crefname{lem}{Lemma}{lemmas}
\crefname{fact}{Fact}{facts}
\newcommand{\E}{\mathbb{E}}
\newcommand{\R}{\mathbb{R}}
\newcommand{\bbP}{\mathbb{P}}
\newcommand{\Z}{\mathbb{Z}}
\newcommand{\eps}{\varepsilon}
\newcommand{\calB}{\mathcal{B}}
\newcommand{\calD}{\mathcal{D}}
\newcommand{\calG}{\mathcal{G}}
\newcommand{\calI}{\mathcal{I}}
\newcommand{\calP}{\mathcal{P}}
\newcommand{\calY}{\mathcal{Y}}
\newcommand{\optoff}{\mathrm{OPT}_\mathrm{off}}
\newcommand{\opton}{\mathrm{OPT}_\mathrm{on}}
\newenvironment{wrapper}[1]
{
	\smallskip
	\begin{center}
		\begin{minipage}{\linewidth}
			\begin{mdframed}[hidealllines=true, backgroundcolor=gray!20, leftmargin=0cm,innerleftmargin=0.375cm,innerrightmargin=0.375cm,innertopmargin=0.375cm,innerbottommargin=0.375cm,roundcorner=10pt]
				#1}
			{\end{mdframed}
		\end{minipage}
	\end{center}
	\smallskip
}
\renewcommand\AB@affilsepx{\quad \protect\Affilfont}
\author[1]{Kristen Kessel}
\author[1]{Amin Saberi}
\author[2]{Ali Shameli}
\author[1]{David Wajc}
\affil[1]{Stanford University}
\affil[2]{MIT}
\title{The Stationary Prophet Inequality Problem}
\date{\vspace{-1.0cm}}
\begin{document}

\maketitle

\begin{abstract}
We study a continuous and infinite time horizon counterpart to the classic prophet inequality, which we term the stationary prophet inequality problem. Here, copies of a good arrive and perish according to Poisson point processes. Buyers arrive similarly and make take-it-or-leave-it offers for unsold items. The objective is to maximize the (infinite) time average revenue of the seller.

\smallskip

Our main results are pricing-based policies which (i) achieve a $1/2$-approximation of the optimal offline policy, which is best possible, and (ii) achieve a better than  $(1-1/e)$-approximation of the optimal online policy. Result (i) improves upon bounds implied by recent work of Collina et al. (WINE'20), and is the first optimal prophet inequality for a stationary problem. Result (ii) improves  upon a  $1-1/e$ bound implied by recent work of Aouad and Sarita{\c{c}} (EC'20), and shows that this prevalent bound in online algorithms is not optimal for this problem. 

\end{abstract}

\pagenumbering{gobble}
\newpage 
\pagenumbering{arabic}

\section{Introduction}\label{sec:intro}

A ubiquitous challenge in market economics is decision making under uncertainty, addressed by the area of online algorithms. 
Should a firm sell an item to a buyer now, or reject their bid in favor of possibly higher future bids (at the risk of no such higher future bids arriving)?
Such dynamics were studied by probabilists in the area of optimal stopping theory as early as the 60s and 70s \cite{dynkin1963optimum,krengel1978,krengel1977semiamarts}, 
and have regained renewed interest in recent years in the online algorithms community, in large part due to their relevance to mechanism design.

\smallskip

A classic problem in the area is the single-item prophet inequality problem. Here, a buyer wishes to sell a single item, and buyers arrive in some order, with buyer $i$ making a take-it-or-leave it bid $v_i$ drawn from a (known) distribution $\calD_i$. 
The classic result of \citet{krengel1977semiamarts,krengel1978} asserts that there exists a $\nicefrac{1}{2}$-competitive algorithm, i.e., an algorithm whose expected value is at least $\nicefrac{1}{2}$ of the value $\E[\max_i v_i]$ obtained by a ``prophet'' who knows the future. This is optimal---no online algorithm has higher competitive ratio.
Shortly after, \citet{samuel-cahn1984} presented a $\nicefrac{1}{2}$-competitive \emph{posted-price} policy (i.e., selling the item to the first buyer bidding above some fixed threshold), foreshadowing a long line of work on such pricing-based policies.

\smallskip 

The classic single-item prophet inequality problem has been generalized to selling more complicated combinatorical structures, including, e.g., multiple items \cite{hajiaghayi2007automated,alaei2014bayesian,chawla2020static}, knapsacks \cite{feldman2016online,dutting2020prophet}, matroids and their intersections \cite{kleinberg2019matroid,babaioff2018matroid,feldman2016online,dutting2020prophet,chawla2010multi}, matchings \cite{ezra2020online,gravin2019prophet,feldman2014combinatorial,feldman2016online,dutting2020prophet}, and arbitrary downward-closed families \cite{rubinstein2016beyond}.
Most of this work has focused on approximating the offline optimum algorithm, with recent work also studying the (in)approximability of the optimal online algorithm by poly-time algorithms \cite{anari2019nearly,papadimitriou2021online} and particularly by posted-price policies \cite{niazadeh2018prophet}.
Much of the interest in prophet inequality problems, and specifically pricing-based policies, has been fueled by their implication of truthful mechanisms which approximately maximize social welfare and revenue, first observed in \cite{chawla2010multi} (see the surveys \cite{lucier2017economic,correa2019recent,hartline2013bayesian}, and  \cite{correa2019pricing} for the ``opposite'' direction).

\smallskip 

Despite this rich line of work on prophet inequalities and their use in online markets, 
one salient feature of motivating markets is missing in these problems' formulations: the repeated nature of the dynamics of such markets. 
In such markets, companies care less about their immediate returns than their average long-term rewards. 
Over such long time horizons, companies produce additional goods, while items which are not sold fast enough may expire.
Neither the long-term objective nor the dynamic nature of goods to sell is captured by traditional prophet inequality problems.

\smallskip

We introduce a continuous-time, infinite time horizon counterpart to the classic prophet inequality problem, which we term the \emph{stationary prophet inequality} problem. 
Here, goods are produced over time, where items of each good arrive according to Poisson processes, and, if unsold, perish according to Poisson processes. 
Buyers with different valuations for the different goods similarly enter the market according to Poisson processes. 
When a buyer $b$ arrives, a policy determines immediately which item (if any) to sell to $b$.
The objective is to maximize the infinite time horizon average reward of the policy, compared to the optimal offline or online policies.
(See \Cref{sec:prelims} for precise problem formulation.)

\smallskip 

A closely related problem of dynamic weighted matching was recently introduced by \cite{aouad2020dynamic,collina2020dynamic}.
In their problem, the market is not bipartite, and agents arrive over time and can be matched at any time before their (sudden and unpredictable) departure from the market. (Our problem is the special case of theirs where buyers' departure rate is infinite.)
The authors of \cite{collina2020dynamic,aouad2020dynamic} present algorithms which give $\frac{1}{4}(1-\nicefrac{1}{e})$- and $\nicefrac{1}{8}$-approximations of the optimal online and offline policies, respectively. 
For our single-good problem, their approaches yield $(1-\nicefrac{1}{e})$- and $(1-\nicefrac{1}{(e-1)})\approx 0.418$-approximations of the optimal online and offline policies, respectively (see \Cref{appendix:priorwork}).

\smallskip 

We ask to what extent these bounds can be improved, in particular, using \emph{posted-price} policies.

\subsection{Our contributions}

Our main results concern the single-good stationary prophet inequality problem. 

\smallskip

Our first result is a pricing-based policy which achieves a competitive ratio of $\nicefrac{1}{2}$. That is, this policy achieves an approximation of $\nicefrac{1}{2}$ of the value gained by the optimal offline policy for this problem---a bound we show is optimal for any policy (pricing-based or otherwise).
\begin{wrapper}
\begin{restatable}{thm}{HalfCompetitiveSingleGoodPolicy}
\label{thm:1/2-competitive-single-good-policy}
There exists a $\nicefrac{1}{2}$-competitive posted-price policy for the single-good stationary prophet inequality problem.
No online policy has competitive ratio greater than $\nicefrac{1}{2}$.
\end{restatable}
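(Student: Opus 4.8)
The statement combines a pricing policy (the $\nicefrac12$ lower bound on the competitive ratio) with a matching impossibility. For the impossibility, I would lift the classical two‑point prophet instance to the stationary setting: fix a single good with item arrival rate $\lambda$ and scale the perishing rate $\rho$ and both buyer rates up in proportion, so that up to $o(1)$ each item is alone in the system throughout its short lifetime, during which a ``cheap'' buyer with offer $1$ arrives almost surely and an ``expensive'' buyer with offer $\nicefrac1\eps$ arrives with probability $\approx\eps$. By a renewal/PASTA argument the long‑run revenue rate of any policy is $\lambda$ times its expected per‑item revenue, up to $o(1)$. The offline optimum earns $\approx 1+(1-\eps)=2-\eps$ per item (take the expensive buyer if present, a cheap one otherwise). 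For any online policy, memorylessness of all the Poisson clocks collapses the relevant state to ``has an expensive buyer appeared yet,'' so every policy amounts to a (possibly time‑varying) cutoff after which it concedes to a cheap buyer; a short computation shows each such rule has per‑item expected revenue $\le 1$, since the surplus it can harvest from expensive buyers is at most $\eps\cdot\nicefrac1\eps=1$ and this trades off exactly against the base value $1$. Overlap episodes occur with vanishing probability and contribute $o(1)$. Letting $\eps\to0$ gives competitive ratio $\nicefrac12$ for every online policy, hence in particular for every posted‑price policy.

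For the matching policy, I would (i) relax $\optoff$ to an LP that is aware of perishing and (ii) run a single posted price tuned to balance against it. For (i): by PASTA and flow balance of the inventory process, the offline rate of sales at offer $\ge t$ is at most $\lambda$ (supply) and at most $\mu\bar F(t)\cdot\Pr[N>0]$, where $\bar F$ is the offer survival function and $N$ the inventory, while $\Pr[N>0]\le\min\{1,\E[N]\}$ with $\E[N]=(\lambda-\text{sale rate})/\rho$; the resulting LP upper‑bounds $\optoff$ and, importantly, stays within a constant factor of $\optoff$ in all the extreme regimes (it correctly interpolates between item‑limited and buyer‑limited behavior and accounts for items perishing before any buyer sees them). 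For (ii): with posted price $\tau$ (accept every offer $\ge\tau$ whenever inventory is nonempty) the revenue rate factors as $\Pr^{\mathrm{ALG}}[N>0]\cdot\mu\,\E[v\,\mathbbm{1}\{v\ge\tau\}]$. The inventory under this policy is a birth–death chain with birth rate $\lambda$ and death rate $\rho k+\mu\bar F(\tau)$ in each state $k\ge 1$, so keeping the first two terms of its stationary series gives the key bound $\Pr^{\mathrm{ALG}}[N>0]\ge\lambda/(\lambda+\rho+\mu\bar F(\tau))$ (with sharper bounds available in the relevant range). Splitting $\E[v\,\mathbbm{1}\{v\ge\tau\}]=\tau\bar F(\tau)+\int_\tau^\infty\bar F(t)\,dt$ into a ``threshold'' contribution and a ``surplus'' contribution and charging each against the corresponding piece of the LP — the surplus piece being exactly where the Samuel‑Cahn exchange argument earns the factor $2$ — would yield $\mathrm{ALG}\ge\tfrac12\mathrm{LP}\ge\tfrac12\optoff$ for the right $\tau$.

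The step I expect to be the main obstacle is the choice of $\tau$ together with the accompanying case analysis. In the classical single‑item prophet inequality one simply picks the threshold so that $\Pr[\text{the item is sold}]=\tfrac12$, which balances the base and surplus contributions exactly. Here the analog of ``the item is sold'' is the steady‑state event $\{N>0\}$ under the policy, whose probability depends on $\tau$ — and separately on $\lambda$ and $\rho$ — in a nonlinear way through the inventory dynamics, and the naive choice ``$\tau$ at the knee of the relaxation'' already falls short of $\tfrac12$ (it gives a ratio $p(\tau)/\beta$ with $p(\tau)/\beta$ strictly below $\tfrac12$). Recovering the clean constant appears to require matching $\tau$ to the LP's optimal solution — which may itself reserve availability for high offers when there are rare high‑value buyers — a regime split according to whether supply, buyer demand, or perishing is binding, and possibly a randomized $\tau$; handling an atom of the offer distribution at the chosen threshold, the stationary counterpart of the classical ``randomize at the median'' fix, is a further, routine technicality.
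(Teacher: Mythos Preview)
Your impossibility sketch is broadly in the same spirit as the paper's: both lift the classical two-point prophet instance. The paper's version is somewhat cleaner---it takes $\lambda=\eps$, $\mu=1$, a ``miser'' with bid $1$ arriving at rate $\infty$, and a ``big spender'' with bid $1+\nicefrac{1}{\eps}$ arriving at rate $\eps$, then bounds $\opton$ not by a per-item stopping argument but by directly evaluating the online LP benchmark $\mathrm{LP}_\mathrm{on}$ (which includes the constraint $x_j\le\gamma_j(\lambda-\sum_\ell x_\ell)/\mu$) on this instance---but your renewal/per-item route could be made to work.

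The algorithmic half, however, has a real gap, and you have essentially identified it yourself in your last paragraph. The LP you propose for $\optoff$---with the availability term controlled by $\Pr[N>0]\le\min\{1,\E[N]\}$ and $\E[N]=(\lambda-\text{sale rate})/\rho$---is not strong enough to yield $\nicefrac12$. If you use the ``present'' inventory so that $\E[N]=\lambda/\rho$, you recover exactly the benchmark $\mathrm{RB}_{\mathrm{off}}$ of \cite{collina2020dynamic}, and the paper proves (\Cref{obs:RBoffgap}) that this LP has integrality gap $1-\nicefrac{1}{(e-1)}\approx 0.418$: no online policy can be shown to be better than $0.418$-competitive against it. If instead you use the ``available'' inventory with $\E[N]=(\lambda-\text{sale rate})/\rho$, you are invoking constraint~\eqref{ec20-constraint}, which the paper only claims for \emph{online} policies; the offline policy's availability process is not independent of future buyer arrivals, so PASTA does not license this bound on $\optoff$. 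Either way your relaxation is too loose (or invalid), and no Samuel--Cahn style charging against it will reach $\nicefrac12$---which is exactly the obstruction you ran into when trying to choose $\tau$.

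The missing ingredient is a \emph{sharper} offline-valid constraint: the paper observes that any (offline or online) policy satisfies
\[
x_j \;\le\; \gamma_j\cdot\bigl(1-\exp(-\lambda/\mu)\bigr),
\]
because sales are bounded by the rate of buyers who see a \emph{present} item, and presence (unlike availability) is a policy-independent $M/M/\infty$ queue to which PASTA applies exactly, with $\Pr[\text{present}>0]=1-e^{-\lambda/\mu}$ rather than merely $\min\{1,\lambda/\mu\}$. With this constraint added (giving the benchmark $\mathrm{LP}_{\mathrm{off}}$), the paper does \emph{not} run a Samuel--Cahn decomposition at all: it takes the LP optimum $\mathbf{x}^*$, sets $p_j=x^*_j/(\gamma_j\,(1-e^{-\lambda}))$, and shows via the exact birth--death stationary distribution with capacity $C=2$ (three states, not the two-state bound you wrote) that $\Pr_2[A\ge 1]/(1-e^{-\lambda})\ge \nicefrac12$ for all $\lambda$, whence $s_j\ge\nicefrac12\, x^*_j$ uniformly. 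The $C=1$ truncation you wrote down gives only $\approx 0.435$ even against the improved LP.
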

\end{wrapper}

\Cref{thm:1/2-competitive-single-good-policy} is the first optimally competitive policy for a stationary prophet inequality problem. 
Our policy follows the approach given by \cite{collina2020dynamic}, whose analysis implies a $\nicefrac{1}{3}$-competitive ratio by comparing to a natural LP benchmark (see \Cref{appendix:priorwork}).
Our first technical contribution is a more queuing-theoretic analysis, via which we show that their policy is in fact $1-\nicefrac{1}{(e-1)}\approx 0.418$-competitive. Unfortunately, this is the best bound achievable using their approach; we show that for their LP benchmark, the above $1-\nicefrac{1}{(e-1)}$ bound is tight (see \Cref{sec:limitations-prior-LPs}).
Our second contribution is a new constraint, relying on another fundamental result in queuing theory, namely that Poisson arrivals ``see'' time averages (PASTA, \cite{wolff1982poisson}).
These queuing theoretic and approximation algorithmic ideas combined yield our optimally-competitive policy.

\smallskip 

We next turn to the approximability of the optimal \emph{online} policy, where we might hope to achieve higher approximation guarantees. 
For the classic prophet inequality problem, \citet{niazadeh2018prophet} show that pricing-based policies yield no better approximation of the optimal online policy than they do of the optimal offline policy. 
For the stationary prophet inequality problem, the same is not true; while our inapproximability result of \Cref{thm:1/2-competitive-single-good-policy} implies that no competitive ratio beyond $\nicefrac{1}{2}$ is possible, an algorithm of \cite{aouad2020dynamic} yields a $1-\nicefrac{1}{e} \approx 0.632$ approximation of the optimal online policy.
We prove that this latter natural bound, prevalent in the online algorithms literature, is not optimal for our problem, and present a pricing-based policy which breaks this bound.

\begin{wrapper}
\begin{restatable}{thm}{ApproximateSingleGoodPolicy}
\label{thm:0.656-approximate-single-good-policy}
There exists a posted-price policy for the single-good stationary prophet inequality problem which is a $0.656$-approximation of the optimal online policy in expectation.
\end{restatable}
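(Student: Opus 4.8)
The plan is to compare a posted-price policy against a characterization of the optimal online value $\opton$ that is considerably tighter than the offline LP benchmark underlying \Cref{thm:1/2-competitive-single-good-policy}, and in particular tight enough to permit breaking the $1-1/e$ barrier. For a single good the relevant state of any policy may be taken to be the current inventory level $k$ (everything is memoryless), and I would argue via a standard exchange/coupling argument that an optimal online policy is an \emph{inventory-threshold} policy: there is a threshold $\theta_j$ for each buyer type $j$, nonincreasing in $v_j$, and the policy serves a type-$j$ buyer iff the current inventory is at least $\theta_j$. Consequently the steady-state probability $a_j$ that a type-$j$ buyer is served has the form $\Pr[\text{inventory}\ge\theta_j]$; that is, the $a_j$'s are "stacked" exceedance probabilities of one birth--death process — a much stronger restriction than the offline bound $a_j\le q:=\Pr[\text{inventory}\ge1]$ exploited for \Cref{thm:1/2-competitive-single-good-policy}. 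I would then write $\opton$ as the optimum of the resulting program over threshold assignments, where the birth--death process has birth rate equal to the item arrival rate and, in state $k$, death rate equal to $k$ times the perishing rate plus $\sum_{j:\theta_j\le k}\lambda_j$, and if needed relax this to a cleaner still-upper-bounding form.

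The posted-price policy would use a single inventory threshold — equivalently, a fixed (possibly randomized) price $p^\star$, serving every buyer of value at least $p^\star$ whenever an item is in stock. Its inventory then evolves as a birth--death chain with birth rate the item arrival rate and, in state $k$, death rate $k$ times the perishing rate plus $\Lambda:=\sum_{j:\,v_j\ge p^\star}\lambda_j$; solving for the stationary distribution $\pi$ gives $q=1-\pi_0$ in closed form and hence revenue rate $q\sum_{j:\,v_j\ge p^\star}v_j\lambda_j$. The threshold $p^\star$ would be chosen by comparing this expression against the online program — intuitively, $p^\star$ should sit where the marginal value of admitting more demand stops outweighing the resulting drop in availability — and I expect the right choice to be readable off the optimal threshold assignment of the online program.

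The quantitative heart is then to show that this single-threshold revenue is always at least $0.656$ times the online program's value. The key point is that the online benchmark, even though it may "reserve" items by using large thresholds for low-value buyers, pays for this in availability: since all its $a_j$'s are exceedance probabilities of a single birth--death process, pushing a low-value type's acceptance toward $q$ forces its threshold down to $1$, at which point it is being admitted exactly as in our policy. A crude comparison that just bounds our $q$ against the benchmark's top-type availability recovers only $1-1/e$; extracting the extra margin requires quantifying the availability cost of the benchmark's reservations, which I anticipate reduces to minimizing an explicit function of a couple of scalar parameters (essentially the ratio of perishing rate to admitted demand rate, together with the value profile near the threshold) and checking that its minimum exceeds $0.656$.

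The step I expect to be the main obstacle is exactly that final optimization together with the structural reduction feeding it: pinning down the worst-case instance shape, reducing the comparison between the single-threshold policy and the best nested-threshold policy to a low-dimensional problem, and then pushing the resulting bound past $1-1/e$ to $0.656$. A supporting difficulty is making the inventory-threshold characterization of $\opton$ (or a sufficiently clean upper bound on it) fully rigorous in continuous time, since one must establish both the monotone-threshold structure of the optimal online policy and that no additional state is needed.
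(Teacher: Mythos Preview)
Your proposal takes a genuinely different route from the paper, and as written it is a plan with the hard steps left unexecuted rather than a proof.

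The paper never characterizes the optimal online policy structurally. Instead it upper-bounds $\opton$ by the linear program $\mathrm{LP}_\mathrm{on}$, whose relevant single-good constraints (with $\mu=1$) are $\sum_j x_j\le\lambda$, the PASTA constraint $x_j\le\gamma_j(1-e^{-\lambda})$, and the online constraint $x_j\le\gamma_j(\lambda-\sum_\ell x_\ell)$. The posted-price policy is \Cref{alg:main} with $\alpha=1$, $\mathbf{x}^*$ the LP optimizer, and $w=\min\{1-e^{-\lambda},\,\lambda-\sum_j x^*_j\}$; the sale rate to type $j$ equals $(\bbP_C[A\ge1]/w)\cdot x^*_j$, so the whole argument reduces to lower-bounding the scalar $\bbP_C[A\ge1]/w$. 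From the online constraint one gets $1+\gamma^*\le\lambda/w$, and plugging into the exact birth--death formula (\Cref{lem:Pr-available-exact}) the paper splits on whether $w\le 1-e^{-12/5}$: in the small-$w$ regime the crude bound $r+\gamma^*\le r(1+\gamma^*)$ already gives $g_1(C,w)\ge g_1(C,1-e^{-12/5})$; in the large-$w$ regime (which forces $\lambda\ge 12/5$) a sharper per-level inequality $r+\gamma^*\le\tfrac{5}{6}r\cdot\lambda/w$ for $r\ge2$ yields $g_2(C,w)\ge g_2(C,1)$. For $C\ge5$ both numbers exceed $0.656$. No threshold structure for the optimal policy is needed; the LP constraints carry all the information, and the $0.656$ falls out of a one-variable case analysis.

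Your approach hinges on two things you have not done and explicitly flag as obstacles: (i) establishing the inventory-threshold structure of the optimal online policy in continuous time, and (ii) the entire quantitative comparison of the single-threshold policy to the best nested-threshold policy. Step (ii) is exactly where the constant $0.656$ has to be produced, and the proposal only says it should reduce to ``minimizing an explicit function of a couple of scalar parameters.'' That reduction is not obvious: the benchmark's birth--death chain has a different death rate at every level, so the nested-threshold revenue depends on the full vector of thresholds, and nothing you wrote collapses it to low dimension. Even granting (i), the content of the theorem lies entirely in (ii), which in your proposal is still a hope rather than an argument.
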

\end{wrapper}

Our analysis compared to the LP benchmark of \cite{aouad2020dynamic} gives a simple $(1-\nicefrac{1}{e})$-approximation of the optimal online policy, which is tight for their LP. Here, we show that our new PASTA constraint and analysis allow us to break this ubiquitous bound.

\medskip

\noindent\textbf{Mechanism design implications.} 
    By standard connections to mechanism design, 
    our pricing-based policies immediately imply truthful mechanisms which approximate the social-welfare and revenue maximizing offline and online mechanisms (see \Cref{appendix:mechanism-design}).

\smallskip 
Finally, using the same algorithmic and analytic ideas, together with additional stochastic dominance results, we extend our approach from the single-good to the multi-good problem. For this natural generalization, we present a $\nicefrac{15}{56} \approx 0.267$-competitive policy, improving on the $\nicefrac{1}{8}=0.125$-competitive policy of \cite{collina2020dynamic}.

\begin{restatable}{thm}{CompetitiveMultiGoodPolicy}
\label{thm:competitive-multi-good-policy}
There exists a $\nicefrac{15}{56}$-competitive policy for the multi-good problem.
\end{restatable}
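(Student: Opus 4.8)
The plan is to reduce the multi-good problem to repeated application of the single-good machinery, paying a constant factor for the interaction between goods. First I would fix an optimal offline (or a suitable LP) solution for the multi-good instance and, for each good $g$, extract the ``demand'' that the optimal policy places on $g$, i.e.\ the rate at which buyers are matched to copies of good $g$. Restricting attention to a single good $g$ in isolation, one obtains a single-good instance whose offline optimum is at least the contribution of $g$ to the multi-good offline optimum; summing over $g$ recovers $\optoff$ for the whole instance. The loss of $15/56$ versus $1/2$ should come from two sources: (a) the buyers are shared across goods, so when a buyer arrives we can only offer them one good, and committing to one good means foregoing the others; and (b) the prices posted for one good interact with the (random) availability used by the single-good analysis for the other goods. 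Concretely, $15/56 = (1/2)\cdot(15/28)$ or, more plausibly given the shape of such arguments, $15/56$ arises as $(1/2)\cdot(1/2)\cdot \text{(something)}$ or a telescoping product; I would aim to express the multi-good ratio as the single-good ratio of Theorem~\ref{thm:1/2-competitive-single-good-policy} times a ``splitting'' factor that accounts for a buyer being claimed by at most one good.

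Next I would set up the policy itself. For each good $g$ I would run (a scaled version of) the $\nicefrac12$-competitive posted-price policy from \Cref{thm:1/2-competitive-single-good-policy}, but with the arrival rate of buyers of each type thinned: a buyer who is willing to buy several goods is, in the analysis, assigned to good $g$ with some probability $p_g$ proportional to the LP mass, so that the effective single-good instance for $g$ sees a Poisson process of rate $\lambda_g = p_g \cdot (\text{original rate})$ (Poisson thinning keeps everything Poisson, which is what makes the stationary/queueing analysis go through). The actual online policy, when a buyer arrives, would consider the goods in some order (or sample one) and sell the first good whose posted price is met; the key is that the marginal behaviour of each good's queue is stochastically dominated by, or equal to, the single-good process analyzed before, so the per-good revenue guarantee is inherited. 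This is exactly where the ``additional stochastic dominance results'' mentioned in the introduction enter: I would prove that thinning the buyer stream and letting other goods ``steal'' buyers only decreases the queue length of good $g$ in the stochastic order, hence (since revenue is monotone in availability in the relevant sense) the single-good lower bound still applies to each good's contribution.

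The main obstacle, I expect, is precisely this coupling/monotonicity step: showing that the revenue the multi-good policy collects from good $g$ is at least the single-good guarantee applied to the thinned instance for $g$, despite the fact that the policy's decisions for $g$ are entangled with the states of all other goods. One has to be careful that ``consuming a buyer for good $g'$'' never helps good $g$ and that the events on which good $g$ sells in the multi-good policy are a (stochastically) large enough subset of those in the standalone policy; a clean way is a coupling in which the multi-good process and the standalone single-good process for $g$ are driven by the same Poisson clocks, and an inductive argument on event times shows the inventory of $g$ in the multi-good world is always $\le$ that in the standalone world while the posted price is the same, so every sale in the standalone world that would occur also occurs (or is subsumed) in the multi-good world. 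The remaining steps are bookkeeping: choosing the split probabilities $p_g$ to balance the two losses, invoking the PASTA-based constraint and queueing analysis underlying \Cref{thm:1/2-competitive-single-good-policy} on each thinned instance, and summing the per-good bounds to get $\sum_g \text{rev}_g \ge \tfrac{15}{56}\,\optoff$. I would optimize the constant at the end; the value $\nicefrac{15}{56}$ presumably falls out of balancing a ``$1/2$'' from the single-good prophet inequality against a factor like $\nicefrac{15}{28}$ coming from the worst-case buyer-sharing loss, and I would present the choice of $p_g$ that achieves it rather than rederiving it from scratch.
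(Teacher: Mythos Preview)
Your plan has the right scaffolding (LP solution, per-good single-good analysis, a stochastic-dominance step to handle interaction), but two concrete pieces are off and would prevent you from getting the $\nicefrac{15}{56}$ bound as the paper does.

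\medskip

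\textbf{Stochastic dominance is pointed the wrong way and at the wrong comparison.} You write that in the multi-good world the inventory of good $i$ is always at most that in the standalone world, and hence every sale in the standalone world occurs in the multi-good world. The direction is backwards: when other goods ``steal'' buyers, good $i$ sees \emph{fewer} buyers and therefore has \emph{more} inventory, not less; and less availability would mean \emph{fewer} sales, not more. More importantly, neither direction suffices by itself. The core difficulty is that the events ``buyer reaches good $i$'' and ``good $i$ is available'' are negatively correlated through the states of the other goods. The paper resolves this not by comparing the real process to a standalone single-good process, but by comparing it to a dominated process $\tilde Y$ in which \emph{every} good simultaneously comes first in the ordering (so each good sees all permitted buyers). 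This lower-bounds availability of $i$ and, crucially, makes availability of $i$ independent of the presence of the other goods, so the two events factor as $\bbP[\tilde R_{ij}]\cdot\bbP[\tilde A_i\ge 1]$. The ``reach'' factor is then bounded using presence (not availability) of other goods, which is independent of the policy.

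\medskip

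\textbf{The factorization is not $(1/2)\times\text{splitting factor}$.} The paper introduces an explicit scale-down parameter $\alpha\in[0,1]$ in the per-sale probability, and the bound is a product of three $\alpha$-dependent terms:
\[
s_{ij}\;\ge\;\underbrace{\alpha}_{\text{scale}}\;\cdot\;\underbrace{\bigl(1-\tfrac{\alpha}{2}\bigr)}_{\text{reach }i\text{ in random order}}\;\cdot\;\underbrace{\frac{\bbP_2[\tilde A_i\ge 1]}{w_i}}_{\text{single-good availability at rate }\alpha}\;\cdot\;x^*_{ij}.
\]
The availability term is \emph{not} the $\nicefrac{1}{2}$ from \Cref{thm:1/2-competitive-single-good-policy}; because the effective buyer rate is scaled by $\alpha$, the single-good analysis gives a larger constant (equal to $\nicefrac{4}{7}$ at $\alpha=\nicefrac{3}{4}$). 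Optimizing, $\alpha=\nicefrac{3}{4}$ yields $(3/4)(5/8)(4/7)=\nicefrac{15}{56}$. Your thinning/assignment plan has no analogue of $\alpha$ and hence no handle to trade off the ``reach'' probability against the ``available'' probability; without it you will not match the constant, and the direct thinning of buyers to one good would typically lose more (a buyer assigned to an unavailable good is wasted rather than falling through to the next good in the order).
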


\medskip 

\noindent\textbf{Bounded inventory.}
    Surprisingly, all of our policies retain their approximation guarantees even when sellers have small inventory sizes, and must discard items of goods when more than some (small) number of items of said good are already available. In contrast, in \Cref{cla:upper-bound-approx-ratio-inventory-C} we show that sufficiently small inventory size does, however, limit achievable approximation (of any policy).

\section{Preliminaries}
\label{sec:prelims}

\textbf{Problem statement.}
In the stationary prophet inequality problem, a seller wishes to sell items of $n$ types of goods $\calG$, while (approximately) maximizing the seller's average gain over an infinite time horizon.
Items of good $i \in \calG$ are homogeneous and are supplied according to a Poisson process with rate $\lambda_i \in (0,\infty)$ and perish at an exponential rate $\mu_i \in (0,\infty)$.
An item is \emph{present} if it has been supplied but has not yet perished or been discarded, whereas it is \emph{available} if it is both present and has not yet been sold. An item is discarded on arrival if the number of available items of the same good equals the seller's inventory capacity $C$ (unless otherwise specified, $C\rightarrow \infty$).
Buyers are unit-demand and arrive according to a Poisson process with rate $\gamma > 0$, with their types drawn i.i.d.~from a distribution $\calD$ over a set of $m$ types $\calB$, with type $j \in \calB$ bidding values $\mathbf{v}_j = (v_{ij})_{i \in \calG}$. Thus, buyers of type $j\in \calB$ arrive according to a Poisson process with rate $\gamma_j \triangleq \gamma \cdot \bbP_{\mathbf{v} \sim \calD} \left[\mathbf{v} = \mathbf{v}_j \right]$.
Upon the arrival of a buyer of type $j\in \calB$, the seller must irrevocably decide whether to sell an available item of at most one good $i\in \calG$ to the buyer at their bid price, $v_{ij}$, and the buyer immediately departs after the seller's decision.

An offline policy for an instance $\calI$ of the stationary prophet inequality problem knows in advance the realization of all the randomness of the input, i.e., 
it knows the times at which items of goods are supplied and perish, and the times at which different buyers arrive. 
An online policy, on the other hand, knows $\{\lambda_i,\, \mu_i\}_{i \in \calG}$ and $\{\mathbf{v}_j,\, \gamma_j\}_{j \in \calB}$ a priori, but does not know the realization of future randomness of the input.
An example of online single-good policies are \emph{posted-price policies}, which set a pair $(\bar v, \bar p)$ and accept all bids strictly greater than $\bar v$, accept bids equal to $\bar v$ with probability $\bar p$, and reject all bids strictly less than $\bar v$.
The optimal expected average reward of an unbounded-capacity offline (resp.,~online) policy for instance $\calI$ is denoted by $\optoff(\calI)$ (resp., $\opton(\calI)$). 
We measure online policies' average reward in terms of their approximation of $\optoff(\calI)$ and $\opton(\calI)$.


\subsection{Prior LP benchmarks and a natural algorithm}
\label{sec:prior-LP-benchmarks}


\citet{collina2020dynamic} and \citet{aouad2020dynamic} present the following LP benchmarks, which upper bound the average gain of \emph{any} offline or online policy, respectively.

\begin{lem}[\cite{collina2020dynamic,aouad2020dynamic}]
\label{lem:basic-constraints}
Let $x_{ij}$ be the rate at which an offline policy sells items of good $i \in \calG$ to buyers of type $j \in \calB$. Then $\mathbf{x} = (x_{ij})_{i \in \calG, j \in \calB}$ satisfies the following constraints: 
\begin{align}
    \sum_{j \in \calB} x_{ij} & \leq \lambda_i & \forall i \in \calG \label{eqn:flow-constraint-seller} \\
    \sum_{i \in \calG} x_{ij} & \leq \gamma_j & \forall j \in \calB \label{eqn:flow-constraint-buyer} \\
    x_{ij} & \leq \gamma_j \cdot \frac{\lambda_i}{\mu_i} & \forall i \in \calG, j \in \calB \label{eqn:wine-20-constraint} \\ 
    x_{ij} & \geq 0. \label{eqn:positivity}
\end{align}
If $\mathbf{x}$ is the vector of rates derived by an \emph{online} policy, then $\mathbf{x}$ also satisfies the following constraint:
\begin{align}
    x_{ij} & \leq \gamma_j \cdot \left( \frac{\lambda_i - \sum_{\ell \in \calB} x_{i\ell}}{\mu_i} \right). \label{ec20-constraint}
\end{align}
\end{lem}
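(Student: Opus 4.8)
The plan is to read off each constraint as a conservation (flow-balance) identity for items or for buyers, supplemented by two standard facts from queuing theory. Fix any policy (offline or online), and for a horizon $T$ let $S_{ij}(T)$ be the number of items of good $i\in\calG$ sold to type-$j$ buyers during $[0,T]$, so that $x_{ij}=\lim_{T\to\infty}\tfrac1T\,\E[S_{ij}(T)]$ (replace the limit by a $\limsup$ if it does not exist). Write $N_i(t)$ for the number of \emph{present} items of good $i$ and $A_i(t)\le N_i(t)$ for the number of \emph{available} ones at time $t$. The crucial first observation is that $N_i(\cdot)$ is entirely unaffected by the policy: it is exactly the occupancy process of an $M/M/\infty$ queue with arrival rate $\lambda_i$ and per-customer service rate $\mu_i$, hence a function of the supply and perishing processes only and independent of the buyer arrivals, with $\E[N_i(t)]\le\lambda_i/\mu_i$ (with equality in the stationary regime).

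Constraints \eqref{eqn:positivity}, \eqref{eqn:flow-constraint-seller} and \eqref{eqn:flow-constraint-buyer} are pure bookkeeping. Nonnegativity is immediate. For \eqref{eqn:flow-constraint-seller}, each sale of good $i$ consumes a distinct item of good $i$ that was supplied before the sale, so $\sum_{j\in\calB}S_{ij}(T)$ is at most the number of items of good $i$ supplied in $[0,T]$, which is Poisson with mean $\lambda_i T$; divide by $T$ and let $T\to\infty$. For \eqref{eqn:flow-constraint-buyer}, each buyer buys at most one item in total, so $\sum_{i\in\calG}S_{ij}(T)$ is at most the number of type-$j$ buyers arriving in $[0,T]$, which has mean $\gamma_j T$.

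For \eqref{eqn:wine-20-constraint}, observe that good $i$ can be sold to a type-$j$ buyer only at an arrival instant $t$ with $A_i(t^-)\ge1$, and a fortiori $N_i(t^-)\ge1$. Hence $S_{ij}(T)$ is at most the number of type-$j$ arrivals in $[0,T]$ occurring while $N_i\ge1$. Since the type-$j$ arrival process is Poisson($\gamma_j$) and independent of $N_i(\cdot)$, Campbell's formula together with Markov's inequality and the bound on $\E[N_i(t)]$ give
\[
\E[S_{ij}(T)]\ \le\ \gamma_j\int_0^T\bbP[N_i(t)\ge1]\,dt\ \le\ \gamma_j\int_0^T\E[N_i(t)]\,dt\ \le\ \gamma_j\,\frac{\lambda_i}{\mu_i}\,T ,
\]
and dividing by $T$ yields \eqref{eqn:wine-20-constraint}.

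For the online constraint \eqref{ec20-constraint} we run the same argument with the tighter quantity $A_i$ in place of $N_i$, which now requires a genuine queuing input. An online policy's state carries no information about future buyer arrivals, so PASTA \cite{wolff1982poisson} applies and the long-run rate of type-$j$ arrivals that see $A_i\ge1$ equals $\gamma_j\,\bbP[A_i\ge1]$ in the stationary regime; as before this bounds $x_{ij}\le\gamma_j\,\bbP[A_i\ge1]\le\gamma_j\,\E[A_i]$. It then remains to evaluate $\E[A_i]$ by flow balance on the pool of available items of good $i$: items enter this pool at rate $\lambda_i$ (each supplied item is available on arrival when $C=\infty$, and the rate is only smaller when $C<\infty$), and leave it either by perishing, at total rate $\mu_i\,\E[A_i]$, or by being sold, at rate $\sum_{\ell\in\calB}x_{i\ell}$; equating in- and out-rates gives $\E[A_i]=(\lambda_i-\sum_{\ell}x_{i\ell})/\mu_i$, and substituting proves \eqref{ec20-constraint}. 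This last step genuinely uses that the policy is online: for an offline policy $A_i(\cdot)$ can anticipate future arrivals, PASTA fails, and \eqref{ec20-constraint} need not hold---which is exactly why \eqref{eqn:wine-20-constraint}, obtained from the policy-independent $N_i$, is the strongest statement available offline. I expect the only real obstacle to be the measure-theoretic hygiene around these steps: ensuring the stationary regime is well defined (or else phrasing everything through Cesàro-averaged expectations and the rate-conservation law), and verifying the lack-of-anticipation hypothesis underlying PASTA and the predictability of the integrands to which Campbell's formula is applied.
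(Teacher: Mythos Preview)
Your proof is correct. The paper does not actually prove \Cref{lem:basic-constraints}---it cites \cite{collina2020dynamic,aouad2020dynamic} for the result---so there is no in-paper proof to compare against directly. That said, your argument dovetails with what the paper does for the related \Cref{lem:pasta-constraint}: there the paper uses PASTA and the exact stationary law of $N_i$ (an $M/M/\infty$ queue, as you note) to get $x_{ij}\le\gamma_j\,(1-e^{-\lambda_i/\mu_i})$, which is strictly tighter than \eqref{eqn:wine-20-constraint}; your route via Campbell and Markov's inequality recovers only the weaker bound $x_{ij}\le\gamma_j\,\E[N_i]=\gamma_j\,\lambda_i/\mu_i$, which is all the lemma asks for. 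For \eqref{ec20-constraint}, your PASTA-plus-rate-conservation argument (identifying $\E[A_i]=(\lambda_i-\sum_\ell x_{i\ell})/\mu_i$ by balancing supply against perishing and sales) is the standard derivation, and your observation that PASTA fails for $A_i$ under an offline policy---because $A_i(t)$ may then anticipate future buyer arrivals, violating the lack-of-anticipation hypothesis---correctly isolates why \eqref{ec20-constraint} is online-only.
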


\begin{cor}
\label{cor:RB-geq-OPT}
For all instances $\calI$ of the stationary prophet inequality problem, 
\begin{enumerate}
    \item $\mathrm{RB}_\mathrm{off}(\calI) \triangleq \max\left\{\sum_{i, j} v_{ij} \cdot x_{ij} \mid \mathbf{x} \textrm{ satisfies } \eqref{eqn:flow-constraint-seller}-\eqref{eqn:positivity}\right\}$ satisfies $\mathrm{RB}_\mathrm{off}(\calI) \geq \optoff(\calI)$, and
    \item $\mathrm{RB}_\mathrm{on}(\calI) \triangleq \max\left\{\sum_{i,j} v_{ij} \cdot x_{ij} \mid \mathbf{x} \textrm{ satisfies } \eqref{eqn:flow-constraint-seller}-\eqref{ec20-constraint}\right\}$ satisfies $\mathrm{RB}_\mathrm{on}(\calI) \geq \opton(\calI)$.
    \end{enumerate}
\end{cor}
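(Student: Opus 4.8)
The plan is to view $\mathrm{RB}_\mathrm{off}$ and $\mathrm{RB}_\mathrm{on}$ as linear relaxations of the respective sets of long-run selling-rate vectors realizable by offline and online policies. The reward of a policy is a fixed linear functional ($\sum_{i,j} v_{ij}\,x_{ij}$) of its rate vector $\mathbf{x}$, and \Cref{lem:basic-constraints} asserts that $\mathbf{x}$ always lies in the feasible region of the corresponding program. So the corollary reduces to producing, for a (near-)optimal policy, such a rate vector whose objective value equals that policy's expected average reward.

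Concretely, fix $\eps > 0$ and an offline policy $\pi$ whose expected average reward exceeds $\optoff(\calI) - \eps$; working with a near-optimal policy avoids having to argue that the supremum defining $\optoff$ is attained. For a horizon $T$, let $N_{ij}^{(T)}$ be the number of times $\pi$ sells an item of good $i$ to a buyer of type $j$ during $[0,T]$, and set $x_{ij}^{(T)} \triangleq \E[N_{ij}^{(T)}]/T$. By linearity of expectation over the (countably many) sale events, the expected reward of $\pi$ on $[0,T]$ equals $\sum_{i,j} v_{ij}\,\E[N_{ij}^{(T)}] = T\sum_{i,j} v_{ij}\,x_{ij}^{(T)}$, so $\pi$'s expected average reward is a limit point, as $T \to \infty$, of $\bigl(\sum_{i,j} v_{ij}\,x_{ij}^{(T)}\bigr)$. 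Each $x_{ij}^{(T)}$ is bounded (by $\min\{\gamma_j, \lambda_i\}$, since $\pi$ can sell to type $j$ no more often than type-$j$ buyers arrive and can sell no more units of good $i$ than are supplied), so we may pick a sequence $T_k \to \infty$ along which $\sum_{i,j} v_{ij}\,x_{ij}^{(T_k)}$ converges to $\pi$'s average reward and, after passing to a further subsequence, $x_{ij}^{(T_k)} \to x_{ij}^\ast$ for some vector $\mathbf{x}^\ast$. This $\mathbf{x}^\ast$ is precisely the long-run rate vector of $\pi$ along this time sequence, so by \Cref{lem:basic-constraints} it satisfies \eqref{eqn:flow-constraint-seller}--\eqref{eqn:positivity} (these being non-strict linear inequalities, feasibility passes to the limit). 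Hence $\mathbf{x}^\ast$ is feasible for the program defining $\mathrm{RB}_\mathrm{off}(\calI)$, giving $\mathrm{RB}_\mathrm{off}(\calI) \geq \sum_{i,j} v_{ij}\,x_{ij}^\ast = (\text{avg.\ reward of }\pi) > \optoff(\calI) - \eps$; sending $\eps \to 0$ proves the first claim.

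The second claim is obtained by rerunning the same argument with $\pi$ a near-optimal \emph{online} policy: \Cref{lem:basic-constraints} now additionally certifies that $\mathbf{x}^\ast$ satisfies \eqref{ec20-constraint}, so $\mathbf{x}^\ast$ is feasible for the program defining $\mathrm{RB}_\mathrm{on}(\calI)$ and $\mathrm{RB}_\mathrm{on}(\calI) \geq \opton(\calI)$. I expect the only genuinely delicate point to be the handling of the ``long-run selling rates'': an optimal policy need not be stationary or ergodic, so rather than asserting $\lim_T x_{ij}^{(T)}$ exists I extract a convergent subsequence, using boundedness of the coordinates and closedness of the constraints (so that \Cref{lem:basic-constraints} transfers to the limit point) and relying on the near-optimality reduction to sidestep attainment of the optimum. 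Everything else---the reward-equals-$\sum_{i,j} v_{ij} x_{ij}$ identity and the online case carrying one extra inequality---is bookkeeping.
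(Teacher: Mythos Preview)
Your proposal is correct and follows the same underlying idea as the paper: the corollary is immediate from \Cref{lem:basic-constraints}, since that lemma asserts the selling-rate vector of any (offline or online) policy is feasible for the respective LP, and the policy's average reward is exactly the LP objective evaluated at that vector. The paper does not spell out a proof; your version simply adds technical care (near-optimal policies, subsequential limits, closedness of the feasible region) that the paper leaves implicit in its statement of \Cref{lem:basic-constraints}.
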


\citet{collina2020dynamic} present a simple non-adaptive algorithm for the stationary prophet inequality problem, 
which attempts to (approximately) follow the sale rates prescribed by a solution $\mathbf{x}^*$ to the offline benchmark $\mathrm{RB}_\mathrm{off}$, with additional parameters $\alpha\in [0,1]$ and $w_i$ satisfying $x_{ij}\leq \gamma_j\cdot w_i$. 
Their algorithm, generalized to take as input any $\mathbf{x}^*\in \mathbb{R}^{|\calG|\times |\calB|}$, is given in \Cref{alg:main}.\footnote{\label{collina-implied-bounds}The algorithm of \cite{collina2020dynamic} is \Cref{alg:main} with $\alpha=1$, applied to $\mathbf{x}^*$ an optimal solution to $\mathrm{RB}_\mathrm{off}$, with $w_i = \min\{1,\lambda_i / \mu_i\}$. 
The analysis of \cite{collina2020dynamic} implies that this algorithm is $\nicefrac{1}{3}$-competitive in the single-good setting, while we show that this algorithm is in fact $(1-\nicefrac{1}{(e-1)}) \approx 0.418$-competitive algorithm  (see \Cref{lem:WINE-alg-0.418-competitive}).
}
Intuitively, \Cref{alg:main} attempts to match good $i\in \calG$ and buyer type $j\in \calG$ at a rate close to $x^*_{ij}$. The role of high $\alpha$ is to increase the probability of a sale of good $i$ to buyer type $j$, conditioned on an item of good $i$ being available, while the role of a lower $\alpha$ is to increase the probability of items being available. The crux of the analysis is in bounding this probability, for which we rely on machinery from the queuing theory literature, described below.


\begin{algorithm}
    \begin{algorithmic}[1]
        \For{arrival of buyer of type $j \in \calB$}
            \For{each good $i \in \calG$ in a uniform random order} \label{line:j-reaches-i}
    		    \If{buyer unmatched and at least one item of good $i$ is available}
    		        \State sell with probability $p_
    		        {ij} \triangleq \alpha \cdot \frac{x_{ij}^*}{\gamma_j \cdot w_i}$ \label{line:i-permits-sale}
		        \EndIf
		    \EndFor
	    \EndFor
	\end{algorithmic}
    \caption{}
    \label{alg:main}
\end{algorithm}

\subsection{Queuing theory background}
Throughout, we will want to bound the probability of there being an available (i.e., present and unsold) item of good $i \in \calG$.
We denote this probability by $\bbP_C \left[A_i \geq 1 \right]$, where $A_i$ denotes the number of available items of good $i$.

\begin{restatable}{lem}{PrAvailableExact}
\label{lem:Pr-available-exact}
For any online policy with inventory capacity $C\in \Z_{>0}$ and which sells any available item of good $i \in \calG$ to buyers which arrive at rate $\gamma^*$, the stationary probability of an item of good $i\in \calG$ being available satisfies
\[ \bbP_C \left[ A_i \geq 1 \right] = 1 - \left(1 + \sum_{q=1}^C \prod_{r=1}^q \frac{\lambda_i}{r \cdot \mu_i + \gamma^*} \right)^{-1}. \]
\end{restatable}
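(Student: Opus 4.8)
The plan is to model the number of available items of good $i$ as a continuous-time birth-death Markov chain and read off the stationary distribution in product form. Fix the good $i$ and write $\lambda=\lambda_i$, $\mu=\mu_i$, and let $\gamma^*$ be the rate of buyers eligible to purchase good $i$ under the given policy. Since items of good $i$ arrive as a Poisson process of rate $\lambda$, and the policy sells any available item whenever an eligible buyer arrives, the state $A_i \in \{0,1,\dots,C\}$ evolves as follows: from state $q$ with $0 \le q < C$, the birth rate (an item arrives and is not discarded) is $\lambda$; from state $q$ with $1 \le q \le C$, the death rate is $q\mu + \gamma^*$, since each of the $q$ present-and-unsold items perishes independently at rate $\mu$ (total $q\mu$), and additionally an eligible buyer arrives at rate $\gamma^*$ and removes one item by sale. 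In state $C$ incoming items are discarded, so there is no birth; in state $0$ there is nothing to perish or sell, so no death. This is exactly a finite birth-death chain with birth rates $b_q = \lambda$ for $q=0,\dots,C-1$ and death rates $d_q = q\mu + \gamma^*$ for $q=1,\dots,C$.

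Next I would invoke the standard detailed-balance / product-form solution for birth-death chains: the stationary probabilities satisfy $\pi_q = \pi_0 \prod_{r=1}^{q} \frac{b_{r-1}}{d_r} = \pi_0 \prod_{r=1}^{q} \frac{\lambda}{r\mu + \gamma^*}$ for $q=0,1,\dots,C$, and $\pi_0$ is fixed by normalization, $\pi_0 = \left(1 + \sum_{q=1}^{C} \prod_{r=1}^{q} \frac{\lambda}{r\mu + \gamma^*}\right)^{-1}$. Then $\bbP_C[A_i \ge 1] = 1 - \pi_0$, which is precisely the claimed formula. One should also note the chain is irreducible on the finite state space $\{0,\dots,C\}$ (all $b_q,d_q>0$ in their stated ranges since $\lambda,\mu>0$ and $\gamma^*\ge 0$), hence a unique stationary distribution exists and the long-run availability probability equals $1-\pi_0$.

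The one genuine subtlety — and the step I would be most careful about — is justifying that $A_i$ is actually Markovian and that the death rate is cleanly $q\mu + \gamma^*$ under an \emph{arbitrary} policy satisfying the hypothesis, rather than only under a specific one. The key observations are: (a) perishing of an available item depends only on the (exponential, memoryless) residual lifetime of that item, and since the $q$ available items' residual lifetimes are i.i.d.\ $\mathrm{Exp}(\mu)$, the total perishing rate from state $q$ is $q\mu$ regardless of history; (b) the hypothesis "sells any available item of good $i$ to buyers arriving at rate $\gamma^*$" means that conditioned on being in a state with $A_i \ge 1$, the next eligible-buyer arrival (rate $\gamma^*$, independent of the item processes by the independence of the driving Poisson processes) deterministically decrements $A_i$; and (c) the arrival process of new items is an independent Poisson process of rate $\lambda$. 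Together these make $(A_i(t))$ a time-homogeneous birth-death process with the stated rates, so the transition rates do not depend on how the policy behaves on \emph{other} goods or on the identity of which physical item is sold. I would state (a)--(c) explicitly and then let the birth-death machinery finish the proof.
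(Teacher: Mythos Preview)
Your proposal is correct and follows essentially the same approach as the paper: both model $A_i$ as a finite birth--death chain on $\{0,\dots,C\}$ with birth rate $\lambda_i$ and death rate $q\mu_i+\gamma^*$ from state $q$, and then read off $\pi_0$ from the standard product-form stationary distribution (the paper simply cites a queueing reference for this last step). Your explicit justification of the Markov property via points (a)--(c) is more detailed than the paper's proof, but the argument is the same.
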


As a corollary of the previous lemma, we have the following.

\begin{restatable}{cor}{PrAvailableBounds}
\label{cor:Pr-available-lower-upper-bounds}
For any online policy with inventory capacity $C\in \Z_{>0}$ and which sells any available item of good $i \in \calG$ to buyers which arrive at rate $\gamma^*$, the stationary probability of an item of good $i\in \calG$ being available satisfies
\[ \bbP_C \left[ A_i \geq 1 \right] \in \left[1 - \left(\sum_{q=0}^C \frac{1}{q!} \left( \frac{\lambda_i}{ \mu_i + \gamma^*} \right)^q \right)^{-1},\, 1 - \left(\sum_{q=0}^C \frac{1}{q!} \left( \frac{\lambda_i}{\mu_i} \right)^q \right)^{-1} \right]. \]
\end{restatable}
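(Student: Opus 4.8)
The plan is to derive \Cref{cor:Pr-available-lower-upper-bounds} directly from the exact formula in \Cref{lem:Pr-available-exact} by bounding the inner product term by term. Writing $\rho = \lambda_i/\mu_i$ and abbreviating the exact expression as $\bbP_C[A_i \geq 1] = 1 - (1 + S_C)^{-1}$ where $S_C = \sum_{q=1}^C \prod_{r=1}^q \frac{\lambda_i}{r\mu_i + \gamma^*}$, the key observation is that the map $S \mapsto 1 - (1+S)^{-1}$ is monotone increasing in $S$. Hence a lower bound on $\bbP_C[A_i \geq 1]$ follows from a lower bound on $S_C$, and an upper bound from an upper bound on $S_C$.

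For the \emph{upper} bound on $\bbP_C[A_i \geq 1]$, I would bound $S_C$ from above. Since $r\mu_i + \gamma^* \geq r\mu_i$ for each $r$ (using $\gamma^* \geq 0$), we get $\prod_{r=1}^q \frac{\lambda_i}{r\mu_i + \gamma^*} \leq \prod_{r=1}^q \frac{\lambda_i}{r\mu_i} = \frac{1}{q!}\left(\frac{\lambda_i}{\mu_i}\right)^q$, so $1 + S_C \leq \sum_{q=0}^C \frac{1}{q!}(\lambda_i/\mu_i)^q$ and therefore $\bbP_C[A_i \geq 1] \leq 1 - \left(\sum_{q=0}^C \frac{1}{q!}(\lambda_i/\mu_i)^q\right)^{-1}$. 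For the \emph{lower} bound, I want to bound $S_C$ from below; the natural move is to replace $r\mu_i + \gamma^*$ by something at least as large for every $r \leq q$. The crude uniform bound $r\mu_i + \gamma^* \leq r(\mu_i + \gamma^*)$ holds for all $r \geq 1$, giving $\prod_{r=1}^q \frac{\lambda_i}{r\mu_i + \gamma^*} \geq \prod_{r=1}^q \frac{\lambda_i}{r(\mu_i+\gamma^*)} = \frac{1}{q!}\left(\frac{\lambda_i}{\mu_i+\gamma^*}\right)^q$, which yields $1 + S_C \geq \sum_{q=0}^C \frac{1}{q!}(\lambda_i/(\mu_i+\gamma^*))^q$ and hence the stated lower bound on $\bbP_C[A_i \geq 1]$.

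The only mild subtlety — and the step I would flag as the main thing to get right rather than the main obstacle — is the direction of the inequality $r\mu_i + \gamma^* \leq r(\mu_i + \gamma^*)$: it is equivalent to $\gamma^* \leq r\gamma^*$, which holds precisely because $r \geq 1$ and $\gamma^* > 0$, so it is valid for each factor in the product and the product bound follows by multiplying nonnegative terms. One should also note that the $q=0$ term contributes the ``$1 +$'' in both bounds (empty product equals $1$). With both bounds on $1 + S_C$ in hand, applying the monotone map $S \mapsto 1 - (1+S)^{-1}$ and sandwiching gives exactly the claimed interval, completing the proof.

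\begin{proof}[Proof of \Cref{cor:Pr-available-lower-upper-bounds}]
By \Cref{lem:Pr-available-exact}, $\bbP_C[A_i \geq 1] = 1 - (1 + S_C)^{-1}$ where $S_C \triangleq \sum_{q=1}^C \prod_{r=1}^q \frac{\lambda_i}{r\mu_i + \gamma^*}$, and $S \mapsto 1 - (1+S)^{-1}$ is increasing. For each $q$ and each $1 \leq r \leq q$ we have $r\mu_i \leq r\mu_i + \gamma^* \leq r(\mu_i + \gamma^*)$, the latter since $\gamma^* \leq r\gamma^*$ as $r \geq 1$ and $\gamma^* > 0$. Multiplying these bounds over $r = 1,\dots,q$ (all terms positive),
\[ \frac{1}{q!}\left(\frac{\lambda_i}{\mu_i + \gamma^*}\right)^q \;\leq\; \prod_{r=1}^q \frac{\lambda_i}{r\mu_i + \gamma^*} \;\leq\; \frac{1}{q!}\left(\frac{\lambda_i}{\mu_i}\right)^q. \]
Summing over $q = 1,\dots,C$ and adding $1$ (the $q=0$ term, an empty product),
\[ \sum_{q=0}^C \frac{1}{q!}\left(\frac{\lambda_i}{\mu_i + \gamma^*}\right)^q \;\leq\; 1 + S_C \;\leq\; \sum_{q=0}^C \frac{1}{q!}\left(\frac{\lambda_i}{\mu_i}\right)^q. \]
Applying the increasing map $S \mapsto 1 - (1+S)^{-1}$ to $S_C$ and using these bounds gives
\[ 1 - \left(\sum_{q=0}^C \frac{1}{q!}\left(\frac{\lambda_i}{\mu_i+\gamma^*}\right)^q\right)^{-1} \;\leq\; \bbP_C[A_i \geq 1] \;\leq\; 1 - \left(\sum_{q=0}^C \frac{1}{q!}\left(\frac{\lambda_i}{\mu_i}\right)^q\right)^{-1}, \]
which is the claim.
\end{proof}
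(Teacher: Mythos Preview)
Your proof is correct and takes essentially the same approach as the paper: the paper's proof is a single sentence invoking \Cref{lem:Pr-available-exact} together with the inequality $r\mu_i \leq r\mu_i + \gamma^* \leq r(\mu_i + \gamma^*)$ for all $r \geq 1$, and you have simply spelled out the details of that argument.
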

See \Cref{appendix:prelims} for the proofs of \Cref{lem:Pr-available-exact} and \Cref{cor:Pr-available-lower-upper-bounds}.
Notice that as $C$ approaches infinity (i.e., the seller has unbounded inventory capacity), the lower and upper bounds on $\bbP_C \left[A_i \geq 1 \right]$ approach $1 - \exp\left(-\lambda_i / (\mu_i + \gamma^*) \right)$ and $1 - \exp(-\lambda_i / \mu_i)$, respectively.

Finally, we recall the following fundamental PASTA property, due to \citet{wolff1982poisson}.
\begin{lem}[PASTA \cite{wolff1982poisson}]\label{pasta}
The fraction of Poisson arrivals who observe a stochastic process in a state is equal to the fraction of time the stochastic process is in this state, provided that the Poisson arrivals and the history of the stochastic process are independent.
\end{lem}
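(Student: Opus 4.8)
The statement is the classical PASTA theorem, so the plan is to recall its standard proof via the compensator (martingale) method rather than treat it as a black box. Formalize the setup as follows: let $\{X(t)\}_{t \ge 0}$ be a right-continuous, left-limited stochastic process (in our applications a positive-recurrent continuous-time Markov chain, e.g.\ the number of present or of available items of a good), let $B$ be a fixed measurable set of states, and let $N(t)$ be a Poisson process of rate $\lambda>0$ recording arrival epochs. The hypothesis that ``the Poisson arrivals and the history of the process are independent'' is made precise as the \emph{lack-of-anticipation assumption} (LAA): with respect to the joint filtration $\mathcal{F}_t = \sigma\big(X(s),N(s) : s \le t\big)$, the future increments $\{N(t+u)-N(t) : u \ge 0\}$ are independent of $\mathcal{F}_t$ for every $t$. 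Define the time-average $V(t) = \tfrac1t \int_0^t \mathbbm{1}[X(s)\in B]\,ds$ and the arrival-average $Z(t) = \tfrac{1}{N(t)}\int_{(0,t]} \mathbbm{1}[X(s^-)\in B]\,dN(s)$; the goal is to show $\lim_{t\to\infty} Z(t) = \lim_{t\to\infty} V(t)$ whenever the latter limit exists.

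The core step is to show that
\[ M(t) \;\triangleq\; \int_{(0,t]} \mathbbm{1}[X(s^-)\in B]\,dN(s) \;-\; \lambda \int_0^t \mathbbm{1}[X(s)\in B]\,ds \]
is a mean-zero $\mathcal{F}_t$-martingale. By LAA, on a short interval $[t,t+h]$ the conditional expected number of arrivals given $\mathcal{F}_t$ is $\lambda h + o(h)$ irrespective of the trajectory of $X$ up to $t$, while $s\mapsto\mathbbm{1}[X(s^-)\in B]$ is left-continuous, hence predictable; therefore the counting process of ``arrivals that see $B$'' has predictable compensator $\lambda\int_0^t\mathbbm{1}[X(s)\in B]\,ds$, which is exactly the claim. (Equivalently, and more elementarily: conditioning on $N(t)=k$ with ordered epochs $s_1<\dots<s_k$, these are distributed as order statistics of uniforms on $[0,t]$ and, by LAA, independent of $X$, so $\E\big[\int_{(0,t]}\mathbbm{1}[X(s^-)\in B]\,dN(s)\big] = \lambda\int_0^t\bbP[X(s)\in B]\,ds$, and the same computation on increments yields the martingale property.)

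From here I would apply the strong law for martingales with bounded jumps: the jumps of $M$ have size $1$ and $\langle M\rangle_t = \lambda\int_0^t\mathbbm{1}[X(s)\in B]\,ds \le \lambda t$, so $M(t)/t \to 0$ almost surely. Combining this with the ordinary strong law $N(t)/t \to \lambda$ for the Poisson process gives
\[ Z(t) \;=\; \frac{M(t)/t + \lambda V(t)}{N(t)/t} \;\xrightarrow{\;t\to\infty\;}\; \lim_{t\to\infty} V(t), \]
using $\lambda>0$ and the fact that $X(s)=X(s^-)$ for all but countably many $s$. In our applications $X$ is an ergodic Markov chain, so $\lim_t V(t)$ equals the stationary probability of $B$ and all hypotheses hold.

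The one genuinely delicate point is the compensator/martingale claim at the stated generality: since $X$ is not assumed Markov, one must phrase independence of the arrival stream from the \emph{entire} past correctly, and must integrate the \emph{left limit} $\mathbbm{1}[X(s^-)\in B]$ against $dN(s)$ so that an arrival is never counted as co-occurring with a jump of $X$ that it itself triggered; this predictability is what makes $M$ a true martingale rather than merely mean-zero in expectation. The remaining ingredients---Fubini for the expectation identity, the martingale strong law, and $N(t)/t\to\lambda$---are standard, and for the Markov chains arising in this paper (finite or countable state, bounded rates, positive recurrent) the required path-wise limits exist automatically.
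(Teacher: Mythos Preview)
The paper does not prove this lemma at all: it is stated as a cited background result from \cite{wolff1982poisson} and used as a black box throughout. Your proposal, by contrast, supplies an actual proof---the standard compensator/martingale argument (predictability of $s\mapsto\mathbbm{1}[X(s^-)\in B]$ under the lack-of-anticipation assumption, the martingale strong law giving $M(t)/t\to 0$, and $N(t)/t\to\lambda$). The argument is correct and is essentially Wolff's original proof recast in modern martingale language; the only caveat is that the lemma as phrased in the paper is informal (``fraction of arrivals,'' ``fraction of time''), so you are right to make the LAA and the use of left limits explicit, since those are exactly the hypotheses that can fail silently in applications where arrivals influence the process. For the purposes of this paper nothing more is needed: every invocation of PASTA here is to an ergodic birth--death chain with an exogenous Poisson arrival stream, where all your side conditions are trivially met.
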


In our analysis of the multi-good problem we will need to prove stochastic dominance between two processes, for which the following lemma will prove useful. For this lemma, we recall that a set $S\subseteq \calY \subseteq \mathbb{R}^n$ is or upward closed if for every $y\geq \tilde y$ with $\tilde y \in S$ and $y\in \calY$, we have that $y \in S$.

\begin{lem}[\cite{brandt1994pathwise}]
\label{lem:stochastic-dominance}
Let $Y, \tilde Y$ be two stochastic processes taking values in $\calY\subseteq \mathbb{R}^n$, with time-homogeneous intensity matrices $Q, \tilde Q$.
Then, $Y$ stochastically dominates $\tilde Y$ ($\Pr[Y\geq y]\geq \Pr[ \tilde Y\geq y]$ for all $y\in \calY$) if and only if the following holds: for every $y, \tilde y \in \calY$ and upward closed set $S\subseteq \calY$, if $y\geq \tilde y$, and either $y, \tilde y \in S$ or $y, \tilde y \notin S$, then
\[ \sum_{z \in S} Q(y, z) \geq \sum_{z \in S} \tilde Q (\tilde y, z).\]
\end{lem}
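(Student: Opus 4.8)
This is a generator-level characterization of the (strong) stochastic order between two continuous-time Markov jump processes, so I would prove the two implications separately: sufficiency of the generator condition via an explicit \emph{monotone coupling}, and its necessity via an infinitesimal argument. I read ``$Y$ stochastically dominates $\tilde Y$'' in the standard way: for comparable initial states there is a coupling with $Y(t)\ge\tilde Y(t)$ for all $t$, equivalently $\Pr[Y(t)\in S]\ge\Pr[\tilde Y(t)\in S]$ for every upward-closed $S\subseteq\calY$ and all $t$ (the displayed coordinate-wise form being the case $S=\{z:z\ge y\}$). The one external ingredient is Strassen's theorem: finite measures $\mu,\nu$ on a poset with $\mu(\calY)=\nu(\calY)$ admit a coupling supported on $\{(a,b):a\ge b\}$ if and only if $\mu(U)\ge\nu(U)$ for every upward-closed $U$.

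\emph{Sufficiency.} Assume first that all jump rates are bounded by some $\Lambda$ (the unbounded case I would recover by truncating the state space to finite monotone sub-posets, applying the bounded case, and passing to the limit using non-explosiveness of the chains). I would build a Markov chain $(Y(t),\tilde Y(t))$ on $D:=\{(y,\tilde y)\in\calY^2:y\ge\tilde y\}$ as follows. At a pair $(y,\tilde y)\in D$, write $q(y)=-Q(y,y)$ and form the ``lazified'' jump measures $\mu$ = (restriction of $Q(y,\cdot)$ to $\calY\setminus\{y\}$) $+\,(\Lambda-q(y))\delta_y$ and $\nu$ = (restriction of $\tilde Q(\tilde y,\cdot)$ to $\calY\setminus\{\tilde y\}$) $+\,(\Lambda-\tilde q(\tilde y))\delta_{\tilde y}$, each of total mass $\Lambda$. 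I then check Strassen's condition $\mu(U)\ge\nu(U)$ for every upward-closed $U$: when $y,\tilde y$ are both in or both out of $U$ the $\delta_y,\delta_{\tilde y}$ contributions cancel (resp.\ vanish) and the inequality reduces exactly to the hypothesis $\sum_{z\in U}Q(y,z)\ge\sum_{z\in U}\tilde Q(\tilde y,z)$ of \Cref{lem:stochastic-dominance}; the case $y\notin U,\ \tilde y\in U$ is impossible since $y\ge\tilde y$ and $U$ is upward-closed; and in the remaining ``mixed'' case $y\in U,\ \tilde y\notin U$ we have $\mu(U)\ge\Lambda-q(y)$ and $\nu(U)\le\tilde q(\tilde y)$, so the inequality holds once $\Lambda\ge q(y)+\tilde q(\tilde y)$ — which is why the laziness level is taken large. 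Strassen now yields a coupling measure on $\{(z,\tilde z):z\ge\tilde z\}$ with marginals $\mu$ and $\nu$; reading its off-diagonal part as the rates of a generator $\bar Q$ on $D$ (with diagonal chosen so rows vanish) gives the coupled chain. Summing out a coordinate, the transition rates of the $Y$-component depend only on $y$ and equal $Q(y,\cdot)$ (the standard uniformization/lumping bookkeeping), so that component is Markov with generator $Q$, and likewise for $\tilde Y$; since every $\bar Q$-jump stays in $D$, we get $Y(t)\ge\tilde Y(t)$ for all $t$ whenever $Y(0)\ge\tilde Y(0)$, i.e.\ stochastic dominance.

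\emph{Necessity and main obstacle.} Conversely, suppose $Y$ stochastically dominates $\tilde Y$, fix $y\ge\tilde y$ and an upward-closed $S$ with $y,\tilde y$ on the same side of $S$, and start the two processes from $y$ and $\tilde y$. A first-order expansion of the transition function gives, for small $t$,
\[
\Pr_y[Y(t)\in S]=\mathbf{1}\{y\in S\}+t\sum_{z\in S}Q(y,z)+o(t),
\]
where the sum includes the negative diagonal term $Q(y,y)$ when $y\in S$, and likewise for $\tilde Y$ started at $\tilde y$. Since $y,\tilde y$ lie on the same side of $S$ the zeroth-order terms coincide, while dominance gives $\Pr_y[Y(t)\in S]\ge\Pr_{\tilde y}[\tilde Y(t)\in S]$ for all $t>0$; dividing by $t$ and letting $t\downarrow0$ yields $\sum_{z\in S}Q(y,z)\ge\sum_{z\in S}\tilde Q(\tilde y,z)$ (had $y,\tilde y$ sat on opposite sides of $S$, the zeroth-order terms would differ and no comparison would follow, explaining that restriction). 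The necessity direction is thus routine; I expect the real work to be in sufficiency, namely verifying that Strassen's hypothesis for the lazified measures collapses precisely to the stated generator inequality (the small case analysis above, plus choosing $\Lambda$ large) and making the coupled-chain construction rigorous for unbounded rates via a truncation-and-limit argument together with non-explosiveness.
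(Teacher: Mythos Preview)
The paper does not prove this lemma; it is simply cited from \cite{brandt1994pathwise} and used as a black box in the proof of \Cref{cla:Y-dominates-Y-tilde}. So there is no ``paper's own proof'' to compare against.

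On its own merits, your sketch is correct and is precisely the standard route to this generator characterization of the strong stochastic order: sufficiency via uniformization plus Strassen's theorem to build a Markovian monotone coupling, and necessity by differentiating the transition semigroup at $t=0$. Your case analysis for the Strassen hypothesis is right, including the observation that the ``same side of $S$'' restriction is exactly what makes the zeroth-order terms cancel in the necessity direction. One tiny bookkeeping point: in the mixed case $y\in U,\ \tilde y\notin U$ you need $\Lambda\ge q(y)+\tilde q(\tilde y)$, so $\Lambda$ should be chosen to be at least twice the supremum of the individual exit rates, not merely an upper bound on each; this is of course harmless. Finally, the unbounded-rate caveat you flag is moot for the paper's application, since there the lemma is invoked only for chains on the finite state space $\calY_C$.
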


\section{Tighter LP benchmarks}\label{sec:tighter-lps}

In this section, we present our tighter LP benchmarks for the optimal offline and online policies.
We start by noting that tighter benchmarks are needed than those considered in \cite{collina2020dynamic,aouad2020dynamic} in order to improve on prior approximations.
We then propose a new constraint which in fact tightens these benchmarks and enables us to obtain improved bounds in \Cref{sec:better-bounds}. 

\paragraph{Limitations of prior LP benchmarks.}
\label{sec:limitations-prior-LPs}
The algorithms of \citet{collina2020dynamic} and \citet{aouad2020dynamic} yield $(1-\nicefrac{1}{(e-1)})$-competitive and $(1 - \nicefrac{1}{e})$-approximate online policies in the single-good problem by relying on LP benchmarks $\mathrm{RB}_\mathrm{off}$ and $\mathrm{RB}_\mathrm{on}$.
Unfortunately, 
improving on the bounds implied by \cite{collina2020dynamic,aouad2020dynamic} for the stationary prophet inequality problem requires tighter benchmarks, since
these bounds are tight versus these LP benchmarks. (See proofs in \Cref{appendix:tighter-lps}.)

\begin{restatable}{obs}{RBoffgap}
\label{obs:RBoffgap}
There exist instances $\calI$ of the single-good stationary prophet inequality problem for which $\optoff(\calI) \leq (1-\nicefrac{1}{(e-1)})\cdot \mathrm{RB}_\mathrm{off}(\calI)$.
\end{restatable}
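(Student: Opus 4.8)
The plan is to exhibit a single-good instance where the LP benchmark $\mathrm{RB}_\mathrm{off}$ is forced to be large (by scaling) while any offline policy's average revenue is small, with the ratio approaching $1-\nicefrac{1}{(e-1)}$. The natural candidate is an instance with a single good and a single buyer type, so that $m = n = 1$; then the only decision an offline policy makes is whether to sell the (single) available item to an arriving buyer. I would set $v_{11} = 1$, and choose $\lambda_1, \mu_1, \gamma_1$ so that the constraints \eqref{eqn:flow-constraint-seller}--\eqref{eqn:positivity} in \Cref{lem:basic-constraints} are all ``active'' in a way that makes $\mathrm{RB}_\mathrm{off}$ equal to $\min\{\lambda_1, \gamma_1, \gamma_1 \lambda_1/\mu_1\}$ — in particular by taking $\lambda_1$ and $\gamma_1$ large relative to $\mu_1$ so that the binding constraint is $x_{11} \le \gamma_1$, giving $\mathrm{RB}_\mathrm{off}(\calI) = \gamma_1$ (or $= \lambda_1$ in a symmetric choice). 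The point is that the LP ``sees'' the total buyer arrival rate but is blind to the fact that at most one item can be present at a time before it is snapped up.

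The core of the argument is an \emph{upper bound on $\optoff(\calI)$}. Since the good perishes at rate $\mu_1$ and a sale can only occur when an item is present, the offline optimum revenue rate is bounded by the rate at which items are ``consumed'' — each supplied item is either sold (at value $1$) or perishes, so the sale rate is at most $\lambda_1$; but more importantly, I would argue via a queuing/renewal argument that even an offline policy (which knows the future) cannot sell faster than roughly the rate at which an $M/M/1/1$-type system cycles. Concretely: between consecutive sales, the system must wait for a fresh item to arrive (after the previous one was sold/perished) and then for a buyer to arrive before it perishes. Using the exact availability formula of \Cref{lem:Pr-available-exact} with $C = 1$ — or a direct computation of the stationary throughput of the relevant birth-death chain — the best achievable sale rate is bounded by a quantity that, after optimizing the ratio $\lambda_1/\mu_1 \to \infty$ and $\gamma_1/\mu_1 \to \infty$ in the right proportion, yields $\optoff(\calI) \le (1-\nicefrac{1}{(e-1)})\,\gamma_1 = (1-\nicefrac{1}{(e-1)})\,\mathrm{RB}_\mathrm{off}(\calI)$. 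The constant $e$ appears precisely the way it does in the analysis of \Cref{alg:main}: the factor $1 - e^{-\lambda/\mu}$ governing $\bbP[A\ge 1]$, combined with the symmetric factor on the buyer side, and the fact that an offline policy, though it can time sales cleverly, still faces the hard constraint that an item in stock is wasted if no buyer comes before it perishes.

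I expect the main obstacle to be the $\optoff$ upper bound, since an offline policy is powerful: it can choose which buyer to serve with foresight and can deliberately skip buyers. I would handle this by a charging/coupling argument: partition time into the (renewal) intervals delimited by item-perish events or sales, observe that on each such interval at most one unit of revenue is collected, and bound the expected density of such intervals using memorylessness of the Poisson/exponential clocks — crucially, foresight does not change the distribution of inter-arrival gaps, so the offline policy's throughput is still governed by the same birth-death dynamics, only with the freedom to decline a sale, which cannot help when there is a single buyer type of fixed value. After establishing the per-instance bound, I would take the limit along a one-parameter family (e.g. $\lambda_1 = \mu_1 \cdot t$, $\gamma_1 = \mu_1 \cdot t$, $t \to \infty$, or whatever proportion optimizes the expression) to push the ratio to exactly $1-\nicefrac{1}{(e-1)}$, and since \Cref{obs:RBoffgap} only asserts existence of instances achieving (or beating) this ratio, a limiting/approaching argument — or a single well-chosen finite instance if the extremum is attained — suffices.
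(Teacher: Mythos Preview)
Your high-level idea---a single-good, single-buyer-type instance where the policy has no meaningful decisions, so throughput is governed by a birth--death chain---is exactly right and matches the paper's approach. But the execution diverges in ways that would prevent you from reaching the constant $1-\nicefrac{1}{(e-1)}$.

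First, the parameter regime. You propose taking $\lambda,\gamma \to \infty$ relative to $\mu$ so that the binding LP constraint is $x \le \gamma$. But in that regime items are almost always present (the $M/M/\infty$ presence process has mean $\lambda/\mu \to \infty$), so $\bbP[A\ge 1]\to 1$ and the sale rate approaches $\gamma$; the ratio $\optoff/\mathrm{RB}_\mathrm{off}$ then tends to $1$, not to $1-\nicefrac{1}{(e-1)}$. The paper instead takes the single fixed instance $\calI_1$ with $\lambda=\mu=\gamma=1$, $v=1$. All three constraints in $\mathrm{RB}_\mathrm{off}$ coincide at $x\le 1$, so $\mathrm{RB}_\mathrm{off}(\calI_1)=1$ with no limiting argument needed.

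Second, the source of the constant. You point to $1-e^{-\lambda/\mu}$ (the \emph{presence} probability) and an unspecified ``symmetric factor on the buyer side''; that line of reasoning would at best produce $1-\nicefrac{1}{e}$. The paper's computation is different: with a single buyer type of fixed value, the optimal policy sells whenever a buyer arrives and an item is available, so by PASTA the sale rate equals $\gamma\cdot\bbP[A\ge 1]$, where the availability probability is read off from the full birth--death chain of \Cref{lem:Pr-available-exact} with $\gamma^*=\gamma=1$ and \emph{unbounded} capacity $C\to\infty$---not $C=1$ as you suggest. Concretely,
\[
\bbP[A\ge 1] \;=\; 1-\Big(1+\sum_{q\ge 1}\prod_{r=1}^q\frac{1}{r+1}\Big)^{-1}
\;=\; 1-\Big(1+\sum_{q\ge 1}\frac{1}{(q+1)!}\Big)^{-1}
\;=\; 1-\frac{1}{e-1},
\]
since $\sum_{q\ge 0}\frac{1}{(q+1)!}=e-1$. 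Hence $\optoff(\calI_1)\le 1-\nicefrac{1}{(e-1)}$ while $\mathrm{RB}_\mathrm{off}(\calI_1)=1$.

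So the gap in your plan is not conceptual but in the specifics: the right instance is the balanced one $\lambda=\mu=\gamma$, the right capacity is $C=\infty$, and the constant $e-1$ comes from the exact stationary distribution of the chain with the buyer process contributing an additive death rate $\gamma^*$---not from the $1-e^{-\lambda/\mu}$ presence bound or from an $M/M/1/1$ cycle calculation.
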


\begin{restatable}{obs}{RBongap}
\label{obs:RBongap}
There exist instances $\calI$ of the single-good stationary prophet inequality problem for which $\opton(\calI) \leq \left(1-\nicefrac{1}{e}\right)\cdot \mathrm{RB}_\mathrm{on}(\calI)$.
\end{restatable}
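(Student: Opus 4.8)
The plan is to construct a family of single-good instances on which the online benchmark $\mathrm{RB}_\mathrm{on}$ overestimates $\opton$ by a factor approaching $1-\nicefrac{1}{e}$, and the natural guess is to drive $\lambda/\mu \to \infty$ so that supply is effectively unbounded while buyer arrivals are the bottleneck. Concretely, I would take a single good with perish rate $\mu \to 0$ (equivalently $\lambda/\mu \to \infty$), a single buyer type with value $v = 1$ and arrival rate $\gamma$, and examine what each side gives. For the benchmark: with one good and one buyer type, constraints~\eqref{eqn:flow-constraint-seller}, \eqref{eqn:flow-constraint-buyer}, \eqref{eqn:wine-20-constraint} and especially the online constraint~\eqref{ec20-constraint}, $x \le \gamma\cdot(\lambda - x)/\mu$, all become slack as $\lambda/\mu\to\infty$, so $\mathrm{RB}_\mathrm{on} \to \gamma\cdot v = \gamma$ (the buyer-side flow constraint is the only binding one). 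Thus the benchmark believes we can sell to essentially every arriving buyer.

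For the true online optimum, the point is that when $\mu\to 0$ an online policy that wants to sell to a buyer needs an \emph{available} (present, unsold) item at that moment, and items only arrive at rate $\lambda$; but if $\lambda$ is also scaled so that $\lambda/\gamma$ is a fixed constant, the system behaves like a single queue where items arrive at rate $\lambda$ and are consumed at rate $\gamma$ whenever available. The cleanest regime is $\lambda = \gamma$ (arrival rate of items equals arrival rate of buyers) with $\mu\to 0$: here any policy selling to every buyer it can is an $M/M/1$-type birth-death chain on the number of available items, with birth rate $\lambda$ and death rate $\gamma$ (when $\ge 1$ item present). By \Cref{lem:Pr-available-exact} with $C\to\infty$ and $\gamma^* = \gamma$, $\bbP[A\ge 1] \to 1 - \exp(-\lambda/(\mu+\gamma)) \to 1 - \exp(-1) = 1 - \nicefrac{1}{e}$ as $\mu\to 0$, so the sale rate is $\gamma\cdot(1-\nicefrac{1}{e})$. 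I would then argue this greedy "sell whenever possible" policy is in fact optimal among online policies in this degenerate single-good, single-value instance (with a homogeneous value there is never a reason to decline a sale), so $\opton \to \gamma\cdot(1-\nicefrac{1}{e})$, while $\mathrm{RB}_\mathrm{on}\to\gamma$; taking a limit over instances gives the claimed ratio.

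The main obstacle is twofold. First, making the limiting statements rigorous: $\mathrm{RB}_\mathrm{on}$ and $\opton$ should be compared on a single fixed instance, so rather than a genuine limit I would exhibit, for each $\eps>0$, one instance (finite $\lambda,\mu,\gamma$ with $\lambda/\mu$ large, $\lambda=\gamma$) for which $\opton \le (1-\nicefrac{1}{e}+\eps)\cdot\mathrm{RB}_\mathrm{on}$, using the exact formula of \Cref{lem:Pr-available-exact} and continuity in $\mu$ to control the error terms; the observation as stated (with $\le(1-\nicefrac1e)$) then either follows by a slightly more careful choice or is understood as the supremum over instances, matching how \Cref{obs:RBoffgap} is phrased. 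Second, and more delicate, is the optimality of the greedy policy for $\opton$: I need that no online policy can do better than selling to every buyer when an item is available. With a single buyer type and single good this is intuitively clear (declining only wastes a sale and never helps future availability, since refusing to sell leaves the same item present and it can only perish), and I would formalize it by a coupling/exchange argument showing any deviation can be modified into the greedy policy without decreasing the sale rate. Once that is in hand, the computation via \Cref{lem:Pr-available-exact} is routine.
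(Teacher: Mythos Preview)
There is a genuine gap in your argument: the instance you propose does not exhibit any gap between $\opton$ and $\mathrm{RB}_\mathrm{on}$. The error is in your computation of $\bbP[A\geq 1]$. You cite \Cref{lem:Pr-available-exact} but then write $\bbP[A\geq 1]\to 1-\exp(-\lambda/(\mu+\gamma))$, which is the \emph{lower bound} of \Cref{cor:Pr-available-lower-upper-bounds}, not the exact expression. Using the lower bound to upper bound $\opton$ is of course illegitimate. The exact formula from \Cref{lem:Pr-available-exact} has $r\mu+\gamma$ in the denominator, not $r(\mu+\gamma)$; when $\mu\to 0$ and $\lambda=\gamma$ this gives
\[
\bbP_\infty[A\geq 1]=1-\Big(1+\sum_{q=1}^\infty\prod_{r=1}^q\frac{\lambda}{r\mu+\gamma}\Big)^{-1}\;\xrightarrow[\mu\to 0]{}\;1-\Big(1+\sum_{q=1}^\infty 1\Big)^{-1}=1,
\]
so the greedy policy sells to essentially every buyer and $\opton\to\gamma=\mathrm{RB}_\mathrm{on}$. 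Intuitively, with $\mu\to 0$ items never perish; your chain becomes a simple reflected random walk with equal birth and death rates, which has no mass at zero.

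To obtain the $1-\nicefrac{1}{e}$ gap you need the factorial structure $\prod_{r=1}^q \frac{\lambda}{r\mu+\gamma}\to \frac{1}{q!}$, which requires $r\mu$ to dominate $\gamma$ (so perishing, not selling, is the main downward force) while $\lambda/\mu\to 1$. The paper does exactly this: it takes $\gamma=1$, $\mu=\lambda-1$, and sends $\lambda\to\infty$. Then $\lambda/(r(\lambda-1)+1)\to 1/r$, the sum converges to $e-1$, and $\bbP[A\geq 1]\to 1-\nicefrac{1}{e}$, while one checks directly that $x=1$ is feasible for $\mathrm{RB}_\mathrm{on}$ so $\mathrm{RB}_\mathrm{on}(\calI_\lambda)\geq 1$. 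Your overall strategy (single buyer type, greedy is optimal, take a limit) is sound; only the choice of limiting regime needs to change.
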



\subsection{A new constraint via the PASTA property \& new LP benchmarks}
\label{sec:new-constraint}

We introduce an additional constraint previously overlooked in the literature, which follows from the PASTA property, \Cref{pasta}.\footnote{
In the third and latest arxiv version of \cite{collina2020dynamic} (uploaded January, 2021), the authors of that paper observe this constraint too, though they do not make use of it, instead replacing it by the potentially looser Constraint~\eqref{eqn:wine-20-constraint}.}

\begin{lem}
\label{lem:pasta-constraint}
Let $x_{ij}$ be the rate at which an offline (or online) policy sells items of good $i \in \calG$ to buyers of type $j \in \calB$.
Then $x_{ij}$ satisfies the following constraint:
\begin{equation}
\label{eqn:pasta-constraint}
    x_{ij} \leq \gamma_j \cdot \left(1 - \exp\left(-\lambda_i / \mu_i \right)\right).
\end{equation}
\end{lem}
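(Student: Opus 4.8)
The plan is to argue that the rate $x_{ij}$ at which good $i$ is sold to type-$j$ buyers is bounded by the buyer arrival rate $\gamma_j$ times the stationary probability that an item of good $i$ is available. The sales rate $x_{ij}$ counts exactly those type-$j$ arrivals that find good $i$ available \emph{and} that the policy chooses to sell to; this is at most the rate of type-$j$ arrivals who find good $i$ available. Type-$j$ buyers arrive as a Poisson process of rate $\gamma_j$ independent of the history of the availability process of good $i$ (since the policy's past decisions and the good's supply/perish processes do not depend on future type-$j$ arrivals), so by PASTA (\Cref{pasta}) the fraction of type-$j$ arrivals that see good $i$ available equals the long-run fraction of time that good $i$ is available, i.e. $\bbP[A_i \geq 1]$ in the stationary regime. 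Hence $x_{ij} \leq \gamma_j \cdot \bbP[A_i \geq 1]$.

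Next I would upper bound $\bbP[A_i \geq 1]$ by $1 - \exp(-\lambda_i/\mu_i)$, independent of the policy. Intuitively, the availability process is dominated by the \emph{presence} process: an available item is in particular present, so $\{A_i \geq 1\}$ implies that at least one item of good $i$ is present, and the number of present items of good $i$ (ignoring sales entirely) is an $M/M/\infty$ queue with arrival rate $\lambda_i$ and service rate $\mu_i$, which in stationarity is Poisson with mean $\lambda_i/\mu_i$; thus the probability it is nonzero is $1 - e^{-\lambda_i/\mu_i}$. This is precisely the $C \to \infty$ limit of the upper bound in \Cref{cor:Pr-available-lower-upper-bounds} with $\gamma^* = 0$ (and monotone in $C$, so it also covers bounded inventory). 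I would either invoke \Cref{cor:Pr-available-lower-upper-bounds} directly or give the one-line $M/M/\infty$ coupling argument. Combining the two bounds gives $x_{ij} \leq \gamma_j(1 - \exp(-\lambda_i/\mu_i))$.

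The one subtlety — and the main thing to be careful about — is justifying the independence hypothesis required by PASTA and the existence of a well-defined stationary availability probability. For a fixed (possibly adaptive online, or even offline) policy this needs a small argument: the "rate $x_{ij}$" should be interpreted as a long-run time-average rate, and one must ensure the relevant process is positive recurrent / has a stationary distribution so that this time average exists and equals the stationary probability. For an offline policy, which anticipates the future, the independence of future type-$j$ arrivals from the "history" is the delicate point; here one can instead note that the availability of good $i$ under \emph{any} policy (offline or online) is always at most the presence of good $i$, which is policy-independent, so the bound $x_{ij} \leq \gamma_j \cdot \bbP[\text{good } i \text{ present}]$ follows just from the Poisson arrivals of type-$j$ buyers seeing the time-average of the policy-independent presence process — and for this last process PASTA's independence hypothesis is unambiguous. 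I expect the proof in the paper to take essentially this route, quite possibly folding both steps together and citing \Cref{cor:Pr-available-lower-upper-bounds}.
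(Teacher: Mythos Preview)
Your proposal is correct and follows essentially the same route as the paper: bound $x_{ij}$ by the rate at which type-$j$ buyers arrive and see at least one \emph{present} item of good $i$, then apply PASTA together with the upper bound of \Cref{cor:Pr-available-lower-upper-bounds} (equivalently, the $M/M/\infty$ stationary distribution) to get $\gamma_j(1-\exp(-\lambda_i/\mu_i))$. Your discussion of the offline-policy subtlety---resolving it by passing to the policy-independent presence process---is exactly the right way to make the paper's terse two-line argument rigorous.
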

\begin{proof}
The rate at which a policy sells items of good $i$ to buyers of type $j$ is trivially upper bounded by the rate at which such buyers arrive and inspect at least one present item.
The bound therefore follows from the PASTA property (\Cref{pasta}), together with the upper bound of \Cref{cor:Pr-available-lower-upper-bounds}.
\end{proof}
Combining \Cref{lem:pasta-constraint,lem:basic-constraints} yields the following tighter bounds on $\optoff(\calI)$ and $\opton(\calI)$.

\begin{cor}
\label{cor:LP-off-geq-OPT-off-LP-on-geq-OPT-on}
For all instances $\calI$ of the stationary prophet inequality problem, 
\begin{enumerate}
    \item $\mathrm{LP}_\mathrm{off}(\calI) \triangleq \max\left\{\sum_{i,j} v_{ij} \cdot x_{ij} \mid \mathbf{x} \textrm{ satisfies } \eqref{eqn:flow-constraint-seller}-\eqref{eqn:positivity}, \eqref{eqn:pasta-constraint} \right\}$ satisfies $\mathrm{LP}_\mathrm{off}(\calI) \geq \optoff(\calI)$, and
    \item $\mathrm{LP}_\mathrm{on}(\calI) \triangleq \max\left\{\sum_{i,j} v_{ij} \cdot x_{ij} \mid \mathbf{x} \textrm{ satisfies } \eqref{eqn:flow-constraint-seller}-\eqref{eqn:pasta-constraint} \right\}$ satisfies $\mathrm{LP}_\mathrm{on}(\calI) \geq \opton(\calI).$
\end{enumerate}
\end{cor}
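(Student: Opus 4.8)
The plan is to read off the corollary as the conjunction of the two preceding lemmas: for an arbitrary offline (resp.\ online) policy I will exhibit a feasible point of the program defining $\mathrm{LP}_\mathrm{off}(\calI)$ (resp.\ $\mathrm{LP}_\mathrm{on}(\calI)$) whose objective value equals that policy's average reward, and then let the policy approach the optimum.

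Concretely, I would first fix an offline policy $P$ and let $x_{ij}$ be the long-run average rate at which $P$ sells items of good $i\in\calG$ to buyers of type $j\in\calB$. By \Cref{lem:basic-constraints} the vector $\mathbf{x}=(x_{ij})$ satisfies \eqref{eqn:flow-constraint-seller}--\eqref{eqn:positivity}, and by \Cref{lem:pasta-constraint} it additionally satisfies \eqref{eqn:pasta-constraint}; since all of these are merely upper bounds on the $x_{ij}$ together with nonnegativity, $\mathbf{x}$ is feasible for $\mathrm{LP}_\mathrm{off}(\calI)$. Each sale of good $i$ to a type-$j$ buyer contributes exactly $v_{ij}$ to the revenue and such sales occur at rate $x_{ij}$, so the average reward of $P$ is $\sum_{i,j} v_{ij} x_{ij} \le \mathrm{LP}_\mathrm{off}(\calI)$. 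Taking a sequence of offline policies whose average rewards tend to $\optoff(\calI)$ then gives $\optoff(\calI)\le \mathrm{LP}_\mathrm{off}(\calI)$. The online statement is proved identically, except that \Cref{lem:basic-constraints} now also guarantees that an online policy's rates satisfy \eqref{ec20-constraint}, so $\mathbf{x}$ is feasible for $\mathrm{LP}_\mathrm{on}(\calI)$, and the same accounting yields $\opton(\calI)\le\mathrm{LP}_\mathrm{on}(\calI)$.

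The only point requiring a word of care is that an arbitrary policy need not induce a stationary process, so I should justify that the rates $x_{ij}$ — and the identity ``average reward $=\sum_{i,j}v_{ij}x_{ij}$'' — are well defined. I would handle this by compactness: the vector of empirical sale rates over $[0,T]$ lies in the fixed box $\prod_{i,j}[0,\min\{\lambda_i,\gamma_j\}]$, so along some subsequence $T_k\to\infty$ it converges to a vector $\mathbf{x}$; the constraints of \Cref{lem:basic-constraints,lem:pasta-constraint} are closed conditions, hence preserved in the limit, and $\sum_{i,j}v_{ij}x_{ij}$ evaluated along that subsequence recovers (at least) the policy's average reward. Since \Cref{lem:basic-constraints,lem:pasta-constraint} are already phrased in terms of exactly these rates, in the final write-up I would just invoke them and compress this limiting argument into a sentence. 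There is no genuine obstacle here — the corollary is essentially the union of the two lemmas' feasibility guarantees — and whatever difficulty exists lives in those lemmas (in particular in the PASTA/queuing bound of \Cref{cor:Pr-available-lower-upper-bounds} underlying \Cref{lem:pasta-constraint}).
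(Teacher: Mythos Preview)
Your proposal is correct and matches the paper's approach: the paper does not even write out a proof, simply stating that the corollary follows by combining \Cref{lem:basic-constraints} and \Cref{lem:pasta-constraint}, which is precisely the feasibility argument you spell out. Your extra paragraph on well-definedness of the rates for non-stationary policies is more care than the paper takes, but it is sound and does not deviate from the intended argument.
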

We note that Constraint~\eqref{eqn:pasta-constraint} subsumes Constraint~\eqref{eqn:wine-20-constraint} since $1-\exp(-z) \leq z$ for all $z\in \mathbb{R}$.
As we show in \Cref{sec:better-bounds}, this tighter constraint and the derived tighter LP bounds on $\optoff(\calI)$ and $\opton(\calI)$ allow for better approximations of the optimal offline and online policies for the single-good problem and a better competitive ratio in the multi-good problem.

\section{Improved approximation ratios}
\label{sec:better-bounds}

In this section we present the proofs of our algorithmic results. 
In particular, we show that using the solutions to our tighter LP benchmarks $\mathrm{LP}_\mathrm{off/on}$ and an appropriate choice of $\alpha$ and $w_i$ for all $i \in \calG$ in \Cref{alg:main}, we achieve all of our improved approximation guarantees.


\subsection{Single-good problem}
\label{sec:single-good}

When the seller has only one good $i$ for sale, we drop the subscript $i$ for notational convenience, using, e.g., $\lambda$ and $\mu$ as shorthand for $\lambda_i$ and $\mu_i$, and using $v_{j}$ and $x_j$ as shorthand for $v_{ij}$ and $x_{ij}$.
We assume, without loss of generality (due to rate re-scaling) that $\mu = 1$ and buyer types are sorted such that $v_1 > v_2 > \dots > v_m$.


We first note that for the single-good problem, \Cref{alg:main} with $\alpha=1$ (as we will use it) yields a \emph{posted-price} policy.

\begin{obs}
\label{obs:alg-posted-price}
For any instance $\calI$ of the single-good stationary prophet inequality problem, \Cref{alg:main} with $\alpha = 1$ is a posted-price policy if
\begin{enumerate}
    \item $\mathbf{x^*} \triangleq \arg\max_\mathbf{x} \mathrm{LP}_\mathrm{off}(\calI)$ and $w \triangleq 1 - \exp(-\lambda)$, or
    \item $\mathbf{x^*} \triangleq \arg\max_\mathbf{x} \mathrm{LP}_\mathrm{on}(\calI)$ and $w \triangleq \min\left\{1 - \exp(-\lambda), \lambda - \sum_{j \in \calB} x_j^*\right\}$.
\end{enumerate}
\end{obs}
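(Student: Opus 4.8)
The plan is to argue that, for a single good, Algorithm~\ref{alg:main} with $\alpha=1$ collapses to a threshold rule on buyer values, with a single possible randomized tie-break. The key observation is that in the single-good case the inner \textbf{for} loop has exactly one iteration: when a buyer of type $j$ arrives and an item is available, the algorithm sells to them with probability $p_j = x_j^*/(\gamma_j \cdot w)$, and never sells when no item is available. So the policy is: ``if an item is available, sell to the arriving buyer of type $j$ with probability $p_j$.'' To show this is a posted-price policy in the sense of \Cref{sec:prelims}, I must show there is a single threshold $\bar v$ and a single probability $\bar p$ such that $p_j = 1$ for all $j$ with $v_j > \bar v$, $p_j = \bar p$ for the (at most one, by the strict ordering $v_1 > \dots > v_m$) type with $v_j = \bar v$, and $p_j = 0$ for all $j$ with $v_j < \bar v$. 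Equivalently: the vector $(p_j)_j$, indexed so that $v_1 > \dots > v_m$, is of the form $(1,1,\dots,1,\bar p,0,\dots,0)$.

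The heart of the matter is therefore an \emph{LP structure} claim: an optimal solution $\mathbf{x}^*$ to $\mathrm{LP}_\mathrm{off}$ (resp. $\mathrm{LP}_\mathrm{on}$) can be taken to be ``greedy'' in the value order, i.e. it saturates the per-type upper bound $\gamma_j w$ for the highest-value types first. Concretely, I would first note that in the single-good case, after substituting the chosen $w$, Constraints~\eqref{eqn:flow-constraint-buyer}, \eqref{eqn:wine-20-constraint}, \eqref{eqn:pasta-constraint} (for the offline LP) and additionally \eqref{ec20-constraint} (for the online LP) all become box constraints of the form $x_j \le \gamma_j w$: indeed $w = 1-\exp(-\lambda)$ makes \eqref{eqn:pasta-constraint} read $x_j \le \gamma_j w$, this is tighter than \eqref{eqn:wine-20-constraint} and than \eqref{eqn:flow-constraint-buyer} since $w = 1-\exp(-\lambda)\le\min\{1,\lambda\}$; and in the online case $w = \min\{1-\exp(-\lambda), \lambda - \sum_j x_j^*\}$ is chosen precisely so that \eqref{ec20-constraint}, which reads $x_j \le \gamma_j(\lambda - \sum_\ell x_\ell)/1$, is implied as well (using that $\sum_j x_j^* \le \lambda$, so $\lambda - \sum_j x_j^* \ge 0$, and that this quantity is a valid ``$w$''; here one has to be slightly careful that $w$ is defined in terms of $x^*$ itself, but since $x^*$ is a fixed optimal solution this is consistent). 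So the only remaining ``global'' constraint is the supply constraint $\sum_j x_j \le \lambda$ from \eqref{eqn:flow-constraint-seller}. Maximizing $\sum_j v_j x_j$ subject to $0 \le x_j \le \gamma_j w$ and $\sum_j x_j \le \lambda$ is a fractional knapsack / transportation-type LP whose optimal solution, by the standard exchange argument, fills the $x_j$ in decreasing order of $v_j$ up to their caps until the budget $\lambda$ is exhausted — yielding exactly the pattern $x_j^* = \gamma_j w$ for $j < k$, $x_k^* \in [0,\gamma_k w]$, $x_j^* = 0$ for $j > k$, for some threshold index $k$. Dividing by $\gamma_j w$ gives $p_j = 1, \bar p, 0$ respectively, which is the posted-price form with $\bar v = v_k$, $\bar p = x_k^*/(\gamma_k w)$ (and if $w=0$, every $p_j = 0$, which is the trivial price $\bar v = \infty$; this degenerate case should be handled or ruled out).

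The main obstacle — and the subtlety to get right — is the self-referential definition of $w$ in the online case: $w$ depends on $\sum_j x_j^*$, which depends on $x^*$, which is the LP optimum whose structure we are trying to pin down. I would resolve this by treating $x^*$ as \emph{given} (an arbitrary fixed maximizer of $\mathrm{LP}_\mathrm{on}$), \emph{then} defining $w$ from it, and then showing that among all maximizers with the \emph{same} value of $\sum_j x_j^*$ (equivalently the same $w$) there is one with the greedy/threshold structure — the exchange argument preserves $\sum_j x_j^*$ because it only moves mass between coordinates while keeping the total fixed once the budget is tight (and if the supply constraint is slack, then every cap is saturated, $p_j \equiv 1$, again posted-price with $\bar v = v_m - $, i.e. accept everything). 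A second minor obstacle is checking the claimed domination of constraints after substituting $w$, in particular confirming $1 - \exp(-\lambda) \le \lambda$ and $1-\exp(-\lambda)\le 1$ (both immediate from $1-e^{-z}\le z$ and $e^{-z}>0$) so that \eqref{eqn:flow-constraint-buyer}, \eqref{eqn:wine-20-constraint}, \eqref{eqn:pasta-constraint} are all implied by $x_j \le \gamma_j w$ in case~1, and the analogous bookkeeping with the extra $\min$ in case~2. Once these are in place the observation follows.
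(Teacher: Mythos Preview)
Your proposal is correct and follows essentially the same approach as the paper: first verify that the relevant LP constraints force $0 \le x_j^* \le \gamma_j w$ so that each $p_j$ is a valid probability, and then use a local exchange argument (exploiting the strict ordering $v_1 > \dots > v_m$) to conclude that the optimal LP solution has the greedy threshold form $x_j^* = \gamma_j w$ for $j \le \ell$, $x_{\ell+1}^* = p_{\ell+1}\gamma_{\ell+1} w$, and $x_j^* = 0$ for $j > \ell+1$, whence the algorithm is posted-price at $(\bar v,\bar p)=(v_{\ell+1},p_{\ell+1})$. Your treatment is in fact more careful than the paper's in two respects: you explicitly handle the online case (the paper's proof only spells out the $\mathrm{LP}_\mathrm{off}$ case), and you correctly isolate the self-referential nature of $w$ in case~2 and resolve it by noting that the exchange argument preserves $\sum_j x_j^*$, so the ec20 constraint is unaffected.
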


\begin{proof}
By constraints~\eqref{eqn:positivity},  \eqref{ec20-constraint} and \eqref{eqn:pasta-constraint}, we have that $0\leq x^*_j\leq \gamma_j \cdot w$, and so these choices of $\mathbf{x}^*$ and $w$ guarantee that $p_j$ is a valid probability for all $j \in \calB$.
Furthermore, by local exchange arguments and the strict inequalities $v_1 > v_2 > \dots > v_m$, combined with $x^*_j \leq \gamma_j\cdot w$,
an optimal solution $\mathbf{x}^*$ to $\mathrm{LP}_\mathrm{off}$ for instance $\calI$ is in some sense greedy, and has the following form for some $\ell \in [m]$:
\[ x_j^* =
    \begin{cases}
        \gamma_j \cdot w & \text{for } j \leq \ell \\
        p _{\ell+1}\cdot \gamma_{\ell+1} \cdot w & \text{for } j = \ell + 1 \text{ and some } p_{\ell+1} \in [0, 1] \\
        0 & \text{for } j > \ell + 1.
    \end{cases}
\]
Therefore, \Cref{alg:main} with $\alpha=1$ and the above choices of $\mathbf{x}^*$ and $w$ is a posted-price policy characterized by $(v_{\ell+1}, p_{\ell+1})$.
\end{proof}

\paragraph{The sale rate.} By definition of \Cref{alg:main}, it is clear that an item of the single good $i$ is sold to a buyer of type $j$ if at least one item of good $i$ is available and the probabilistic test in \Cref{line:i-permits-sale} passes (which is independent of whether or not an item is available).
We say a buyer who passes this probabilistic test of \Cref{line:i-permits-sale} has his bid \emph{permitted} by the seller.
By standard Poisson splitting, buyers of type $j$ whose bid is permitted arrive at rate $\gamma_j\cdot p_j = x^*_j /w$, and by Poisson merging, the arrival rate of buyers (of any type) whose bid the seller permits is
$\gamma^* \triangleq \sum_{j\in \calB} \gamma_j \cdot p_j = \sum_{j \in \calB} x_j^* / w$.
From the above, by the PASTA property (\Cref{pasta}), we have the following expression for the selling rate to buyers of type $j$ for our policy with capacity $C$, which we denote by $s_j^C$:
\begin{equation}
\label{eqn:s-ij-C-single-good}
    s_j^C =  \gamma_j\cdot p_j\cdot \bbP_C[A\geq 1] = \frac{\bbP_C \left[ A \geq 1 \right]}{w} \cdot x_j^*,
\end{equation}
where $A$ is the number of available items on arrival of the buyer.
In what follows, we will obtain our algorithmic results of \Cref{thm:1/2-competitive-single-good-policy,thm:0.656-approximate-single-good-policy} by lower bounding $\bbP_C[A \geq 1]/w$, from which our approximation ratios follow by linearity of expectation.

\subsubsection{Proof of \Cref{thm:1/2-competitive-single-good-policy} (algorithmic result)}
\label{sec:single-good-vs-opt-off}

In this section we prove the algorithmic result of \Cref{thm:1/2-competitive-single-good-policy}, restated below.
\begin{lem}\label{half-competitive-restated}
There exists a $\nicefrac{1}{2}$-competitive posted-price policy with capacity $C=2$ for the single-good stationary prophet inequality problem.
\end{lem}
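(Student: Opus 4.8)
The plan is to run Algorithm~\ref{alg:main} with $\alpha=1$, with $\mathbf{x}^*$ an optimal solution to $\mathrm{LP}_\mathrm{off}(\calI)$, and with $w \triangleq 1-\exp(-\lambda)$, so that by Observation~\ref{obs:alg-posted-price} this is a posted-price policy. By Equation~\eqref{eqn:s-ij-C-single-good} and linearity of expectation, the policy's average reward is $\sum_j v_j s_j^C = \frac{\bbP_C[A\geq 1]}{w}\sum_j v_j x_j^*= \frac{\bbP_C[A\geq 1]}{w}\cdot \mathrm{LP}_\mathrm{off}(\calI)$, which is at least $\frac{\bbP_C[A\geq 1]}{w}\cdot \optoff(\calI)$ by Corollary~\ref{cor:LP-off-geq-OPT-off-LP-on-geq-OPT-on}. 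So it suffices to show $\bbP_C[A\geq 1]/w \geq \nicefrac12$ when $C=2$, i.e.\ $\bbP_2[A\geq 1] \geq \tfrac12\big(1-e^{-\lambda}\big)$.

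The next step is to bound $\bbP_2[A\geq 1]$ from below using the queuing-theoretic machinery. Here the number of available items of the good is a birth–death chain with birth rate $\lambda$ and death rate at state $q$ equal to $q\mu + \gamma^* = q + \gamma^*$ (recall $\mu=1$), where $\gamma^* = \sum_j x_j^*/w$ is the rate of permitted bids; with capacity $C=2$ the chain has states $\{0,1,2\}$. Applying Lemma~\ref{lem:Pr-available-exact} with $C=2$ gives the exact formula $\bbP_2[A\geq 1] = 1-\big(1+\tfrac{\lambda}{1+\gamma^*}+\tfrac{\lambda^2}{(1+\gamma^*)(2+\gamma^*)}\big)^{-1}$. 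I would then observe that this quantity is monotonically decreasing in $\gamma^*$, so the worst case is $\gamma^*$ as large as possible; and $\gamma^* = \sum_j x_j^*/w \leq \sum_j \gamma_j (1-e^{-\lambda})/w = \sum_j \gamma_j$ is not directly a clean bound, so instead I would use the cleaner structural fact that $\gamma^* \le \lambda/w = \lambda/(1-e^{-\lambda})$: indeed each $x_j^*\le \gamma_j w$ and $\sum_j x_j^* \le \lambda$ by Constraint~\eqref{eqn:flow-constraint-seller}, hence $\gamma^* = \sum_j x_j^*/w \le \lambda/w$. Actually it is cleaner still to just plug $\gamma^* \le \lambda$ is false in general, so the key inequality to establish is: for all $\lambda>0$ and all $\gamma^* \in [0,\lambda/(1-e^{-\lambda})]$,
\[
1-\left(1+\frac{\lambda}{1+\gamma^*}+\frac{\lambda^2}{(1+\gamma^*)(2+\gamma^*)}\right)^{-1} \;\geq\; \frac{1-e^{-\lambda}}{2}.
\]

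The main obstacle, and the heart of the proof, is verifying this last single-variable (after fixing the worst-case $\gamma^*$) inequality over all $\lambda>0$. I expect the argument to go: substitute the extremal $\gamma^* = \lambda/(1-e^{-\lambda})$ (using monotonicity in $\gamma^*$), clear denominators, and reduce to showing a transcendental inequality in $\lambda$ alone, which one handles by Taylor expansion / convexity arguments near $\lambda=0$ and by crude bounds for large $\lambda$ (where $e^{-\lambda}\to 0$ and $\gamma^*\to\lambda$, and both sides behave tamely). One should also double-check the degenerate regime where $\mathrm{LP}_\mathrm{off}$ has a fractional boundary buyer and where $\gamma^*$ is small, but by the monotonicity reduction the small-$\gamma^*$ case only helps. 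The reason capacity $C=2$ (rather than $C=1$ or $C=\infty$) is the sweet spot is that $C=1$ gives too small a factor for large $\lambda$, while larger $C$ does not improve the bound past $\nicefrac12$ in the worst case; I would remark on this, and separately invoke the classical prophet-inequality embedding for the matching impossibility direction (already asserted in Theorem~\ref{thm:1/2-competitive-single-good-policy}), which is outside the scope of this lemma.
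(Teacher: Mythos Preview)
Your proposal is correct and follows essentially the same route as the paper: run \Cref{alg:main} with $\alpha=1$, $\mathbf{x}^*=\arg\max \mathrm{LP}_\mathrm{off}$, $w=1-e^{-\lambda}$, reduce to showing $\bbP_2[A\geq 1]/w\geq \nicefrac12$, use the exact birth--death formula, bound $\gamma^*\leq \lambda/(1-e^{-\lambda})$ via Constraint~\eqref{eqn:flow-constraint-seller}, and then verify the resulting single-variable transcendental inequality in $\lambda$. The only (minor) difference is how the final inequality is certified: the paper shows the left-hand side equals $\nicefrac12$ at $\lambda\to 0^+$ and is monotone increasing in $\lambda$ (by showing all Taylor coefficients of the numerator of the derivative are nonnegative), whereas you sketch a small-$\lambda$/large-$\lambda$ split; either tactic works for this step.
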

\begin{proof}
Fix an instance $\calI$ of the single-good stationary prophet inequality problem where the seller has inventory capacity $C = 2$.
Consider \Cref{alg:main} where $\mathbf{x}^* = \arg\max_\mathbf{x} \mathrm{LP}_\mathrm{off}(\calI)$, $w \triangleq 1 - \exp(-\lambda)$, and $\alpha = 1$, which is indeed a posted-price policy by \Cref{obs:alg-posted-price}.
By definition of $w$ and constraint~\eqref{eqn:flow-constraint-seller}, we have that
\begin{equation}
\label{eqn:gamma*-leq-lambda-divided-by-m*}
    \gamma^* = \sum_{j\in \calB} p_j \cdot \gamma_j = \frac{\sum_{j \in \calB} x_j^*}{1 - \exp(-\lambda)} \leq \frac{\lambda}{1 - \exp(-\lambda)}.
\end{equation}
By \Cref{lem:Pr-available-exact} and \Cref{eqn:gamma*-leq-lambda-divided-by-m*}, we therefore have that
\begin{align}
\label{eqn:Pr-available-divided-by-m*-geq-1/2}
    \frac{\bbP_2 \left[ A \geq 1 \right]}{w} =  \frac{1 - \left(1 + \sum_{q=1}^2 \prod_{r=1}^q \frac{\lambda}{r + \gamma^*} \right)^{-1}}{1 - \exp(-\lambda)}
    \geq & \frac{1 - \left(1 + \sum_{q=1}^2 \prod_{r=1}^q \frac{\lambda}{r + \frac{\lambda}{1 - \exp(-\lambda)}} \right)^{-1}}{1 - \exp(-\lambda)} \geq \frac{1}{2},
\end{align}
where the last inequality holds for all $\lambda\geq 0$; this inequality is easily verified to hold (with equality) for $\lambda\rightarrow0^+$, while the left-hand side of the inequality can be shown (with some effort) to be monotone increasing in $\lambda\geq 0$, and therefore this inequality holds for all $\lambda\geq 0$.
We defer a formal proof of this last inequality, restated in the following claim, to \Cref{appendix:better-bounds}.


\begin{restatable}{cla}{LowerBoundOneHalf}
\label{cla:lower-bound-one-half}
For all $x \in \mathbb{R}_{\geq 0}$, we have that $\frac{1 - \left(1 + \sum_{q=1}^2 \prod_{r=1}^q \frac{x}{r + \cdot \frac{x}{1 - \exp(-x)}} \right)^{-1}}{1 - \exp(-x)} \geq \frac{1}{2}.$
\end{restatable}

Combining \Cref{eqn:s-ij-C-single-good,eqn:Pr-available-divided-by-m*-geq-1/2} yields $s_j^2 \geq \nicefrac{1}{2} \cdot x_j^*$ for all buyer types $j \in \calB$.
We conclude that, by linearity of expectation and \Cref{cor:LP-off-geq-OPT-off-LP-on-geq-OPT-on}, the expected average revenue of \Cref{alg:main} for $\calI$ is at least half that of the optimal offline policy in expectation.
\end{proof}

Note that in our proof of \Cref{half-competitive-restated} we used a policy with capacity $C=2$. Using the same proof strategy, the same policy with further restricted capacity of $C=1$ can be shown to yield a competitive ratio of $\approx 0.435$ (see also \Cref{cor:0.435-competitive-single-good-WINE-analysis-our-LP} from \Cref{appendix:priorwork}).
On the other hand, monotonicity of $\bbP_C[A\geq 1]$ as a function of $C$ implies that \Cref{eqn:Pr-available-divided-by-m*-geq-1/2} holds for any capacity $C\geq 2$. Intuitively, such higher capacity should allow for strictly higher competitive ratio. However, this turns out to not be the case; in \Cref{sec:impossibility} we show that no online policy, regardless of inventory capacity and even not restricted to posted prices, is better than $\nicefrac{1}{2}$-competitive, and so this capacity-two posted-price policy is optimally competitive among all online policies. 



\subsubsection{Proof of \Cref{thm:0.656-approximate-single-good-policy}}
\label{sec:single-good-vs-opt-on}

In this section we prove our results for approximating $\opton(\calI)$ for the single-good problem, restated below.


\ApproximateSingleGoodPolicy*
\begin{proof}
Fix an instance $\calI$ of the single-good stationary prophet inequality problem where the seller has inventory capacity $C \in \Z_{> 0}$.
Consider \Cref{alg:main} where $\mathbf{x^*} = \arg\max_\mathbf{x} \mathrm{LP}_\mathrm{on}(\calI)$, $\alpha = 1$, and $w \triangleq \min\left\{1 - \exp(-\lambda),\, \lambda - \sum_{j \in \calB} x_j^* \right\}$, which is a posted-price policy by \Cref{obs:alg-posted-price}.
By \Cref{eqn:s-ij-C-single-good} and the linearity of expectation, our policy's approximation ratio is at least $\bbP_C[A\geq 1] / w$, where, as before, $A$ is the number of available items.
We therefore turn to lower bounding $\bbP_C[A\geq 1] / w.$

For this, we will invoke our lower bound on $\bbP_C[A\geq 1]$ of \Cref{cor:Pr-available-lower-upper-bounds}, which requires upper bounds on $\gamma^*$---the rate at which buyers arrive and pass the probabilistic check in \Cref{line:i-permits-sale}.
By definition of $w$, we have the following bound on $\gamma^*$:
\begin{equation}
\label{eqn:1-plus-gamma*-leq-lambda-divided-by-m*}
    1 + \gamma^* = \frac{w + \sum_{j \in \calB} x^*_j}{w} \leq \frac{\lambda - \sum_{j \in \calB} x_j^* + \sum_{j \in \calB} x^*_j}{w} = \frac{\lambda}{w}.
\end{equation}

\paragraph{Warm-up $1-\nicefrac{1}{e}$ bound:} Using the above bound on $1+\gamma^*$ and appealing to the lower bound of \Cref{cor:Pr-available-lower-upper-bounds}, we find that our policy with $C=\infty$ has approximation ratio at least $1-\nicefrac{1}{e}$, since for any $w\in [0,1]$ (as is the case for our $w\leq 1-\exp(-\lambda)\leq 1$), we have that
$$\frac{\bbP_\infty[A\geq 1]}{w} \geq \frac{1 - \left(\sum_{q=0}^\infty \frac{1}{q!} \left( \frac{\lambda}{ 1 + \gamma^*} \right)^q \right)^{-1}}{w} \geq \frac{1 - \left(\sum_{q=0}^\infty \frac{w^q}{q!} \right)^{-1}}{w}  = \frac{1-\exp(-w)}{w} \geq 1-\nicefrac{1}{e}.$$
In order to improve on the above natural bound, we will rely on the following two claims.

\begin{cla}\label{cla:Pr-available-m*-leq}
    If $w\leq 1-\exp(-\nicefrac{12}{5})$, then, for $g_1(C, x) \triangleq \frac{1 - \left(\sum_{q=0}^C \frac{x^q}{q!} \right)^{-1}}{x}$, we have that $$\frac{\bbP_C[A\geq 1]}{w} \geq g_1(C, 1 - \exp(-\nicefrac{12}{5})).$$
\end{cla}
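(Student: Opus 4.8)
The plan is to reduce the claim to a one-variable monotonicity statement about $g_1(C,\cdot)$. First I would combine the lower bound on $\bbP_C[A\geq 1]$ from \Cref{cor:Pr-available-lower-upper-bounds} (recalling that we have normalized $\mu = 1$) with the bound $1 + \gamma^* \leq \lambda/w$ from \eqref{eqn:1-plus-gamma*-leq-lambda-divided-by-m*}, which rearranges (using $w>0$) to $\lambda/(1+\gamma^*) \geq w$. Since $z \mapsto 1 - \big(\sum_{q=0}^C z^q/q!\big)^{-1}$ is nondecreasing on $[0,\infty)$ (its denominator is nondecreasing in $z$ and stays $\geq 1$), substituting $z = \lambda/(1+\gamma^*) \geq w$ gives $\bbP_C[A\geq 1] \geq 1 - \big(\sum_{q=0}^C w^q/q!\big)^{-1}$, and dividing by $w$ yields $\bbP_C[A\geq 1]/w \geq g_1(C,w)$. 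Thus it suffices to show $g_1(C,w) \geq g_1(C, 1-\exp(-\nicefrac{12}{5}))$ whenever $0 \leq w \leq 1-\exp(-\nicefrac{12}{5})$, which will follow from monotonicity of $g_1(C,\cdot)$.

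The remaining task is to prove that $g_1(C,x)$ is nonincreasing in $x\in[0,\infty)$ for every $C$. Writing $E_C(x) \triangleq \sum_{q=0}^C x^q/q!$, so that $g_1(C,x) = \frac{E_C(x)-1}{x\,E_C(x)}$ and $E_C'(x) = E_{C-1}(x)$, a direct differentiation and clearing of the (positive) denominator shows that $\frac{d}{dx}g_1(C,x)\leq 0$ is equivalent to
\[
E_C(x)^2 \;\geq\; E_C(x) + x\,E_{C-1}(x).
\]
For $C=\infty$ this is just $e^x \geq 1+x$. For finite $C$ I would compare Taylor coefficients: the coefficient of $x^k$ in $E_C(x)^2$ equals $2^k/k!$ for $k\leq C$ and is nonnegative for $k>C$, so $E_C(x)^2 \geq \sum_{k=0}^C \tfrac{2^k}{k!}x^k$ for $x\geq 0$; meanwhile $E_C(x) + x\,E_{C-1}(x) = 1 + \sum_{q=1}^C (q+1)\tfrac{x^q}{q!}$. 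The displayed inequality then follows termwise from $2^k \geq k+1$ for all $k\geq 0$. Chaining everything, $\bbP_C[A\geq 1]/w \geq g_1(C,w) \geq g_1(C, 1-\exp(-\nicefrac{12}{5}))$, using the hypothesis $w\leq 1-\exp(-\nicefrac{12}{5})$ in the last step.

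I expect the main obstacle to be the monotonicity of $g_1(C,x)$ in $x$, handled uniformly over all capacities $C$ at once: the algebra is routine once it has been recast as $E_C(x)^2 \geq E_C(x) + x\,E_{C-1}(x)$, but care is needed in the Taylor-coefficient comparison (in particular noting that the omitted coefficients of $E_C(x)^2$ for degrees $k>C$ are nonnegative, so dropping them is legitimate for $x\geq 0$) and in separately treating the $C=\infty$ case. Everything else—the application of \Cref{cor:Pr-available-lower-upper-bounds}, the bound $\lambda/(1+\gamma^*)\geq w$, and the monotonicity of $z\mapsto 1-(\sum_{q\le C} z^q/q!)^{-1}$—is immediate from the stated lemmas.
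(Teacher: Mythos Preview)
Your proposal is correct and follows essentially the same approach as the paper: combine \Cref{cor:Pr-available-lower-upper-bounds} with the bound $1+\gamma^*\leq \lambda/w$ from \eqref{eqn:1-plus-gamma*-leq-lambda-divided-by-m*} to obtain $\bbP_C[A\geq 1]/w \geq g_1(C,w)$, and then use that $g_1(C,\cdot)$ is monotone decreasing. The only difference is in the monotonicity argument---the paper (in \Cref{monotonicity-of-g1}) rearranges the derivative condition directly into a trivially true inequality, whereas you recast it as $E_C(x)^2 \geq E_C(x)+xE_{C-1}(x)$ and verify it via the Taylor-coefficient bound $2^k\geq k+1$; both are valid and comparably elementary.
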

\begin{proof}
    By \Cref{cor:Pr-available-lower-upper-bounds} and \Cref{eqn:1-plus-gamma*-leq-lambda-divided-by-m*}, we have $\frac{\bbP_C[A\geq 1]}{w}\geq \frac{\Big(1 - \left(\sum_{q=0}^C \frac{w^q}{q!} \right)^{-1}\Big)}{w} = g_1(C,w)$. The claim then follows since $g_1(C,w)$ is monotone decreasing as a function of $w$ (see \Cref{monotonicity-of-g1}).
\end{proof}

The more intricate claim is the following bound on $\bbP_C[A\geq 1]/w$ for $w$ close to $1$.

\begin{cla}
\label{cla:Pr-available-m*-geq}
If $w > 1 - \exp\left(-\nicefrac{12}{5}\right)$, then, for $g_2(C,x)\triangleq \frac{1 - \left(\frac{1}{6} + \frac{5}{6} \cdot \sum_{q=0}^C \frac{1}{q!} \left( \frac{6 \cdot x}{5} \right)^q \right)^{-1}}{x}$, we have that $$\frac{\bbP_C \left[ A \geq 1 \right]}{w} \geq g_2(C,1).$$
\end{cla}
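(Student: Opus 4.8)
**Plan for proving Claim~\ref{cla:Pr-available-m*-geq}.**

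The goal is to lower bound $\bbP_C[A\geq 1]/w$ when $w > 1 - \exp(-\nicefrac{12}{5})$, i.e., when $w$ is close to $1$. The natural tool is the lower bound on $\bbP_C[A\geq 1]$ from \Cref{cor:Pr-available-lower-upper-bounds}, namely $\bbP_C[A\geq 1] \geq 1 - \left(\sum_{q=0}^C \frac{1}{q!}(\lambda/(\mu+\gamma^*))^q\right)^{-1}$, combined with the bound $1+\gamma^* \leq \lambda/w$ from \Cref{eqn:1-plus-gamma*-leq-lambda-divided-by-m*} (recall $\mu=1$), which gives $\lambda/(1+\gamma^*) \geq w$. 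So $\bbP_C[A\geq 1] \geq 1 - \left(\sum_{q=0}^C \frac{w^q}{q!}\right)^{-1}$, and thus $\bbP_C[A\geq 1]/w \geq g_1(C,w)$ where $g_1$ is as in Claim~\ref{cla:Pr-available-m*-leq}. Since $g_1(C,\cdot)$ is decreasing, plugging in $w=1$ only gives $g_1(C,1)$, which is too weak. The point of the $\nicefrac{12}{5}$ and $\nicefrac{6}{5}$ constants is that we need a \emph{second} lower bound on $\bbP_C[A\geq 1]$ that exploits the fact that $\lambda$ itself is bounded below when $w$ is close to $1$: since $w = \min\{1-\exp(-\lambda),\lambda - \sum_j x_j^*\} \leq 1 - \exp(-\lambda)$, the hypothesis $w > 1 - \exp(-\nicefrac{12}{5})$ forces $\lambda > \nicefrac{12}{5}$.

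So the first step is to record $\lambda > \nicefrac{12}{5}$. Next, I want a lower bound on $\bbP_C[A\geq 1]$ that uses this. The idea: the queueing bound $\bbP_C[A\geq 1] = 1 - \left(1 + \sum_{q=1}^C \prod_{r=1}^q \frac{\lambda}{r+\gamma^*}\right)^{-1}$ from \Cref{lem:Pr-available-exact} is increasing in $\lambda$ and decreasing in $\gamma^*$. We have two handles on the pair $(\lambda,\gamma^*)$: the bound $\gamma^* \leq (\lambda - w)/w = \lambda/w - 1$, equivalently $1+\gamma^* \leq \lambda/w$, and the bound $\lambda > \nicefrac{12}{5}$. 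Write $r + \gamma^* = (r-1) + (1+\gamma^*) \leq (r-1) + \lambda/w$. Hence each factor $\frac{\lambda}{r+\gamma^*} \geq \frac{\lambda}{(r-1) + \lambda/w} = \frac{w\lambda}{w(r-1) + \lambda}$. I would then argue this is minimized (over $\lambda \geq \nicefrac{12}{5}$... careful: as $\lambda\to\infty$ the factor $\to w$, and at $\lambda = \nicefrac{12}{5}$ it is something else) — actually one should check the monotonicity in $\lambda$ of $\frac{w\lambda}{w(r-1)+\lambda}$: its derivative in $\lambda$ has sign of $w(r-1) \geq 0$, so it is increasing in $\lambda$, hence $\geq$ its value at $\lambda = \nicefrac{12}{5}$. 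That gives $\frac{\lambda}{r+\gamma^*} \geq \frac{(12/5)w}{w(r-1) + 12/5} = \frac{(6/5)w}{(w/2)(r-1)+6/5}$. This still has a clumsy $w$-dependence; the paper's $g_2$ has the cleaner form with $(6x/5)^q/q!$, which matches the factor $\frac{6w/5}{r}$ if we instead bound $r + \gamma^*$ differently — so I would reconcile by using the weaker but cleaner bound $r + \gamma^* \leq r \cdot (\text{something})$, or by noting $\prod_{r=1}^q \frac{\lambda}{r+\gamma^*}$ telescopes against the structure of the $M/M/C/C$-type chain. The honest reconstruction: with $1 + \gamma^* \leq \lambda/w$ and $\lambda \geq 12/5$, one shows $\prod_{r=1}^q \frac{\lambda}{r+\gamma^*} \geq \frac{5}{6}\cdot\frac{1}{q!}\left(\frac{6w/5}{1}\right)^q \cdot(\ldots)$ — more precisely I'd aim to show $1 + \sum_{q=1}^C\prod_{r=1}^q\frac{\lambda}{r+\gamma^*} \leq \frac{1}{6} + \frac{5}{6}\sum_{q=0}^C \frac{1}{q!}(6w/5)^q$, which is exactly what makes $\bbP_C[A\geq1]/w \geq g_2(C,w)$. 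Then invoke monotonicity of $g_2(C,\cdot)$ (decreasing in its second argument, analogous to $g_1$ — this should be stated as a companion to \Cref{monotonicity-of-g1}) to replace $w$ by $1$, yielding $g_2(C,1)$.

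The main obstacle, and the step I'd spend the most care on, is establishing the inequality $1 + \sum_{q=1}^C \prod_{r=1}^q \frac{\lambda}{r+\gamma^*} \leq \frac{1}{6} + \frac{5}{6}\sum_{q=0}^C \frac{1}{q!}(6w/5)^q$ under the two constraints $1+\gamma^* \leq \lambda/w$ and $\lambda \geq 12/5$ (hence $w \geq 1 - e^{-\lambda}$ only weakly constraining): one must verify the product $\prod_{r=1}^q \frac{\lambda}{r+\gamma^*}$ is, term by term, controlled by $\frac{5}{6}\cdot\frac{(6w/5)^q}{q!}$ for $q\geq 1$, which requires checking $\prod_{r=1}^q \frac{r+\gamma^*}{\lambda} \geq \frac{6}{5}\cdot\frac{q!}{(6w/5)^q} = \frac{6}{5}\cdot q!\cdot(5/(6w))^q$. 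The $q=1$ case reads $\frac{1+\gamma^*}{\lambda} \geq \frac{6}{5}\cdot\frac{5}{6w} = \frac{1}{w}$, which contradicts $1+\gamma^*\leq\lambda/w$ unless there's equality — so the constants must in fact combine across all $q$ simultaneously rather than term-by-term, and the correct argument bounds the whole sum by splitting $\gamma^*$ into the "$\lambda/w - 1$" budget and distributing, using $\lambda \geq 12/5$ to absorb the slack in the $q=0,1$ terms into the $1/6$ and $5/6$ coefficients. Concretely, I expect the clean route is: set $u := \lambda/w \geq \gamma^*+1$ and $\lambda \geq 12/5$; then $\prod_{r=1}^q \frac{\lambda}{r+\gamma^*} = w^q\prod_{r=1}^q\frac{u}{r+\gamma^*} \leq w^q \prod_{r=1}^q \frac{u}{r + u - 1}$, and $\prod_{r=1}^q \frac{u}{r+u-1} = \frac{u^q}{u(u+1)\cdots(u+q-1)} \cdot \frac{u-1+1}{u} = \ldots$, a ratio of rising factorials that one bounds against $\frac{1}{q!}$ times a correction; using $u = \lambda/w \geq (12/5)/1 = 12/5$ (since $w\leq 1$) to control the correction uniformly. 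I would carry out this rising-factorial estimate carefully, defer the final single-variable inequality $g_2(C,w)$ monotonicity to the appendix (mirroring the treatment of $g_1$), and conclude $\bbP_C[A\geq 1]/w \geq g_2(C,w) \geq g_2(C,1)$.
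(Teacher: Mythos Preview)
Your overall architecture is right: use $w\leq 1-\exp(-\lambda)$ together with the hypothesis to force $\lambda\geq \nicefrac{12}{5}$, combine this with $1+\gamma^*\leq \lambda/w$ and the exact formula from \Cref{lem:Pr-available-exact} to prove $\bbP_C[A\geq 1]/w\geq g_2(C,w)$, and then invoke monotonicity of $g_2(C,\cdot)$ to pass to $g_2(C,1)$. You also correctly derive the per-factor bound $\frac{\lambda}{r+\gamma^*}\geq \frac{w\lambda}{w(r-1)+\lambda}$ and note it is increasing in $\lambda$, hence $\geq \frac{(12/5)w}{w(r-1)+12/5}$.

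The gap is that at the crucial moment you flip the inequality. You write that you need
\[
1+\sum_{q=1}^C\prod_{r=1}^q\frac{\lambda}{r+\gamma^*}\ \leq\ \tfrac{1}{6}+\tfrac{5}{6}\sum_{q=0}^C\frac{(6w/5)^q}{q!},
\]
but since $\bbP_C[A\geq 1]=1-(1+\text{sum})^{-1}$, a \emph{larger} sum gives a larger probability; you need $\geq$, not $\leq$. This sign error is exactly why your $q=1$ check ``contradicts'' $1+\gamma^*\leq\lambda/w$ and why your rising-factorial detour (where you again write $\prod_r \frac{u}{r+\gamma^*}\leq \prod_r\frac{u}{r+u-1}$, which is also the wrong direction) goes nowhere. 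With the correct direction, a term-by-term argument \emph{does} work. The step you were one move away from is: for $r\geq 2$ and $w\leq 1$,
\[
w(r-1)+\tfrac{12}{5}\ \leq\ (r-1)+\tfrac{12}{5}\ \leq\ 2r,
\]
so your own bound simplifies to $\frac{\lambda}{r+\gamma^*}\geq \frac{(12/5)w}{2r}=\frac{6}{5}\cdot\frac{w}{r}$ for every $r\geq 2$. Together with the $r=1$ factor $\frac{\lambda}{1+\gamma^*}\geq w$, this gives
\[
\prod_{r=1}^q\frac{\lambda}{r+\gamma^*}\ \geq\ w\cdot\prod_{r=2}^q\frac{6w/5}{r}\ =\ \frac{5}{6}\cdot\frac{(6w/5)^q}{q!}\qquad(q\geq 1),
\]
and summing yields $1+\sum_{q\geq 1}(\cdots)\geq \tfrac{1}{6}+\tfrac{5}{6}\sum_{q\geq 0}\frac{(6w/5)^q}{q!}$, i.e.\ $\bbP_C[A\geq 1]/w\geq g_2(C,w)$, as desired. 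This is precisely the paper's route; your plan had all the pieces except the final simplification and the correct inequality sign.
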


\begin{proof}
Observe that the lower bound on $\mathbb{P}_C \left[ A \geq 1 \right]$ based on \Cref{cor:Pr-available-lower-upper-bounds} that we used in \Cref{cla:Pr-available-m*-leq} follows by noting that $r + \gamma^* \leq r \cdot (1 + \gamma^*)$ for all $r \geq 1$, which is lossy for large $r$. In particular, for $r \geq 2$ we have
\[ r - 1 \leq 2 \cdot r - \frac{12}{5} = \frac{12}{5} \cdot \left(\frac{5}{6} \cdot r - 1 \right) \leq \frac{\lambda}{w} \cdot \left(\frac{5}{6} \cdot r - 1 \right), \]
where the second inequality follows from the fact that $w \geq 1 - \exp\left(-\nicefrac{12}{5}\right)$ implies $\lambda \geq \nicefrac{12}{5}$ and the fact that $w \leq 1-\exp(-\lambda)\leq 1$.
This, combined with \Cref{eqn:1-plus-gamma*-leq-lambda-divided-by-m*} yields
\[ r + \gamma^* = r - 1 + 1 + \gamma^* \leq \frac{\lambda}{w} \cdot \left(\frac{5}{6} \cdot r - 1 \right) + \frac{\lambda}{w} = \frac{5}{6} \cdot r \cdot \frac{\lambda}{w}. \]
Rearranging terms, we thus have that for all $r \geq 2$, 
\begin{equation}
\label{eqn:lambda-divided-by-j+gamma*-geq}
    \frac{\lambda}{r + \gamma^*} \geq \frac{6}{5} \cdot \frac{w}{r},
\end{equation}
This combined with \Cref{lem:Pr-available-exact} and \Cref{eqn:gamma*-leq-lambda-divided-by-m*} yields
\begin{align}\nonumber
    \frac{\bbP_C \left[ A \geq 1 \right]}{w} =  \frac{1 - \left(1 + \sum_{q=1}^C \prod_{r=1}^q \frac{\lambda}{r + \gamma^*} \right)^{-1}}{1 - \exp(-\lambda)}
    \geq \frac{1 - \left(\frac{1}{6} + \frac{5}{6} \cdot \sum_{q=0}^C \frac{1}{q!} \left( \frac{6 \cdot w}{5} \right)^q \right)^{-1}}{w} = g_2(C, w).
\end{align}
The claim then follows since $g_2(C,w)$ is monotone decreasing in $w$ (see \Cref{monotonicity-of-g2}).
\end{proof}

Combining \Cref{eqn:s-ij-C-single-good} and \Cref{cla:Pr-available-m*-leq,cla:Pr-available-m*-geq},
we obtain the following:
\begin{align}\label{approx-opton-as-fn-of-C}
    s_j^C \geq \min\{g_1(C,1-\exp(-\nicefrac{12}{5}), g_2(C,1)\}\cdot x^*_j.
\end{align}
The above bound is at least $0.656$ for all $C\geq 5$.
By linearity of expectation, we conclude that our policy's average gain is at least a $0.656$ fraction of the average gain of the optimal online policy.
\end{proof}

\paragraph{A small (but large enough) inventory suffices.} 
Our proof of \Cref{thm:0.656-approximate-single-good-policy} shows that a capacity of $C=5$ is sufficient to obtain a $0.656$ approximation of $\opton(\calI)$.
Opening up this proof and evaluating \Cref{approx-opton-as-fn-of-C} for different $C$, we obtain a number of bounds for different inventory capacity $C$; e.g., a capacity of $C=3$ is sufficient to break the natural barrier of $1-\nicefrac{1}{e}$ for this problem, and a capacity of $C=1$ yields a $\nicefrac{1}{2}$ approximation.
In contrast, we also prove in \Cref{cla:upper-bound-approx-ratio-inventory-C} (proof deferred to \Cref{appendix:better-bounds}) that sufficiently small inventory does harm the approximation ratio; so, for example, our bound for $C=1$ is optimal.
See \Cref{table:approximations-for-C-bounded-inventory} for comparison of upper and lower bounds in terms of $C$.

\begin{restatable}{lem}{UpperBoundApproxRatioInventoryC}
\label{cla:upper-bound-approx-ratio-inventory-C}
For any $C \in \Z_{> 0}$, no online policy with inventory capacity $C$ is greater than a $\nicefrac{C}{(C+1)}$-approximation of the optimal online (unbounded inventory) policy.
\end{restatable}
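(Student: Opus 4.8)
The plan is to exhibit a family of single-good instances $\calI_\mu$ (parametrized by the perishing rate $\mu\to 0^+$) on which \emph{every} online policy with inventory capacity $C$ earns average revenue at most $\bigl(\tfrac{C}{C+1}+o(1)\bigr)\cdot\opton(\calI_\mu)$; since the approximation ratio of a fixed policy is an infimum over instances, this forces every capacity-$C$ policy to have ratio at most $\tfrac{C}{C+1}$. The instance is ``critically loaded'': one good of unit value, item-supply rate $\lambda=1$, buyer rate $\gamma=1$, and $\mu\to 0^+$, so items essentially never perish. Morally, the number of available items behaves like an $M/M/1/C$ queue at load $1$, which is empty a $\tfrac1{C+1}$ fraction of the time, so a capacity-$C$ seller misses a $\tfrac1{C+1}$ fraction of buyers, whereas an unbounded seller stockpiles the steady stream of items and serves essentially every buyer.

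For the lower bound on $\opton(\calI_\mu)$ it suffices to analyze the greedy unbounded policy (sell whenever an item is available). By \Cref{lem:Pr-available-exact} with $C=\infty$ and $\gamma^*=\gamma=1$, its selling (hence revenue) rate is $\bbP_\infty[A\ge 1]=1-\bigl(1+\sum_{q\ge 1}\prod_{r=1}^{q}\tfrac{1}{r\mu+1}\bigr)^{-1}$; since each product tends to $1$ as $\mu\to 0^+$, the series diverges and $\bbP_\infty[A\ge 1]\to 1$. Hence $\opton(\calI_\mu)\ge 1-\zeta(\mu)$ with $\zeta(\mu)\to 0$. (The weaker bound of \Cref{cor:Pr-available-lower-upper-bounds} only yields $1-1/e$ here, which would not suffice; the exact formula is needed.)

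For the upper bound, fix any capacity-$C$ policy; let $\bar p_k$ be the long-run fraction of time with exactly $k$ available items and $\bar s_k$ the long-run rate of sales made while in state $k$, for $0\le k\le C$ (we work with $\limsup$ time-averages). Because item arrivals and buyer arrivals are Poisson streams independent of the past, PASTA (\Cref{pasta}) gives the level-crossing balances $\bar p_{k-1}=k\mu\,\bar p_k+\bar s_k$ for $1\le k\le C$ (rate of up-crossings of level $k-\tfrac12$ equals rate of down-crossings, which differ by at most one over any horizon since $A$ is bounded), together with $\bar s_k\le\gamma\bar p_k=\bar p_k$ (no more sales in state $k$ than buyers arriving in state $k$). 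Summing the balances telescopes the total selling rate to $\sum_{k\ge1}\bar s_k=(1-\bar p_C)-\mu\sum_k k\bar p_k\le 1-\bar p_C$. Feeding $\bar s_k\le\bar p_k$ back in gives $\bar p_{k-1}\le(k\mu+1)\bar p_k\le(C\mu+1)\bar p_k$, hence $\bar p_j\le(C\mu+1)^{C-j}\bar p_C$, and summing over $j$ yields $\bar p_C\ge \tfrac{C\mu}{(C\mu+1)^{C+1}-1}\to\tfrac1{C+1}$ as $\mu\to 0^+$. Thus the policy's average revenue is at most $1-\bar p_C\le\tfrac{C}{C+1}+\eta(\mu)$ with $\eta(\mu)\to 0$; dividing by the lower bound $1-\zeta(\mu)$ on $\opton(\calI_\mu)$ and taking $\mu$ small enough completes the argument.

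The main obstacle is making the upper bound rigorous for \emph{arbitrary}, history-dependent capacity-$C$ policies, i.e.\ justifying that $\bar p_k,\bar s_k$ and the level-crossing relations are meaningful beyond the stationary-Markov case. I would address this by (i) working with $\limsup$ time-averages, so that the level-crossing statements are only needed as the inequalities $\bar p_{k-1}\le k\mu\bar p_k+\bar s_k$ and $\sum_k\bar s_k\ge(1-\bar p_C)-\mu\sum_k k\bar p_k$ — both valid because $A(t)\in\{0,\dots,C\}$ forces up- and down-crossings of any level to differ by at most one over every finite horizon — and (ii) invoking PASTA in the form that the empirical fraction of Poisson arrivals witnessing a history-independent event equals its time-average frequency. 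The remaining steps (the $\mu\to0^+$ limits of $\tfrac{C\mu}{(C\mu+1)^{C+1}-1}$ and of $p_0$ for the greedy unbounded chain) are one-line Taylor expansions.
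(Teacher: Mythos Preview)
Your proof is correct, but it takes a substantially more elaborate route than the paper. Both arguments use essentially the same hard instance---a single good, a single buyer type with value $1$, and a critically loaded regime where $\lambda/(r\mu+\gamma)\to 1$ for each fixed $r$ (the paper takes $\lambda=\gamma\to\infty$ with $\mu=1$; you take $\lambda=\gamma=1$ with $\mu\to 0^+$, which is equivalent after rescaling time). Where the approaches diverge is the upper bound on capacity-$C$ policies.

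The paper sidesteps your level-crossing machinery entirely by observing that with a single buyer type of value $1$, the optimal policy for \emph{any} capacity is trivially the greedy one: the revenue rate equals the sale rate, and never refusing a sale can only help. Hence both the capacity-$C$ optimum and the unbounded optimum are given directly by the closed form of \Cref{lem:Pr-available-exact}, namely $\gamma\cdot\bbP_C[A\ge 1]$ and $\gamma\cdot\bbP_\infty[A\ge 1]$, and the ratio tends to $\tfrac{1-1/(C+1)}{1} = \tfrac{C}{C+1}$ in the limit. This is a two-line computation once the instance is chosen.

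Your argument instead bounds \emph{all} capacity-$C$ policies via level-crossing and PASTA, which is more general (it does not require identifying the optimal bounded policy) but correspondingly heavier: you must justify the balance relations $\bar p_{k-1}=k\mu\bar p_k+\bar s_k$ for arbitrary history-dependent policies, which---as you note---requires care with $\limsup$ time-averages and the fact that perishing events are not a homogeneous Poisson stream. Your sketch of how to handle this is sound, but all of it is avoidable once one notices that greedy is optimal for a single-type instance.
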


\begin{table}[h]
\centering
\begin{tabular}{|c|c|c|c|c|c|}
\hline
& \multicolumn{5}{c|}{Inventory size $C$} \\ \cline{2-6}
& $1$ & $2$ & $3$ & $4$ & $5$ \\ \hline
Lower bound (algorithm) & $\nicefrac{1}{2}$ & $0.615$ & $0.647$ & $0.655$ & $0.656$ \\ \hline
Upper bound (impossibility) & $\nicefrac{1}{2}$ & $\nicefrac{2}{3}$ & $\nicefrac{3}{4}$ & $\nicefrac{4}{5}$ & $\nicefrac{5}{6}$ \\ \hline
\end{tabular}
\caption{Lower and upper bounds on the approximation factor of the policy from \Cref{thm:0.656-approximate-single-good-policy} relative to the optimal online policy when the seller has inventory capacity $C$.}
\label{table:approximations-for-C-bounded-inventory}
\end{table}

\subsection{Multi-good problem}
\label{sec:multi-good}

In this section we analyze \Cref{alg:main} applied to a multi-good instance.

We start by introducing the following terminology used in this section's analysis of \Cref{alg:main}.
We say that a buyer of type $j \in \calB$ \emph{reaches} good $i \in \calG$ if the buyer has not yet been sold an item when he considers good $i$ in \Cref{line:j-reaches-i} of \Cref{alg:main}, and we denote this event by $R_{ij}$
We denote the number of available items of good $i$ by $A_i$.
Lastly, we say that the seller \emph{permits} the sale of an item of good $i$ to a buyer of type $j$ if the probabilistic check on \Cref{line:i-permits-sale} would pass, without regard for whether or not the buyer has already been sold an item of another good or the availability of the good.
We denote this event by $P_{ij}\sim Ber(p_{ij})$.
Importantly, the seller permits a sale independently of the availability of any good.

By definition of \Cref{alg:main}, it is clear that an item of good $i$ is sold to a buyer of type $j$ if the buyer reaches good $i$, at least one item of good $i$ is available, and the seller permits the sale of good $i$ to the buyer.
Suppose the seller has inventory capacity $C \in \Z_{> 0}$, and let $s_{ij}^C$ denote the expected rate at which items of good $i$ are sold to buyers of type $j \in \calB$ under \Cref{alg:main}.
By the PASTA property (\Cref{pasta}), we have the following:
\begin{equation}
\label{eqn:s-ij-C}
    s_{ij}^C = \gamma_j \cdot \bbP_C \left[ 
    R_{ij} \wedge A_i \geq 1 \wedge P_{ij} \right] = \alpha \cdot \frac{\bbP_C \left[ R_{ij} \wedge A_i \geq 1 \right]}{w_i} \cdot x_{ij}^*.
\end{equation}


Inspecting \Cref{eqn:s-ij-C}, it is clear that lower bounding $s_{ij}^C$ is more challenging in the multi-good problem than in the single good, as the event that a buyer of type $j$ reaches a good $i\in \calG$ and the availability of good $i$ are correlated via the availability of other goods.
\citet{collina2020dynamic} approached this problem by considering a number of Poisson processes which they showed to either stochastically dominate or be dominated by the processes of interest, and further, by arguing about the correlations between them.
Although we similarly introduce a stochastic dominance relation, our analysis also relies heavily on the same queuing theory techniques employed in the single-good problem for analyzing stationary distributions of CTMCs.

To lower bound $\bbP_C[R_{ij} \wedge A_i\geq 1]$, we introduce some additional notation.
We denote by $\tilde R_{ij}$ the event that a buyer of type $j$ is not permitted by the seller to buy a good $i'\in \calG$ that is present (even if unavailable) and precedes good $i$ in the random ordering from \Cref{line:j-reaches-i}.
Although similar to $R_{ij}$, observe that $\tilde R_{ij}$ does not depend on the availability of any good.
We also let $\tilde A_i$ denote the number of items of good $i$ available when $i$ comes first in the ordering from \Cref{line:j-reaches-i}.
We bound the probability that a buyer of type $j$ reaches good $i$ and the good is available in terms of $\tilde R_{ij}$ and $\tilde A_i$ in the following claim.

\begin{restatable}{lem}{PrReachesAndAvailable}
\label{cla:Pr-reaches-and-available-lower-bound}
For any good $i \in \calG$ and buyer type $j \in \calB$,
\begin{equation}
\label{eq:R-ij-A-i-geq-tilde-R-ij-tilde-A-i}
    \bbP_C \left[ R_{ij} \wedge A_i \geq 1 \right] \geq \bbP_C \left[ \tilde R_{ij} \right] \cdot \bbP_C \left[ \tilde A_i \geq 1 \right].
\end{equation} 
\end{restatable}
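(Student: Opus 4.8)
We want to show $\bbP_C[R_{ij} \wedge A_i \geq 1] \geq \bbP_C[\tilde R_{ij}]\cdot \bbP_C[\tilde A_i \geq 1]$. The overall plan is to relate the "true" events (which depend on the random ordering of goods and on correlations among availabilities) to the decoupled quantities $\tilde R_{ij}$ and $\tilde A_i$, and then argue these two decoupled quantities are *independent* or at least *positively correlated*, via a coupling/stochastic-dominance argument.

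First, I would condition on the random permutation $\pi$ of $\calG$ chosen on Line~\ref{line:j-reaches-i}, and further on the set $\calP$ of goods *present* at the buyer's arrival and the set of *permit* bits $(P_{i'j})_{i'}$. Working under this conditioning, $\tilde R_{ij}$ becomes the deterministic event that none of the present goods preceding $i$ in $\pi$ is permitted, which is a function only of $\pi$ and the permit bits; note crucially it does not reference availability. The key observation is the pointwise domination $R_{ij} \supseteq$ (something): a buyer who is *not* permitted to buy any present good before $i$ certainly reaches $i$ — i.e. $\tilde R_{ij}$ implies $R_{ij}$ — but this alone gives the wrong direction for the product. Instead, the right move is to bound $\bbP_C[R_{ij}\wedge A_i\geq 1] \geq \bbP_C[\tilde R_{ij}\wedge A_i\geq 1]$, and then separately handle the joint event. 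For the availability, I would set up a coupling between the chain governing $A_i$ in the real process and the chain governing $\tilde A_i$: the latter is the $M/M/\infty$-type birth–death queue for good $i$ in which good $i$ always comes first, so every permitted buyer of any type who would "reach" $i$ does reach it, i.e. $\tilde A_i$ sees the *maximum possible* departure rate. Hence $A_i$ stochastically dominates $\tilde A_i$ (good $i$ is depleted no faster when it is not always first), which I would justify via \Cref{lem:stochastic-dominance} by comparing the intensity matrices: the birth rates $\lambda_i$ agree, and the death rate out of state $a$ is $a\mu_i + (\text{effective buyer rate})$, which is no larger for $A_i$ than for $\tilde A_i$ on any upward-closed set.

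The heart of the argument is then a positive-correlation (FKG/association) step: conditioned on the environment, the event $\tilde R_{ij}$ is decreasing in the "activity" of the other goods' permitted-sale attempts, while $A_i\geq 1$ is *also* decreasing in the depletion activity directed at good $i$ — but these depend on disjoint sources of randomness once we decouple, namely $\tilde R_{ij}$ depends on permits for goods $i'\neq i$ preceding $i$, and $\tilde A_i$ depends only on the arrival/perish/permitted-sale history of good $i$ itself. So I would argue: $\tilde R_{ij}$ and $\tilde A_i$ are *independent* in the decoupled construction (good $i$'s queue ignores the ordering entirely, and $\tilde R_{ij}$'s definition, by design, references only goods present and preceding $i$, using fresh permit coins), giving $\bbP_C[\tilde R_{ij}\wedge \tilde A_i\geq 1] = \bbP_C[\tilde R_{ij}]\cdot\bbP_C[\tilde A_i\geq 1]$. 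Chaining this with the two stochastic-domination inequalities above ($R_{ij}\supseteq\tilde R_{ij}$ on the reach side, $A_i \succeq \tilde A_i$ on the availability side, handled jointly and carefully so as not to lose correlation in the wrong direction) yields the claim.

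The main obstacle — and the step I'd spend the most care on — is making the "decoupling + independence" rigorous: the events $R_{ij}$ and $A_i\geq 1$ are genuinely correlated through other goods' availabilities, and one must be careful that replacing $R_{ij}$ by $\tilde R_{ij}$ (which replaces "available-and-preceding" by "present-and-preceding") is exactly what removes that correlation without introducing a loss in the wrong direction. I expect this requires defining a single probability space carrying the Poisson processes, perish clocks, permutation, and permit coins, then exhibiting explicit monotone couplings so that \Cref{lem:stochastic-dominance} applies to the pair $(A_i,\tilde A_i)$ and a separate FKG-type inequality (or outright independence after conditioning on the stationary environment) applies to $(\tilde R_{ij},\tilde A_i)$. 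The PASTA property (\Cref{pasta}) is used throughout to pass between time-stationary probabilities and the probabilities seen by an arriving buyer, exactly as in \Cref{eqn:s-ij-C}.
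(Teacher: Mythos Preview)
Your overall strategy is the same as the paper's: pass from $(R_{ij},A_i)$ to decoupled surrogates $(\tilde R_{ij},\tilde A_i)$, use stochastic dominance, and then exploit independence under the decoupled dynamics. The ingredients you list are exactly the right ones. However, there is a genuine gap in the way you propose to chain them.

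Concretely, after the (correct) pointwise step $\bbP_C[R_{ij}\wedge A_i\ge 1]\ge \bbP_C[\tilde R_{ij}\wedge A_i\ge 1]$, you need to get from $\bbP_C[\tilde R_{ij}\wedge A_i\ge 1]$ down to $\bbP_C[\tilde R_{ij}]\cdot\bbP_C[\tilde A_i\ge 1]$. Neither of your two suggested tools does this. First, the \emph{marginal} dominance $A_i\succeq\tilde A_i$ (obtained by applying \Cref{lem:stochastic-dominance} to a one–dimensional chain for good $i$) says nothing about the joint law with $\tilde R_{ij}$; $A_i$ by itself is not even Markov, since its sale rate depends on the availabilities of the other goods. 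Second, the FKG/association route for $(\tilde R_{ij},A_i)$ points the wrong way: $\tilde R_{ij}$ is \emph{decreasing} in the presence counts of goods $i'\neq i$, while $\{A_i\ge 1\}$ is \emph{increasing} in those same counts (more present competitors means more available competitors, so buyers reach $i$ less often and deplete $A_i$ less), so one should expect negative, not positive, correlation. Your ``handled jointly and carefully'' caveat correctly flags this as the crux, but the plan does not supply the missing device.

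The paper resolves this by working with a single $2n$-dimensional Markov process $Y$ that simultaneously records, for every good, both the number of available items and the \emph{negative} of the number of present items. Conditioned on the ordering $\sigma$ and the permitted set $H_j$, the event $\{\tilde R_{ij}\wedge A_i\ge 1\}$ becomes an \emph{upward-closed} set in this encoding (availability of $i$ at least $1$; presence of each permitted predecessor equal to $0$). The paper then defines a dominated process $\tilde Y$ in which every good simultaneously ``comes first'' (so each good's availability sees the maximal buyer rate $\alpha\lambda_i/w_i$), checks the intensity-matrix condition of \Cref{lem:stochastic-dominance} for the full $2n$-dimensional pair $(Y,\tilde Y)$, and concludes $\bbP_C[Y\in S]\ge\bbP_C[\tilde Y\in S]$ for the upward-closed $S$ above. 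Under $\tilde Y$ the availability of good $i$ (which is exactly $\tilde A_i$) and the presence of the other goods are independent, giving the factorization. In short: the key idea you are missing is to apply the stochastic-dominance lemma once, to the joint availability-and-presence process over all goods, rather than trying to combine a one-dimensional dominance for $A_i$ with a separate correlation argument for $\tilde R_{ij}$.
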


See \Cref{appendix:multi-good} for the proof, which proceeds by introducing a new stochastic process which relaxes the constraints of the multi-good problem so that, in some sense, each good (simultaneously) comes first in the random ordering.
For each good $i \in \calG$, this only increases the probability that an item of good $i$ is sold to an arriving buyer, thus creating more ``downwards pressure'' on the number of items available.
It is intuitive, therefore, that the dynamics under \Cref{alg:main} stochastically dominate those of this new process, and as a result, we can lower bound the probability that a buyer reaches a good and there is an item of that good available under \Cref{alg:main} by considering the probability of these same events under the dominanted stochastic process.
By construction, these events, which we relate to $\tilde R_{ij}$ and $\tilde A_i \geq 1$, are independent under this new process, and we can therefore lower bound each separately.

In particular, lower bounding $\bbP_C \left[ \tilde R_{ij} \right]$, as in the following lemma (see \Cref{appendix:multi-good} for proof), is straightforward as it depends only on the presence, not availability, of goods, which we can characterize exactly from the upper bound of \Cref{cor:Pr-available-lower-upper-bounds}.


\begin{restatable}[\cite{collina2020dynamic}]{lem}{PrJReachesI}
\label{lem:Pr-j-reaches-i}
For any good $i \in \calG$ and buyer type $j \in \calB$, we have that
\[ \bbP_C \left[ \tilde R_{ij} \right] \geq 1 - \frac{\alpha}{2}. \]
\end{restatable}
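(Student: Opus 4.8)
The plan is to upper bound $\bbP_C[\neg\tilde R_{ij}]$ by $\alpha/2$ through a union bound over goods, exploiting the fact that $\tilde R_{ij}$ — unlike $R_{ij}$ — refers only to the \emph{presence} of goods and to fresh randomness sampled at buyer $j$'s arrival, both of which are easy to control. First I would observe that $\tilde R_{ij}$ fails exactly when there is some good $i'\neq i$ for which (a) $i'$ precedes $i$ in the uniformly random order drawn for buyer $j$ on Line~\ref{line:j-reaches-i}, (b) at least one item of $i'$ is present when buyer $j$ arrives, and (c) the permit coin $P_{i'j}\sim\mathrm{Ber}(p_{i'j})$ comes up $1$. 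A union bound then gives $\bbP_C[\neg\tilde R_{ij}]\le\sum_{i'\neq i}\bbP_C[\text{(a)}\wedge\text{(b)}\wedge\text{(c)}]$.

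Next I would argue that the three events in each summand factor, and evaluate each factor. Events (a) and (c) depend only on randomness sampled at the arrival of buyer $j$ (the random permutation and the permit coin), hence are independent of each other and of the system state at that time; by the PASTA property (\Cref{pasta}) and the non-anticipating nature of the Poisson arrival stream, that state is distributed according to the stationary distribution and is independent of $j$'s own coins. Therefore $\bbP_C[\text{(a)}\wedge\text{(b)}\wedge\text{(c)}]=\tfrac12\cdot p_{i'j}\cdot\bbP_C[\text{good }i'\text{ present}]$, using $\bbP[i'\text{ precedes }i]=\tfrac12$ for $i'\neq i$. For the presence probability I would note that the number of present items of good $i'$ is fed by arrivals at rate at most $\lambda_{i'}$ and consists of items each perishing independently at rate $\mu_{i'}$ (selling and discarding can only shrink it further), so it is stochastically dominated by an $M/M/\infty$ queue with arrival rate $\lambda_{i'}$ and service rate $\mu_{i'}$, whose stationary occupancy is $\mathrm{Poisson}(\lambda_{i'}/\mu_{i'})$; equivalently this is the $\gamma^*\!\to\!0$ specialization of the upper bound in \Cref{cor:Pr-available-lower-upper-bounds}. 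Hence $\bbP_C[\text{good }i'\text{ present}]\le 1-\exp(-\lambda_{i'}/\mu_{i'})\le w_{i'}$, by our choice of the parameter $w_{i'}$.

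Combining the pieces, $\bbP_C[\neg\tilde R_{ij}]\le\tfrac12\sum_{i'\neq i}p_{i'j}\,w_{i'}=\tfrac{\alpha}{2\gamma_j}\sum_{i'\neq i}x^*_{i'j}\le\tfrac{\alpha}{2\gamma_j}\sum_{i'\in\calG}x^*_{i'j}\le\tfrac{\alpha}{2}$, where the equality uses $p_{i'j}=\alpha\,x^*_{i'j}/(\gamma_j w_{i'})$ and the final inequality is Constraint~\eqref{eqn:flow-constraint-buyer}. Taking complements yields $\bbP_C[\tilde R_{ij}]\ge 1-\alpha/2$, as claimed. The one step I would be most careful about is the independence/factorization claim in the second paragraph: that the present-counts seen by an arriving buyer follow the stationary distribution and are independent of the buyer's own random order and permit coin. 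This is precisely where PASTA is needed; the remaining ingredients — the union bound, the $\tfrac12$ from the random order, and the queuing bound on presence (already available via \Cref{cor:Pr-available-lower-upper-bounds}) — are routine.
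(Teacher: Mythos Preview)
Your proposal is correct and follows essentially the same route as the paper: a union bound over goods $i'\neq i$, factoring each summand into the $\tfrac12$ from the random order, the permit probability $p_{i'j}=\alpha x^*_{i'j}/(\gamma_j w_{i'})$, and a bound on the presence probability by $w_{i'}$ (via the upper bound of \Cref{cor:Pr-available-lower-upper-bounds} with $\gamma^*=0$), followed by Constraint~\eqref{eqn:flow-constraint-buyer}. The only differences are cosmetic: you are more explicit about why the factorization is valid (fresh permutation and permit coins, PASTA for the state), whereas the paper just writes down the product; and your parenthetical ``selling and discarding can only shrink it further'' should drop ``selling'' (sold items remain \emph{present} by definition), though the stochastic-dominance conclusion is unaffected since only discarding thins the arrival stream.
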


Combining \Cref{cla:Pr-reaches-and-available-lower-bound,lem:Pr-j-reaches-i}, we have the following theorem regarding the competitive ratio of \Cref{alg:main} in the multi-good problem. 

\CompetitiveMultiGoodPolicy*
\begin{proof}
Fix an instance $\calI$ of the multi-good stationary
prophet inequality problem where the seller has inventory capacity $C = 2$ and consider \Cref{alg:main} where $\mathbf{x^*} = \arg\max_\mathbf{x} \mathrm{LP}_\mathrm{off}(\calI)$ and $w_i \triangleq 1 - \exp(-\lambda_i / \mu_i)$ for all $i \in \calG$.
For any good $i \in \calG$ and buyer type $j \in \calB$, it follows from \Cref{eqn:s-ij-C,cla:Pr-reaches-and-available-lower-bound} that the expected rate at which items of good $i$ are sold to buyers of type $j$ under \Cref{alg:main} is at least
\[
    s_{ij}^2 \geq \alpha \cdot \left(1 - \frac{\alpha}{2}\right) \cdot \frac{\bbP_2 \left[ \tilde A_i \geq 1 \right]}{w_i} \cdot x_{ij}^* \geq \alpha \cdot \left(1 - \frac{\alpha}{2}\right) \cdot \frac{1 - \left(1 + \sum_{q=1}^2 \prod_{r=1}^q \frac{\lambda_i / \mu_i}{r + \alpha \cdot \frac{\lambda_i / \mu_i}{1 - \exp\left(-\lambda_i / \mu_i  \right)}} \right)^{-1}}{1 - \exp\left(-\lambda_i / \mu_i \right)} \cdot x_{ij}^*,
\]
where the second inequality holds due to \Cref{eqn:Pr-available-divided-by-m*-geq-1/2}, as the probability that an item of good $i$ is available when $i$ comes first in the ordering (i.e., $\tilde{A_i}\geq 1$) is exactly the probability that an item of good $i$ is available in the single-good problem.
For $\alpha = 3/4$, we have that
\[
    s_{ij}^2 \geq \frac{3}{4} \cdot \left(1 - \frac{3}{8} \right) \cdot \frac{4}{7} \cdot x_{ij}^* = \frac{15}{56} \cdot x_{ij}^*.
\]
(The proof of the $4/7$ lower bound is analogous to that of \Cref{cla:lower-bound-one-half}, and is omitted.)
It follows from linearity of expectation and \Cref{cor:LP-off-geq-OPT-off-LP-on-geq-OPT-on} that the expected average reward of \Cref{alg:main} for $\calI$ is at least $\nicefrac{15}{56}$ that of the optimal offline policy in expectation.
\end{proof}

\paragraph{Tiny inventory results.} Similar to the single-good setting, a direct implication of the proof of \Cref{thm:competitive-multi-good-policy} is that inventory capacity of $2$ or greater is sufficient to guarantee that \Cref{alg:main} (with the specifications mentioned) is a $\nicefrac{15}{56}$-competitive policy.

\section{Proof of \Cref{thm:1/2-competitive-single-good-policy} (impossibility result)}
\label{sec:impossibility}

In this section we prove that our single-good online policy's competitive ratio of $\nicefrac{1}{2}$ is optimal among all online policies. 
Our proof of this hardness result of \Cref{thm:1/2-competitive-single-good-policy} follows from what can be viewed as the adaptation of the oft-cited example for the tightness of the $\nicefrac{1}{2}$ competitive ratio for the classic prophet inequality problem.\footnote{The example consists of two buyers, the first with bid $1$, and the second with bid $1/\eps$ with probability $\eps$ and $0$ otherwise, for $\eps \rightarrow 0$. While the expected maximum is $2 - \eps$, no online strategy has expectation greater than $1$.}
 
\begin{proof}[Proof of \Cref{thm:1/2-competitive-single-good-policy} (impossibility result)]
We consider the following simple instance which we denote by $\mathcal{I}_\eps$. The supply rate of the seller's good is $\lambda = \eps$ and the perish rate is $\mu = 1$. There are two buyer types: the rare ``big spender'' who arrives with rate $\gamma_1 = \eps$ and bids $v_1= 1 + \nicefrac{1}{\eps}$, and the common ``miser'' who arrives with rate $\gamma_2 = \infty$ and bids $v_2 = 1$. One can trivially achieve expected average revenue $\eps$ for instance $\mathcal{I}_\eps$ online: simply sell each item to the common buyer which arrives immediately after it. On the other hand, for small $\eps$, an expected average revenue of roughly $2 \cdot \eps$ is achievable offline, as we now show.

\begin{cla}
\label{claim:expected-average-revenue-optimal-offline-tight-instance-competitive-ratio}
$\optoff(\calI_{\eps})\geq \eps + 1- \exp\left(-\nicefrac{\eps}{(1+\eps)}\right)$. 
\end{cla}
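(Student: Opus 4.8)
The plan is to exhibit a single explicit \emph{offline} policy $\pi$ whose long-run average revenue already meets the stated bound. The policy $\pi$ \emph{reserves} every item for the big spender: whenever a big spender arrives and at least one item is available, $\pi$ sells one to them at price $v_1 = 1+\nicefrac{1}{\eps}$; additionally, using its offline knowledge of perish times, $\pi$ sells to a miser (at price $v_2 = 1$) any item that is about to perish without having been sold to a big spender, an instant before that item perishes. Two observations drive the analysis. First, under $\pi$ every supplied item is sold exactly once --- to a big spender if one claims it during its lifetime, and to a miser otherwise --- so the total sale rate of $\pi$ equals the supply rate $\lambda = \eps$. Second, the number $A$ of available items under $\pi$ evolves exactly as the continuous-time Markov chain of \Cref{lem:Pr-available-exact} with parameters $\lambda = \eps$, $\mu = 1$, and $\gamma^* = \gamma_1 = \eps$: items arrive at rate $\eps$, each present item perishes at rate $1$, and each big-spender arrival (rate $\eps$) removes one available item whenever $A\geq 1$. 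The miser sales are invisible to this chain, because such a sale occurs exactly at the perish epoch of the item involved, which is already a departure event of the chain; in particular a miser sale never pre-empts a future big-spender sale.

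Given these two facts I would compute the time-average revenue of $\pi$. By the PASTA property (\Cref{pasta}), applied to the big spenders' Poisson arrivals against the independent history of the process $A$, the rate of big-spender sales equals $\gamma_1\cdot\bbP_\infty[A\geq 1] = \eps\cdot\bbP_\infty[A\geq 1]$, and hence (by the first observation) the rate of miser sales equals $\eps - \eps\cdot\bbP_\infty[A\geq 1]$. Therefore the long-run average revenue of $\pi$ is
\[
\eps\cdot\bbP_\infty[A\geq 1]\cdot\left(1+\tfrac{1}{\eps}\right) \;+\; \left(\eps - \eps\cdot\bbP_\infty[A\geq 1]\right)\cdot 1 \;=\; \bbP_\infty[A\geq 1] + \eps .
\]
Finally, the lower bound of \Cref{cor:Pr-available-lower-upper-bounds} with $C\to\infty$ (and $\lambda = \eps$, $\mu = 1$, $\gamma^* = \eps$) gives $\bbP_\infty[A\geq 1]\geq 1-\exp(-\eps/(1+\eps))$, so that $\optoff(\calI_\eps)\geq \bbP_\infty[A\geq 1]+\eps\geq \eps + 1 - \exp(-\eps/(1+\eps))$, as claimed.

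The only genuinely delicate point is the second observation: one must verify that rerouting about-to-perish items to misers neither perturbs the availability dynamics nor reduces the big-spender sale rate. This holds because a miser sale under $\pi$ happens only at an item's perish epoch, when (by offline foresight) no big spender arrives before the item would have left anyway; thus each item contributes to $A$ over exactly the same interval as if it had simply perished, $A$ is governed by the stated birth--death chain, and PASTA applies verbatim. Everything else --- well-definedness of the long-run average (the chain is a positive-recurrent birth--death process), the application of PASTA, and the arithmetic --- is routine. One could instead retain the exact value of $\bbP_\infty[A\geq 1]$ from \Cref{lem:Pr-available-exact} for a marginally sharper but messier bound, but the stated inequality is all that is needed for the impossibility half of \Cref{thm:1/2-competitive-single-good-policy}.
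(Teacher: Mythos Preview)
Your proposal is correct and follows essentially the same approach as the paper: both exhibit the identical offline policy (reserve items for the rare buyer, offload about-to-perish items to a common buyer), use PASTA to identify the big-spender sale rate as $\eps\cdot\bbP[A\geq 1]$, observe that the residual rate $\eps-\eps\cdot\bbP[A\geq 1]$ goes to misers, and then apply the lower bound of \Cref{cor:Pr-available-lower-upper-bounds} with $\gamma^*=\eps$. Your added justification that miser sales at perish epochs leave the availability chain unchanged is a point the paper leaves implicit.
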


\begin{proof}
We consider the offline policy which always permits a sale to a rare buyer (provided at least one item is available) and permits a sale to a common buyer only at the moment before an item perishes.
Note that this is indeed a valid offline policy.

By the PASTA property (\Cref{pasta}), letting $A$ denote the number of items available upon arrival of a buyer, the rate at which items of the good are sold to the rare buyer is $\eps \cdot \mathbb{P}\left[ A \geq 1 \right]$.
Since each item that is not sold to a rare buyer is sold to a common buyer who arrives just before the item perishes, the rate of selling to common buyers is the residual rate: $\eps - \eps \cdot \mathbb{P}\left[ A \geq 1 \right]$.
Therefore, the expected average revenue of this offline policy is
\[ \left(1 + \frac{1}{\eps} \right) \cdot \eps \cdot \mathbb{P}\left[ A \geq 1 \right] + 1 \cdot \left(\eps - \eps \cdot \mathbb{P}\left[ A \geq 1 \right] \right) = \eps + \mathbb{P}\left[ A \geq 1 \right]. \]
Lower bounding $\mathbb{P}\left[ A \geq 1 \right]$ as in \Cref{cor:Pr-available-lower-upper-bounds}, we have that the expected average revenue of this offline policy (and consequently of the optimal offline policy) is at least $\eps + 1 - \exp\left(- \nicefrac{\eps}{1+\eps} \right)$.
\end{proof}

Next, we show that the aforementioned trivial online policy which sells each item immediately to a common buyer and has expected average reward $\eps$, is optimal among all online policies.
\begin{cla}
\label{claim:expected-average-revenue-optimal-online-tight-instance-competitive-ratio}
$\opton(\calI_{\eps})\leq \eps.$
\end{cla}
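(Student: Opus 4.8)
The plan is to bound $\opton(\calI_\eps)$ from above by the online LP relaxation and observe that the online-specific constraint~\eqref{ec20-constraint} alone already pins the LP value at $\eps$ for this instance. By \Cref{cor:RB-geq-OPT} (or \Cref{cor:LP-off-geq-OPT-off-LP-on-geq-OPT-on}) it suffices to show $\mathrm{RB}_\mathrm{on}(\calI_\eps)\le\eps$. I would let $x_1$ (resp.\ $x_2$) denote the rate at which a feasible solution sells the single good to the big spender (resp.\ the miser). Since $\gamma_2=\infty$ (equivalently, in the limit $\gamma_2\to\infty$), the constraints involving $\gamma_2$ are vacuous, so the only relevant constraints are the seller flow constraint~\eqref{eqn:flow-constraint-seller}, nonnegativity~\eqref{eqn:positivity}, and the online constraint~\eqref{ec20-constraint} applied to the big spender, which, using $\lambda=\mu=\gamma_1=\eps$, reads $x_1\le\eps\,(\eps-x_1-x_2)$.

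Rearranging this last inequality gives $(1+\eps)\,x_1+\eps\,x_2\le\eps^2$, i.e.\ $\tfrac{1+\eps}{\eps}\,x_1+x_2\le\eps$. The LP objective is exactly $v_1x_1+v_2x_2=(1+\tfrac1\eps)\,x_1+x_2=\tfrac{1+\eps}{\eps}\,x_1+x_2$, so it is at most $\eps$; hence $\mathrm{RB}_\mathrm{on}(\calI_\eps)\le\eps$ and therefore $\opton(\calI_\eps)\le\eps$, as claimed.

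I expect essentially no technical obstacle here, but the point worth emphasizing is \emph{which} bound one invokes: the naive estimate---items are produced at rate $\eps$, any given one is present with probability $1-e^{-\eps}\le\eps$, so the big spender is served at rate at most $\eps\cdot\eps$ and earns at most $v_1\cdot\eps^2\approx\eps$---only yields $\opton(\calI_\eps)\lesssim 2\eps$, which is too weak to recover the $\tfrac12$ bound. The online constraint~\eqref{ec20-constraint} is stronger precisely because it reflects that every item sold to a miser is no longer available to a big spender, via the steady-state identity $\E[A]=(\lambda-x_1-x_2)/\mu$ for the number of available items, forcing the two revenue streams to trade off one-for-one. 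Combined with \Cref{claim:expected-average-revenue-optimal-offline-tight-instance-competitive-ratio}, letting $\eps\to0^+$ then gives $\opton(\calI_\eps)/\optoff(\calI_\eps)\le\eps/\bigl(\eps+1-\exp(-\eps/(1+\eps))\bigr)\to\tfrac12$, completing the impossibility direction of \Cref{thm:1/2-competitive-single-good-policy}.
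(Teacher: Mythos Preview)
Your proof is correct and essentially identical to the paper's: both upper bound $\opton(\calI_\eps)$ by the online LP and use constraint~\eqref{ec20-constraint} for the big spender to show that the objective $(1+\tfrac1\eps)x_1+x_2$ is at most $\eps$. One slip: you write ``$\lambda=\mu=\gamma_1=\eps$'', but the instance has $\mu=1$; the constraint $x_1\le\eps(\eps-x_1-x_2)$ you derive is nonetheless the correct one for $\mu=1$ (with $\mu=\eps$ it would read $x_1\le\eps-x_1-x_2$ instead), so this is only a typo in the exposition.
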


\begin{proof}
By \Cref{cor:LP-off-geq-OPT-off-LP-on-geq-OPT-on}, $\opton(\calI_\eps) \leq \mathrm{LP}_\mathrm{on}(\calI_\eps)$. For instance $\mathcal{I}_\eps$, $\mathrm{LP}_\mathrm{on}$ is
\begin{align*}
    \max \quad & \left(1 + 1/\eps \right) \cdot x_1 + x_2 \\
    \textrm{s.t.}\quad & x_1 + x_2 \leq \eps \\
    & 1/\eps \cdot x_1 \leq \min\left\{ 1 - \exp(-\eps), \eps - (x_1 + x_2) \right\} \\
    & x_1, \ x_2 \geq 0.
\end{align*}
Fix some optimal solution $\{x_1^*, \ x_2^*\}$.
The value of this solution is
\[ \left( 1 + \frac{1}{\eps}\right) \cdot x^*_1 + x^*_2 \leq x_1^* + x^*_2 + \eps - (x^*_1 + x^*_2) = \eps. \]
Therefore, the expected average revenue of the optimal online policy is at most $\eps$.
\end{proof}

Combining \Cref{claim:expected-average-revenue-optimal-offline-tight-instance-competitive-ratio,claim:expected-average-revenue-optimal-online-tight-instance-competitive-ratio} and taking $\eps$ to zero, the theorem follows.
\end{proof}

\appendix
\section*{APPENDIX}

\section{Omitted proofs of \Cref{sec:prelims}}\label{appendix:prelims}

In this section, we provide proofs of lemmas deferred from \Cref{sec:prelims}, starting with results following from the queuing theory literature.

\PrAvailableExact*
\begin{proof}
The number of items of good $i$ available under an online policy which sells any available item to a buyer is captured by the CTMC in \Cref{fig:MC-birth-death-chain}.
This is a birth-death chain on state space $\{0, 1, \dots, C\}$ with transition rates $\alpha_q = \lambda_i$ from state $q-1$ to state $q$ and $\beta_q = q \cdot \mu_i + \gamma^*$ from state $q$ to state $q-1$, for all $1 \leq q \leq C$ as in \Cref{fig:MC-birth-death-chain}.

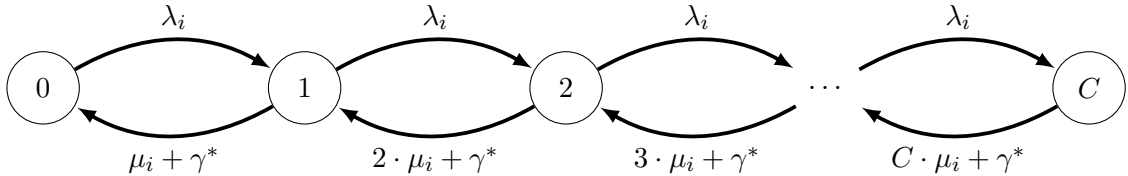
\begin{figure}[h]
\begin{center}
    \begin{tikzpicture}[font=\sffamily]
    \node[state,
        text=black,
        draw=black] (0) {$0$};
    \node[state,
        right=2.5cm of 0,
        text=black,
        draw=black] (1) {$1$};
    \node[state,
        right=2.5cm of 1,
        text=black,
        draw=black] (2) {$2$};
    \node[state,
        right=2.5cm of 2,
        text=black,
        draw=white] (3) {$\dots$};
    \node[state,
        right=2.5cm of 3,
        text=black,
        draw=black] (4) {$C$};
    
    \draw[every loop,
        auto=left,
        line width=0.5mm,
        >=latex,
        draw=black,
        fill=black]
        (0) edge[bend left, auto=left]  node {$\lambda_i$} (1)
        (1) edge[bend left, auto=left] node {$\mu_i + \gamma^*$} (0)
        (1) edge[bend left, auto=left]  node {$\lambda_i$} (2)
        (2) edge[bend left, auto=left] node {$2 \cdot \mu_i + \gamma^*$} (1)
        (2) edge[bend left, auto=left]  node {$\lambda_i$} (3)
        (3) edge[bend left, auto=left] node {$3 \cdot \mu_i + \gamma^*$} (2)
        (3) edge[bend left, auto=left]  node {$\lambda_i$} (4)
        (4) edge[bend left, auto=left] node {$C \cdot \mu_i + \gamma^*$} (3);
   \end{tikzpicture}
\end{center}
\vspace{-0.25cm}
\caption{CTMC of the number of items of good $i$ available under a policy selling available items to buyers which arrive at rate $\gamma^*$ when the seller has inventory capacity $C$.}
\label{fig:MC-birth-death-chain}
\end{figure}

From \cite[Section 3.1]{bolch2006queueing}, the stationary probability of having no items available (i.e. being at state 0) is
\[ \bbP_C \left[ A_i = 0 \right] = \left(1 + \sum_{q=1}^C \prod_{r=1}^q \frac{\alpha_r}{\beta_r}\right)^{-1} = \left(1 + \sum_{q=1}^C \prod_{r=1}^q \frac{\lambda_i}{r \cdot \mu_i + \gamma^*}\right)^{-1}. \qedhere \]
\end{proof}

\PrAvailableBounds*
\begin{proof}
The lower and upper bounds follow immediately from \Cref{lem:Pr-available-exact} and the fact that $r \cdot \mu_i \leq r \cdot \mu_i + \gamma^* \leq r \cdot (\mu_i + \gamma^*)$ for all $r \geq 1$.
\end{proof}

\section{Bounds implied by prior work}\label{appendix:priorwork}

In \cite{aouad2020dynamic}, Aouad and Sarita{\c{c}} present an algorithm for their dynamic matching problem, which they prove yields a $\frac{1}{4}(1-\nicefrac{1}{e})$-approximation of the optimal online algorithm for their problem. Essentially, their approach reduces their problem to the multi-good prophet inequality problem, for which they provide a $(1-\nicefrac{1}{e})$-approximation of the optimal online policy. We refer to \cite{aouad2020dynamic} for more details. In \Cref{sec:limitations-prior-LPs}, we show that this bound is inherent to their approach, since the LP benchmark they rely on cannot be used to prove a better than $1-\nicefrac{1}{e}$ approximation. 

\subsection{Competitive policies}

As stated before, the approach of \citet{collina2020dynamic}, which gives a $\nicefrac{1}{8}$-competitive algorithm for their problem, can be shown to have an improved bound for our problem, as the following lemma asserts.

\begin{lem}
\label{lem:1/3-competitive-single-good-WINE-analysis-WINE-LP}
Via the proof strategy of \citet{collina2020dynamic}, the algorithm from \cite{collina2020dynamic} for the single-good stationary prophet inequality problem can be shown to be a $\nicefrac{1}{3}$-competitive posted-price policy.
\end{lem}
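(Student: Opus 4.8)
The plan is to specialize the analysis of \citet{collina2020dynamic} to the single good, keeping their LP benchmark $\mathrm{RB}_\mathrm{off}$ and their parameters: $\alpha = 1$, $\mathbf{x}^* \triangleq \arg\max_\mathbf{x}\mathrm{RB}_\mathrm{off}(\calI)$, and $w \triangleq \min\{1,\lambda\}$ (normalizing $\mu = 1$ as in \Cref{sec:single-good}). I would first observe that this instantiation of \Cref{alg:main} is a posted-price policy, by the same local-exchange argument as in \Cref{obs:alg-posted-price}: Constraints~\eqref{eqn:flow-constraint-buyer} and \eqref{eqn:wine-20-constraint} give the per-variable bound $x_j \le \gamma_j\min\{1,\lambda\} = \gamma_j w$, and together with the single budget constraint~\eqref{eqn:flow-constraint-seller} and $v_1 > \dots > v_m$ an optimal solution has the greedy form $x_j^* = \gamma_j w$ for $j \le \ell$, $x_{\ell+1}^* \in [0,\gamma_{\ell+1}w]$, and $x_j^* = 0$ for $j > \ell+1$; hence $p_j = x_j^*/(\gamma_j w)$ yields the posted price $(v_{\ell+1}, p_{\ell+1})$. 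Moreover the derivation of \Cref{eqn:s-ij-C-single-good} only uses $0 \le x_j^* \le \gamma_j w$, so it applies here as well, and the sale rate of this (unbounded-inventory) policy to buyers of type $j$ is $s_j = \tfrac{\bbP_\infty[A \ge 1]}{w}\cdot x_j^*$. It thus suffices to prove $\bbP_\infty[A \ge 1]/w \ge 1/3$.

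To do so, I would bound the permitted-arrival rate $\gamma^* = \sum_j \gamma_j p_j = \tfrac1w\sum_j x_j^* \le \lambda/w$ using Constraint~\eqref{eqn:flow-constraint-seller}, and then invoke monotonicity of $\bbP_C[A\ge1]$ in $C$ (immediate from the formula of \Cref{lem:Pr-available-exact}) to reduce to capacity $C = 1$, whose two-state chain gives $\bbP_\infty[A\ge1] \ge \bbP_1[A\ge1] = \tfrac{\lambda}{\lambda + 1 + \gamma^*} \ge \tfrac{\lambda}{\lambda + 1 + \lambda/w}$, i.e.\ $\tfrac{\bbP_\infty[A\ge1]}{w} \ge \tfrac{\lambda}{\lambda(1+w) + w}$. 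A case split on $\lambda$ then finishes it: if $\lambda \le 1$ then $w = \lambda$ and the right-hand side is $\tfrac{1}{2+\lambda} \ge \tfrac13$; if $\lambda > 1$ then $w = 1$ and it is $\tfrac{\lambda}{2\lambda+1}$, which is increasing in $\lambda$ and equals $\tfrac13$ at $\lambda = 1$. Either way $\bbP_\infty[A\ge1]/w \ge \tfrac13$, so $s_j \ge \tfrac13 x_j^*$ for every $j \in \calB$.

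The proof then closes by linearity of expectation: the policy's average revenue is $\sum_j v_j s_j \ge \tfrac13\sum_j v_j x_j^* = \tfrac13\,\mathrm{RB}_\mathrm{off}(\calI) \ge \tfrac13\,\optoff(\calI)$ by \Cref{cor:RB-geq-OPT}. I do not expect a real obstacle here; the only delicate points are verifying that the greedy/posted-price structure of \Cref{obs:alg-posted-price} persists when the per-variable box constraint comes from $\min\{1,\lambda\}$ rather than $1 - e^{-\lambda}$, and arranging the case split so that the coarse estimates $\gamma^* \le \lambda/w$ and $\bbP_\infty[A\ge1] \ge \bbP_1[A\ge1]$ combine cleanly. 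It is precisely the coarseness of these two estimates (a capacity-one, exponential-free availability bound) that pins the guarantee at $\tfrac13$, the remaining slack being what the sharper queuing analysis elsewhere in the paper exploits to reach $1 - \tfrac1{e-1}$.
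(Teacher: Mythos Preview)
Your proof is correct and yields the identical intermediate bound $\bbP[A\ge 1]\ge \lambda/(\lambda+1+\gamma^*)$ that the paper obtains, but you get there by a different route. The paper's proof reproduces the actual argument of \citet{collina2020dynamic}: it introduces the racing Poisson events $E^1$ (item arrival), $E^2$ (permitted buyer arrival), $E^3$ (a rate-$1$ perish clock) and uses time-reversibility to argue that a sale happens whenever $E^1$ wins the race among $E^1,E^2,E^3$ just before $E^4_j$; this gives $s_j\ge \lambda_{E^4_j}\cdot \lambda/(\lambda+\lambda_{E^2}+1)$, and then the same case analysis on $w=\min\{1,\lambda\}$ yields $\alpha/(2+\alpha)=1/3$. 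You instead invoke \Cref{lem:Pr-available-exact}, truncate to capacity $C=1$ by monotonicity, and read off $\bbP_1[A\ge 1]=\lambda/(\lambda+1+\gamma^*)$ directly from the two-state chain. The paper itself remarks (immediately after \Cref{cor:0.435-competitive-single-good-WINE-analysis-our-LP}) that these two derivations coincide: the Collina et al.\ event argument is exactly the $C=1$ stationary computation in disguise. So your approach is mathematically equivalent and arguably cleaner given the paper's queuing machinery; the paper's version has the advantage of literally exhibiting ``the proof strategy of \citet{collina2020dynamic}'' that the lemma statement refers to, which is the point of the lemma.
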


\begin{proof}
Fix an instance $\calI$ of the single-good stationary prophet inequality problem.
Note that in the single-good problem, the algorithm from \cite{collina2020dynamic} reduces to \Cref{alg:main} with $\alpha=1$ and $\mathbf{x}^* = \arg \max_\mathbf{x} \mathrm{RB}_\mathrm{off}(\calI)$ and $w \triangleq \min\{1, \lambda\}$.
By a similar argument as in the proof of \Cref{obs:alg-posted-price}, this definition of $w$ guarantees that $p_j$ is a valid probability for all $j \in \calB$.

In order to lower bound the rate $s_j$ at which buyers of type $j$ are sold items of the good, the authors of \cite{collina2020dynamic} consider the following events:
\begin{itemize}
    \item $E^1$: An item of the good arrives.
    \item $E^2$: A buyer of any type arrives and the seller permits a sale.
    \item $E^3$: When there is exactly one item of the good available, this item perish. If there are more/fewer than one item available, $E_3$ follows an independent Poisson clock of rate $1$.
    \item $E_j^4$: A buyer of type j arrives and the seller permits a sale.
\end{itemize}
The authors of \cite{collina2020dynamic} note that a buyer of type $j$ is sold an item of the good if when event $E_j^4$ occurs, the most recent of events $E^1, E^2, E^3$ to have occurred prior to $E_j^4$ is $E^1$.
If this is indeed the case, that means that when the buyer of type $j$ arrives and the seller permits the sale, there is at least one item of the good available.
Notice that these four events are independent Poisson processes and therefore the rate at which buyers of type $j$ are sold items of the good is simply the rate at which event $E_j^4$ occurs times the probability that $E_1$ was the most recent of events $E^1, E^2, E^3$, which by the time reversibility of Poisson processes, is equal to the probability that $E_1$ occurs before either $E_2$ or $E_3$.
Therefore, we have
\[ s_j \geq \lambda_{E_j^4} \cdot \frac{\lambda_{E^1}}{\lambda_{E^1} + \lambda_{E^2} + \lambda_{E^3}}, \]
where we let $\lambda_E$ denote the rate at which event $E$ occurs.
Clearly, $\lambda_{E^1} = \lambda$, $\lambda_{E^3} = 1$, and for any buyer type $j \in \calB$, $\lambda_{E_j^4} = \gamma_j \cdot \alpha \cdot \frac{x_j^*}{\gamma_j \cdot w} = \frac{\alpha}{w} \cdot x_j^*$.
Also,
\[ \lambda_{E^2} = \sum_{j \in \calB} \gamma_j \cdot \alpha \cdot \frac{x_j^*}{\gamma_j \cdot w} = \alpha \sum_{j \in \calB} \frac{x_j^*}{w} \leq \alpha \cdot \frac{\lambda}{w}. \]
As a result,
\begin{equation}
\label{eq:s-ij-single-good-WINE-analysis-WINE-LP}
    s_j \geq \frac{\alpha}{w} \cdot \frac{\lambda}{\lambda + \alpha \cdot \frac{\lambda}{w} + 1} \cdot x_j^* = \alpha \cdot \frac{\lambda}{\min\{1, \lambda\} \cdot \left(\lambda + 1 \right) + \alpha \cdot \lambda} \cdot x_j^* \geq \frac{\alpha}{2 + \alpha} \cdot x_j^*,
\end{equation}
where the last inequality follows from the following claim.

\begin{cla}
For all $x \in \R_{> 0}$ and $\alpha \in [0, 1]$,
\begin{equation}
\label{eq:lower-bound-1/(2+alpha)}
    \frac{x}{\min\{1, x\} \cdot \left(x + 1 \right) + \alpha \cdot x} \geq \frac{1}{2 + \alpha}.
\end{equation}
\end{cla}

\begin{proof}
Let $f(x)$ be the left hand side of \Cref{eq:lower-bound-1/(2+alpha)}.
First suppose $\min\{1, x\} = 1$ (i.e., $x \geq 1$).
Under this assumption,
\[ f(x) = \frac{x}{x \cdot (1 + \alpha) + 1} = \frac{1}{1 + \alpha + \frac{1}{x}}, \]
which is increasing in $x$.
Therefore, the minimum is achieved at $x = 1$, in which case we have $f(1) = 1 / (2 + \alpha)$.
If instead $\min\{1, x\} = 1$ (i.e., $x \leq 1$),
\[ f(x) = \frac{x}{x \cdot \left(x + 1 \right) + \alpha \cdot x} = \frac{1}{x + 1 + \alpha} \]
which is decreasing in $x$.
Therefore, the minimum is again achieved at $x = 1$, completing the claim.
\end{proof}

As $\alpha / (2 + \alpha)$ is increasing in $\alpha$, this quantity is maximized for $\alpha = 1$, in which case we have $s_j \geq \nicefrac{1}{3} \cdot x_j^*$.
Since this holds for all buyer types $j \in \calB$, by linearity of expectation and \Cref{cor:RB-geq-OPT}, the expected average reward of the algorithm from \cite{collina2020dynamic} for $\calI$ is at least a third that of the optimal offline policy in expectation.
Furthermore, since $\alpha = 1$, an analogous proof of \Cref{obs:alg-posted-price} for $\mathbf{x}^* = \arg\max_\mathbf{x} \mathrm{RB}_\mathrm{off}(\calI)$ and $w \triangleq \min\{1, \lambda \}$ implies that this algorithm is a posted-price policy.
\end{proof}

A corollary of \Cref{lem:1/3-competitive-single-good-WINE-analysis-WINE-LP} is that when using the solution to $\mathrm{LP}_\mathrm{off}$ instead of $\mathrm{RB}_\mathrm{off}$ (and the corresponding definition of $w$), the techniques from \cite{collina2020dynamic} immediately yield a $0.435$ competitive ratio, although as we show in \Cref{sec:better-bounds}, this is still not the best possible among all online policies.

\begin{cor}
\label{cor:0.435-competitive-single-good-WINE-analysis-our-LP}
Via the proof strategy of \citet{collina2020dynamic}, the competitive ratio of the algorithm from \cite{collina2020dynamic} improves to $0.435$ when $\mathbf{x}^* = \mathrm{LP}_\mathrm{off}(\calI)$ and $w \triangleq 1 - \exp(-\lambda)$.
\end{cor}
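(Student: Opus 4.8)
The plan is to re-run the proof of \Cref{lem:1/3-competitive-single-good-WINE-analysis-WINE-LP} essentially verbatim, changing only the parameter choices to $\alpha = 1$, $\mathbf{x}^* = \arg\max_\mathbf{x}\mathrm{LP}_\mathrm{off}(\calI)$ and $w \triangleq 1 - \exp(-\lambda)$, and then replacing the final elementary inequality \eqref{eq:lower-bound-1/(2+alpha)} by a sharper one tailored to this $w$. First I would check that these choices yield a valid (posted-price) policy: since $\mathbf{x}^*$ is feasible for $\mathrm{LP}_\mathrm{off}$, Constraint~\eqref{eqn:pasta-constraint} gives $x_j^* \le \gamma_j\cdot(1-\exp(-\lambda)) = \gamma_j\cdot w$, so each $p_j = x_j^*/(\gamma_j w)$ lies in $[0,1]$, and with $\alpha = 1$ the argument of \Cref{obs:alg-posted-price}(1) shows the resulting policy is a posted-price policy.

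Next, the coupling argument of \cite{collina2020dynamic} through the four independent Poisson processes $E^1, E^2, E^3, E^4_j$ and the time-reversibility of Poisson processes goes through unchanged, yielding $s_j \ge \lambda_{E_j^4}\cdot \frac{\lambda_{E^1}}{\lambda_{E^1}+\lambda_{E^2}+\lambda_{E^3}}$, now with $\lambda_{E^1}=\lambda$, $\lambda_{E^3}=\mu=1$, $\lambda_{E^4_j} = x_j^*/w$, and $\lambda_{E^2}=\sum_{\ell}x_\ell^*/w \le \lambda/w$ by Constraint~\eqref{eqn:flow-constraint-seller}. Substituting and clearing denominators gives
\[
    s_j \;\ge\; \frac{\lambda}{w\cdot(\lambda+1)+\lambda}\cdot x_j^* \;=\; \frac{\lambda}{(1-\exp(-\lambda))\cdot(\lambda+1)+\lambda}\cdot x_j^*.
\]

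The remaining (and only nontrivial) step is to lower bound the scalar $f(\lambda) \triangleq \frac{\lambda}{(1-\exp(-\lambda))(\lambda+1)+\lambda}$ over $\lambda > 0$; equivalently, to upper bound $h(\lambda) \triangleq (1-\exp(-\lambda))(1+1/\lambda) = 1/f(\lambda) - 1$. I would differentiate $h$: setting $h'(\lambda)=0$ and simplifying reduces to the transcendental equation $e^{\lambda} = \lambda^2+\lambda+1$, which, besides $\lambda=0$, has a unique positive root $\lambda^\star$ (this can be pinned down via convexity of $e^\lambda - \lambda^2 - \lambda - 1$, or just sign analysis). Since $h(0^+) = h(\infty) = 1$ and $h$ is increasing then decreasing, its maximum is $h(\lambda^\star)$, and a direct numerical evaluation gives $\lambda^\star \approx 1.79$ and $h(\lambda^\star) \approx 1.298 < 1/0.435 - 1$, hence $f(\lambda) \ge 0.435$ for all $\lambda > 0$. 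Combining $s_j \ge 0.435\cdot x_j^*$ over all $j \in \calB$ with linearity of expectation and \Cref{cor:LP-off-geq-OPT-off-LP-on-geq-OPT-on} yields the claimed $0.435$-competitive ratio. The main obstacle is precisely this single-variable optimization: one must argue that $e^\lambda = \lambda^2+\lambda+1$ has a unique positive solution, that $h$ attains its maximum there, and then verify the (tight-looking) numerical bound with enough precision; everything else is a routine re-tracing of \cite{collina2020dynamic}'s analysis with the new $w$.
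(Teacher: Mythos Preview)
Your proposal is correct and follows essentially the same route as the paper: re-run the \cite{collina2020dynamic} four-event coupling with the new $w=1-\exp(-\lambda)$ and $\alpha=1$, reduce to the single-variable optimization, obtain the critical-point equation $e^\lambda=\lambda^2+\lambda+1$, identify the unique positive root $\lambda^\star\approx 1.793$, and numerically verify the $0.435$ bound. The only cosmetic difference is that you maximize $h=1/f-1$ whereas the paper minimizes $f$ directly (keeping $\alpha$ free until the end). One small caveat: your suggested ``convexity of $e^\lambda-\lambda^2-\lambda-1$'' does not hold near $0$ (the second derivative is $e^\lambda-2$), so uniqueness should instead be argued via your alternative sign analysis or, as the paper does, by showing $\sum_{k\ge 1}\lambda^k/(k+2)!$ is strictly increasing.
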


\begin{proof}
Fix an instance $\calI$ of the single-good stationary prophet inequality problem.
Note that the algorithm from \cite{collina2020dynamic} with $\mathbf{x}^* = \arg\max_\mathbf{x} \mathrm{LP}_\mathrm{off}(\calI)$ and $w \triangleq 1 - \exp(-\lambda)$ is exactly \Cref{alg:main}, and we can lower bound the rate at which items of the good are sold to buyers of type $j \in \calB$ as in \Cref{lem:1/3-competitive-single-good-WINE-analysis-WINE-LP}.
From \Cref{eq:s-ij-single-good-WINE-analysis-WINE-LP} we have
\[ s_j = \alpha \cdot \frac{\lambda}{\left(1 - \exp(-\lambda)\right) \cdot (\lambda + 1) + \alpha \cdot \lambda} \cdot x_j^*. \]

In order to show that the right-hand expression above is at least $0.435 \cdot x_j^*$, we hold $\alpha$ fixed, let $f(\alpha, \lambda) = \alpha \cdot \lambda / \left( \left(1 - \exp(-\lambda)\right) \cdot (\lambda + 1) + \alpha \cdot \lambda \right)$, and consider the derivative of $f$ with respect to $\lambda$:
\[ \frac{\partial f(\alpha, \lambda)}{\partial \lambda} = \alpha \cdot \frac{e^\lambda \left(e^\lambda - 1 - \lambda - \lambda^2 \right)}{\left(1 + \lambda - e^\lambda \left(1 + \lambda + \alpha \cdot \lambda \right) \right)^2}. \]
The roots of $\partial f(\alpha, \lambda) / \partial \lambda$ are simply the roots of $e^\lambda - 1 - \lambda - \lambda^2$, one of which is trivially $\lambda_1^* = 0$.
In order to determine the other root(s), of which we show there is in fact only one, we simplify using the Taylor expansion of $e^\lambda$:
\[ e^\lambda - 1 - \lambda - \lambda^2 = \sum_{k=2}^\infty \frac{\lambda^k}{k!} - \lambda^2 = \lambda^2 \cdot \left( \sum_{k=2}^\infty \frac{\lambda^{k-2}}{k!} - 1 \right) = \lambda^2 \left( \sum_{k=3}^\infty \frac{\lambda^{k-2}}{k!} - \frac{1}{2} \right) = \lambda^2 \cdot \left( \sum_{k=1}^\infty \frac{\lambda^k}{(k+2)!} - \frac{1}{2} \right). \]
Clearly any other root $\lambda^* \neq 0$ satisfies
\[ \sum_{k=1}^\infty \frac{(\lambda^*)^k}{(k+2)!} = \frac{1}{2}, \]
and since the left-hand side is strictly increasing for $\lambda^* \geq 0$, there is exactly one value that satisfies this equality, which we denote by $\lambda_2^*$.
Numerically, we find $\lambda_2^* \approx 1.793$.

Therefore for all $\lambda > 0$, we have $f(\alpha, \lambda) \geq f(\alpha, \lambda_2^*)$.
The maximum of $f(\alpha, \lambda_2^*)$ for $\alpha \in [0, 1]$ is achieved at the right boundary, and we have $f(1, \lambda_2^*) \geq 0.435$.
Consequently, for all buyer types $j \in \calB$, $s_j \geq 0.435 \cdot x_j^*$, and the competitive ratio follows.
\end{proof}

Although at first glance they may seem quite different, the techniques from \cite{collina2020dynamic} detailed in \Cref{lem:1/3-competitive-single-good-WINE-analysis-WINE-LP} can actually be related to our proof of \Cref{thm:1/2-competitive-single-good-policy} by the stationary distribution of the CTMC that we analyze.
Indeed, the lower bound on the probability that an item is available in the single-good problem implied by the analysis from \cite{collina2020dynamic} simply captures the dynamics of this CTMC when the seller's inventory capacity is constrained to $1$.
From \Cref{lem:Pr-available-exact}, the probability that an item is available under the algorithm from \cite{collina2020dynamic} when $C = 1$ is at least
\[ 1 - \left( 1 + \frac{\lambda}{1 + \alpha \cdot \frac{\lambda}{w}} \right)^{-1} = 1 - \frac{1 + \alpha \cdot \frac{\lambda}{w}}{\lambda + 1 + \alpha \cdot \frac{\lambda}{w}} = \frac{\lambda}{\lambda + 1 + \alpha \cdot \frac{\lambda}{w}}. \]
This exactly matches the lower bound on the probability that event $E^1$ occurred most recently of events $E^1, E^2, E^3$ from \Cref{eq:s-ij-single-good-WINE-analysis-WINE-LP}, which \citet{collina2020dynamic} use to bound the probability that one or more items are available.
In contrast, by considering the dynamics of the CTMC with capacity of just $2$ or greater, our analysis enables us to improve the competitive ratio of the algorithm from \cite{collina2020dynamic} from $\nicefrac{1}{3}$ to $(1-\nicefrac{1}{(e-1)}) \approx 0.418$, as in the following lemma, and optimally to $\nicefrac{1}{2}$ in \Cref{thm:1/2-competitive-single-good-policy} when combined with our tighter benchmark.

\begin{lem}
\label{lem:WINE-alg-0.418-competitive}
The algorithm from \cite{collina2020dynamic} for the single-good stationary prophet inequality problem is a $(1-\nicefrac{1}{(e-1)})$-competitive posted-price policy.
\end{lem}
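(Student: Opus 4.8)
The plan is to reuse the queuing-theoretic machinery behind \Cref{thm:1/2-competitive-single-good-policy}, but applied with the parameters of \cite{collina2020dynamic} and, crucially, with the seller's true \emph{unbounded} inventory rather than the capacity-one relaxation implicit in \Cref{lem:1/3-competitive-single-good-WINE-analysis-WINE-LP}. Normalizing $\mu=1$, the algorithm of \cite{collina2020dynamic} is \Cref{alg:main} with $\alpha=1$, $\mathbf{x}^*\in\arg\max_\mathbf{x}\mathrm{RB}_\mathrm{off}(\calI)$, and $w\triangleq\min\{1,\lambda\}$; by the argument already given in the proof of \Cref{lem:1/3-competitive-single-good-WINE-analysis-WINE-LP}, each $p_j$ is a valid probability and the policy is a posted price. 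As in the derivation of \eqref{eqn:s-ij-C-single-good}, the permitted-sale arrivals form a Poisson process of rate $\gamma^*=\sum_j x_j^*/w\le\lambda/w$ (using \eqref{eqn:flow-constraint-seller}), and by PASTA the per-type sale rate is $s_j=(\bbP_\infty[A\ge1]/w)\,x_j^*$, where $A$ denotes the number of available items on a buyer's arrival. Hence, by linearity of expectation and \Cref{cor:RB-geq-OPT}, it suffices to show $\bbP_\infty[A\ge1]/w\ge 1-\nicefrac{1}{(e-1)}$.

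I would prove this by invoking \Cref{lem:Pr-available-exact} with $C=\infty$: the quantity $\bbP_\infty[A\ge1]=1-\bigl(1+\sum_{q\ge1}\prod_{r=1}^q\tfrac{\lambda}{r+\gamma^*}\bigr)^{-1}$ is monotone decreasing in $\gamma^*$, so it suffices to lower-bound it at $\gamma^*=\lambda/w$, and I would then split on the definition of $w$. If $\lambda\le1$ then $w=\lambda$ and $\gamma^*=1$, so $\prod_{r=1}^q\tfrac{\lambda}{r+1}=\tfrac{\lambda^q}{(q+1)!}$ and $1+\sum_{q\ge1}\tfrac{\lambda^q}{(q+1)!}=\tfrac{e^\lambda-1}{\lambda}$, giving $\bbP_\infty[A\ge1]/w=h(\lambda)$ where $h(\lambda)\triangleq\tfrac1\lambda-\tfrac1{e^\lambda-1}$. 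I would show $h$ is decreasing on $(0,\infty)$: its derivative is $\le0$ because $(e^\lambda-1)^2\ge\lambda^2 e^\lambda$, i.e.\ $e^\lambda-1\ge\lambda e^{\lambda/2}$, which itself follows since $\tfrac{d}{d\lambda}\bigl(e^\lambda-1-\lambda e^{\lambda/2}\bigr)=e^{\lambda/2}\bigl(e^{\lambda/2}-1-\tfrac\lambda2\bigr)\ge0$ and the expression vanishes at $\lambda=0$. Thus on $(0,1]$ the infimum of $h$ is $h(1)=1-\nicefrac{1}{(e-1)}=\nicefrac{(e-2)}{(e-1)}$ (and $h(\lambda)\to\tfrac12$ as $\lambda\to0^+$, consistent with the claim).

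If instead $\lambda\ge1$ then $w=1$ and $\gamma^*=\lambda$, so $\bbP_\infty[A\ge1]/w=1-\bigl(1+\sum_{q\ge1}\prod_{r=1}^q\tfrac{\lambda}{r+\lambda}\bigr)^{-1}$; since each factor $\tfrac{\lambda}{r+\lambda}=(1+r/\lambda)^{-1}$ is increasing in $\lambda$, so is the whole expression, hence it is at least its value at $\lambda=1$, which (using $\sum_{q\ge1}\tfrac1{(q+1)!}=e-2$) equals $1-\nicefrac{1}{(e-1)}$ again. Combining the two cases gives $\bbP_\infty[A\ge1]/w\ge 1-\nicefrac{1}{(e-1)}$ for all $\lambda>0$, and the lemma follows.

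\emph{Expected main obstacle.} The only genuinely non-routine step is the single-variable monotonicity of $h(\lambda)=\tfrac1\lambda-\tfrac1{e^\lambda-1}$ and the clean reduction of its derivative's sign to the elementary inequality $e^\lambda-1\ge\lambda e^{\lambda/2}$; the $\lambda\ge1$ case is immediate from term-by-term monotonicity, and replacing $\gamma^*$ by $\lambda/w$ is just monotonicity of a birth--death stationary distribution in its death rates. One should also verify that the two piecewise bounds agree at $\lambda=1$ (they do, both equal to $1-\nicefrac{1}{(e-1)}$ with $w=1$), so that the overall bound is continuous at the boundary between the two regimes.
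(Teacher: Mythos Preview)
Your proposal is correct and follows essentially the same approach as the paper's proof: both substitute the worst-case $\gamma^*=\lambda/w$ into the exact stationary formula of \Cref{lem:Pr-available-exact}, split on whether $\lambda\le1$ or $\lambda\ge1$, reduce the first case to the monotonicity of $h(\lambda)=\tfrac{1}{\lambda}-\tfrac{1}{e^\lambda-1}$ via the inequality $e^\lambda-1\ge\lambda e^{\lambda/2}$, and handle the second case by term-by-term monotonicity in $\lambda$ of the factors $\tfrac{\lambda}{r+\lambda}$. The only cosmetic difference is how the inequality $e^\lambda-1\ge\lambda e^{\lambda/2}$ is verified---you use a derivative argument reducing to $e^{x}\ge 1+x$, while the paper compares Taylor coefficients via $2^k\ge k+1$; both are valid and equally elementary.
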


\begin{proof}
Fix an instance $\calI$ of the single-good stationary prophet inequality problem.
Recall from \Cref{lem:1/3-competitive-single-good-WINE-analysis-WINE-LP} that in the single-good stationary prophet inequality problem, the algorithm from \cite{collina2020dynamic} reduces to \Cref{alg:main} with $\mathbf{x^*} = \arg\max_\mathbf{x} \mathrm{RB}_\mathrm{off}(\calI)$ and $w \triangleq \min\{1, \lambda\}$.

By Constraint~\ref{eqn:flow-constraint-seller}, the arrival rate of buyers to whom the seller permits a sale is $\gamma^* \leq \lambda / w = \lambda / \min\{1, \lambda\}$, and therefore by \Cref{cor:Pr-available-lower-upper-bounds} we have the following, for any buyer type $j \in \calB$:
\begin{equation}
\label{eqn:Pr-available-divided-by-m*-geq-e-2-divided-by-e-1}
    s_j = \frac{\bbP\left[ A \geq 1 \right]}{w} \cdot x_j^* \geq \frac{1 - \left(1 + \sum_{q=1}^\infty \prod_{r=1}^q \frac{\lambda}{r + \frac{\lambda}{\min\{1, \lambda\}}} \right)^{-1}}{\min\{1, \lambda\}} \cdot x_j^*
\end{equation}

\begin{cla}
\label{cla:Pr-avail-divided-by-min-lower-bound}
For all $x \in \mathbb{R}_{> 0}$,
\[ \frac{1 - \left(1 + \sum_{q=1}^\infty \prod_{r=1}^q \frac{x}{r + \frac{x}{\min\{1, x\}}} \right)^{-1}}{\min\{1, x\}} \geq 1 - \frac{1}{e-1}. \]
\end{cla}

\begin{proof}
We consider two cases. First suppose $\min\{1, x\} = 1$.
The claim reduces to proving that for all $x \geq 1$,
\[ 1 + \sum_{q=1}^\infty \prod_{r=1}^q \frac{x}{r+x} \geq e - 1. \]
By a simple inductive proof, the left-hand side of the above equation is increasing in $x$ and therefore for $x \geq 1$,
\[ 1 + \sum_{q=1}^\infty \prod_{r=1}^q \frac{x}{r+x} \geq 1 + \sum_{q=1}^\infty \prod_{r=1}^q \frac{1}{r+1} = \sum_{q=1}^\infty \frac{1}{q!} = e - 1. \]

Otherwise, if $\min\{1, x\} = x$, we must show that for all $x \in [0, 1]$
\[ \frac{1 - \left(1 + \sum_{q=1}^\infty \prod_{r=1}^q \frac{x}{r + 1} \right)^{-1}}{x} \geq 1 - \frac{1}{e-1}. \]
Since $1 + \sum_{q=1}^\infty \prod_{r=1}^q \frac{x}{r + 1} = 1 + \frac{1}{x} \cdot (e^x - 1 - x) = \frac{1}{x} \cdot (e^x - 1)$, this is equivalent to proving that
\[ \frac{1}{x} - \frac{1}{e^x - 1} \geq 1 - \frac{1}{e - 1} \]
for all $x \in [0, 1]$.
We denote the left-hand side of the expression above by $f(x)$ and take its derivative:
\[ f^\prime(x) = \frac{e^x}{\left(e^x - 1 \right)^2} - \frac{1}{x^2}. \]
We show that $f^\prime$ is negative for all $x \in [0, 1]$ and therefore the minimum of $f$ on the interval $[0, 1]$ is achieved at the left boundary.
To this end, it suffices to show that $x \cdot e^{x/2} \leq e^x - 1$ for all $x \in [0, 1]$.
By the Taylor expansion of $e^x$,
\[ x \cdot e^{x/2} = \sum_{k=0}^\infty \frac{x^{k+1}}{2^k \cdot k!} \leq \sum_{k=0}^\infty \frac{x^{k+1}}{(k+1)!} = \sum_{k=1}^\infty \frac{x^k}{k!} = e^x - 1, \]
where the inequality holds since $2^k \geq k + 1$ for all $k \geq 1$.
Therefore, $f(x) \geq f(1) = 1 - 1/(e-1)$ for all $x \in [0,1]$.
\end{proof}

\Cref{eqn:Pr-available-divided-by-m*-geq-e-2-divided-by-e-1,cla:Pr-avail-divided-by-min-lower-bound} together yield $s_j \geq (1 - \nicefrac{1}{(e-1)}) \cdot x_j^*$ for all buyer types $j \in \calB$.
The lemma follows from linearity of expectation and \Cref{cor:RB-geq-OPT}.
\end{proof}

\section{Omitted proofs of \Cref{sec:tighter-lps}}\label{appendix:tighter-lps}

In this section we provide proofs on the gap between the reward benchmarks $\textrm{RB}_{\textrm{off}}$ and $\textrm{RB}_{\textrm{on}}$ compared to $\optoff$ and $\opton$. That is, we prove the following restated observations.

\RBoffgap*
\RBongap*

To prove the above observations, we consider instances of the stationary prophet inequality problem with a single good and a single buyer type.
The convenience of such an instance  $\calI$ is that the optimal policies are trivial: the seller sells an item to any buyer that arrives, provided at least one item is available, with the optimal offline policy selling to the earliest departing item.

\begin{proof}[Proof of \Cref{obs:RBoffgap}]
Consider the following instance, which we denote by $\calI_1$: the seller's good is supplied and perishes at rate $\lambda = \mu = 1$, and there is a single buyer type which arrives with rate $\gamma = 1$ and bids $v = 1$. 
By the PASTA property \cite{wolff1982poisson}, the rate at which items are sold is simply the rate at which buyers arrive and observe an available item.
From the upper bound of \Cref{cor:Pr-available-lower-upper-bounds}, we have
\[ \bbP[ A \geq 1 ] \leq 1 - \left(1 + \sum_{q=1}^\infty \frac{1}{(q+1)!} \right)^{-1} = 1 - \frac{1}{e - 1},
\]
where $A$ denotes the number of available items.
We conclude that $\optoff(\calI_1)\leq v\cdot \gamma\cdot \bbP[ A \geq 1 ] = 1 - \nicefrac{1}{(e-1)}.$
On the other hand, we have that $\mathrm{RB}_\mathrm{off}(\calI_1)= \max \{x \mid  x \leq 1,\, x \geq 0\} = 1$.
\end{proof}

\begin{proof}[Proof of \Cref{obs:RBongap}]
Consider the following instance, which we denote by $\calI_\lambda$: the seller's good is supplied at rate $\lambda \geq 2$ and perishes at rate $\mu = \lambda - 1$, and there is a single buyer type which arrives with rate $\gamma = 1$ and bids $v = 1$.
By \Cref{pasta,lem:Pr-available-exact}, the expected average revenue of this optimal online policy, which simply sells an item to a buyer whenever possible, is the rate at which buyers arrives, times the stationary probability that there is at least one item available:
\[ \opton\left(\calI\right) = v \cdot \gamma \cdot \bbP\left[ A \geq 1 \right] = \bbP\left[ A \geq 1 \right] = 1 - \left( 1 + \sum_{q=1}^\infty \prod_{r=1}^q \frac{\lambda}{r \cdot (\lambda - 1) + 1} \right)^{-1}. \]
Therefore,
we have that $\lim_{\lambda \rightarrow \infty} \opton\left(\calI_\lambda\right) = 1 - \nicefrac{1}{e}.$
On the other hand, we have that $\mathrm{RB}_\mathrm{on}(\calI_\lambda) =    \max \{x\geq 0 \mid x \leq \min\{1, \lambda,   \lambda / (\lambda - 1)\}\}$, where we note that \Cref{ec20-constraint} and \Cref{eqn:wine-20-constraint} both simplify to $x\leq \lambda / (\lambda - 1)$ for the instance $\calI_{\lambda}$. Therefore, for any value of $\lambda \geq 1$, we have that $\mathrm{RB}_\mathrm{on}\left(\calI_\lambda\right) \geq 1$.
Taking $\lambda\rightarrow \infty$, the observation follows.
\end{proof}

\section{Omitted proofs of \Cref{sec:better-bounds}}\label{appendix:better-bounds}
In this section we provide proofs of claims and lemmas deferred from \Cref{sec:better-bounds}, starting with proofs concerning approximability of the single-good problem.

\subsection{Single-good problem}
\subsubsection{Approximating $\optoff$}

\LowerBoundOneHalf*
\begin{proof}
Denote the LHS of this inequality by $f(x)$. 
We note that $\lim_{x\rightarrow 0^+} f(0)=\frac{1}{2}$,
and so we would like to prove that $f(x)$ is monotone increasing in $x$. Taking the derivative of the LHS and simplifying it, we get the following (rather unwieldy) derivative.
$$f'(x):=\frac{\sum_{i=0}^3 g_i(x)}{e^{-x}\cdot (2 + 2 x + x^2 +
   e^{2 x} \cdot (2 + 5 x + 3 x^2) - e^x \cdot(4 + 7 x + 3 x^2))^2},$$
where, grouping by powers of $e^x$, these $g_i(x)$ are defined as follows.
   $$g_i(x)=\begin{cases}
       e^{0} \cdot (-4 - 8 x - 6 x^2 - 4 x^3 - x^4)  & i = 0 \\
       e^x \cdot (12 + 24 x + 19 x^2 + 12 x^3 + 4 x^4) & i = 1\\
       e^{2 x} \cdot (-12 - 24 x - 17 x^2 - 9 x^3 - 3 x^4)  & i = 2\\
       e^{3 x} \cdot (4 +8x + 4x^2) & i = 3.
   \end{cases}
    $$
Now, the denominator of the above form for $f'(x)$ is easily seen to be positive for all $x\geq 0$ (indeed, it is positive for all $x\in \mathbb{R}$, since it is the product of an exponential and a square). Therefore, to prove that $f'(x)\geq 0$ for all $x\geq 0$, we need only prove that $g(x):=\sum_{i=0}^3 g_i(x) \geq 0$ for all $x\geq 0$. 
Now, $g(x)$ is the sum of products of analytic functions which are in particular equal to their Taylor expansions around zero. Therefore, $g(x)$ is equal to its Taylor expansion around zero, and we can write it as
$$g(x) = \sum_{n=0}^{\infty} a_n\cdot x^n,$$
where, using the coefficients of the Taylor expansion of $e^{k\cdot x} = \sum_{n=0}^\infty \frac{k^n}{n!}\cdot x^{n},$ we have that
$$a_n =
    b_n - 4\cdot \mathds{1}[n=0] - 8\cdot \mathds{1}[n=1] - 6\cdot \mathds{1}[n=2] - 4\cdot \mathds{1}[n=3] - 1\cdot \mathds{1}[n=4],$$
where 
\begin{align*}
    b_n & = \frac{12-12\cdot 2^n + 4\cdot 3^n}{n!} + \frac{24-24\cdot 2^{n-1} + 8\cdot 3^{n-1}}{(n-1)!}
    \\
    & + \frac{19-17\cdot 2^{n-2} + 4\cdot 3^{n-2}}{(n-2)!}
    + \frac{12-9\cdot 2^{n-3}}{(n-3)!}
    + \frac{4-3\cdot 2^{n-4}}{(n-4)!}.
\end{align*}
The above $a_n$ are all non-negative. This can be proven by inspection for $n\leq 45$, while for $n\geq 45$, we have that 
$a_n \geq \frac{4\cdot 3^n}{n!} - \frac{(12+24+17+9+3)\cdot 2^n}{(n-4)!} \geq \frac{4\cdot 3^n}{n!} - \frac{65 \cdot 2^n}{n!/n^4} \geq 0,$
where the last inequality is equivalent to  $(3/2)^n\geq \frac{65}{4}\cdot n^4$, which holds for all $n\geq 45$.
We conclude that, since all $a_n$ are non-negative, we have that $g(x) = \sum_{n=0}^\infty a_n \cdot x^n \geq 0$ for all $x\geq 0$. Recalling that this implies that $f'(x)\geq 0$, we find that, indeed, $f(x)\geq \frac{1}{2}$ for all $x\geq 0$, as claimed.
\end{proof}

\subsubsection{Approximating $\opton$}

In our proof of \Cref{thm:0.656-approximate-single-good-policy}, we defined two auxiliary functions, $g_1(C, x) \triangleq (1 - \left(\sum_{q=0}^C \frac{x^q}{q!} \right)^{-1})/x$ and 
$g_2(C,x)\triangleq \frac{1 - \left(\frac{1}{6} + \frac{5}{6} \cdot \sum_{q=0}^C \frac{1}{q!} \left( \frac{6 \cdot w}{5} \right)^q \right)^{-1}}{w}$, which we claimed are monotone decreasing, as we now prove.

\begin{fact}\label{monotonicity-of-g1}
    For any fixed $C$, the function $g_1(C,w)$ is monotone decreasing in $w$.
\end{fact}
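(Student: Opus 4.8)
The plan is to reduce the monotonicity claim to a polynomial inequality in $w$ that holds simply because the polynomial in question has only nonnegative Taylor coefficients. Throughout, fix $C$ and write $S(w) \triangleq \sum_{q=0}^{C} \frac{w^q}{q!}$, so that $g_1(C,w) = \frac{1 - S(w)^{-1}}{w} = \frac{S(w)-1}{w\,S(w)}$.

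First I would differentiate via the quotient rule and simplify. With $u = S-1$ and $v = wS$, one computes $u'v - uv' = w S'(w) - S(w)(S(w)-1)$, so that $g_1'(w) = \dfrac{w\,S'(w) - S(w)\,(S(w)-1)}{\big(w\,S(w)\big)^2}$. Since $S(w)>0$, the denominator is positive for every $w>0$, and therefore it suffices to prove the inequality $w\,S'(w) \le S(w)\,(S(w)-1)$ for all $w>0$.

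Next I would factor a $w$ out of $S(w)-1 = \sum_{q=1}^{C} \frac{w^q}{q!}$: setting $T(w) \triangleq \sum_{p=0}^{C-1} \frac{w^p}{(p+1)!}$ gives $S(w)-1 = w\,T(w)$. Dividing the target inequality by $w$, it becomes equivalent to showing $S'(w) \le S(w)\,T(w)$ for all $w\ge 0$ (the endpoint $w=0$ following by continuity, though only $w>0$ is needed). I would then prove $S(w)T(w) - S'(w) \ge 0$ by a coefficientwise comparison: the coefficient of $w^k$ in $S'(w) = \sum_{p=0}^{C-1}\frac{w^p}{p!}$ equals $\frac{1}{k!}$ for $0 \le k \le C-1$ and $0$ otherwise, while the coefficient of $w^k$ in $S(w)T(w)$ is $\sum_{i+j=k}\frac{1}{i!\,(j+1)!}$ over the admissible index ranges, which is nonnegative and, for $0 \le k \le C-1$, already contains the single cross-term $(i,j)=(k,0)$ contributing exactly $\frac{1}{k!}$. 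Hence every Taylor coefficient of $S\,T - S'$ is nonnegative, so $S(w)T(w) - S'(w) \ge 0$ for all $w \ge 0$, giving $g_1'(w) \le 0$ and the claimed monotonicity.

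The only obstacle here is minor bookkeeping: one must track the index ranges carefully for finite $C$ (the polynomial $S'$ has degree $C-1$, and one needs the cross-term $(k,0)$ to actually occur in the product $S\,T$ for all $k \le C-1$). Once this is set up, the coefficient comparison is immediate and the proof is complete.
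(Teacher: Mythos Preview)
Your proof is correct and follows essentially the same route as the paper: both differentiate $g_1$ and reduce to the polynomial inequality $wS'(w)\le S(w)(S(w)-1)$, where $S(w)=\sum_{q=0}^{C}w^q/q!$. The paper's final displayed rearrangement ($1+w\,S'/S\ge 1$) appears to contain a typo and is in any case asserted without justification; your factoring $S-1=wT$ and coefficientwise comparison $S'\le S\cdot T$ is a cleaner and fully explicit way to finish the argument.
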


\begin{proof}
Taking the derivative of $g_1(C, w)$ with respect to $w$ yields
\[ \frac{\partial g_1(C, w)}{\partial w} = \frac{1}{w} \cdot \left(\frac{\sum_{q=0}^{C-1} \frac{w^q}{q!}}{\left( \sum_{q=0}^C \frac{w^q}{q!} \right)^2} - g_1(C, w) \right), \]
and it therefore suffices to show that $g_1(C, w) \geq \left( \sum_{q=0}^{C-1} \frac{w^q}{q!} \right) / \left( \sum_{q=0}^{C} \frac{w^q}{q!} \right)^2$.
Rearranging terms and using the definition of $g_1(C, w)$, we find that this is equivalent to the following inequality, which indeed holds for all $w\geq 0$,
\begin{align*}1 + w \cdot \frac{\sum_{q=0}^{C-1} \frac{w^q}{q!}}{\sum_{q=0}^C \frac{w^q}{q!}} & \geq 1. \qedhere
\end{align*}
\end{proof}

\begin{fact}\label{monotonicity-of-g2}
    For any fixed $C$, the function $g_2(C,w)$ is monotone decreasing in $w$.
\end{fact}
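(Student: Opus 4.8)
The plan is to mirror the proof of \Cref{monotonicity-of-g1}: compute $\partial g_2(C,w)/\partial w$ in closed form, reduce the sign condition to a polynomial inequality in $w$, and verify that inequality by comparing power-series coefficients term by term.

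First I would write $S(w) := \tfrac16 + \tfrac56\sum_{q=0}^C \tfrac{1}{q!}\left(\tfrac{6w}{5}\right)^q$, so that $g_2(C,w) = \big(1 - S(w)^{-1}\big)/w$. Differentiating and collecting terms over a common denominator gives
\[
\frac{\partial g_2(C,w)}{\partial w} \;=\; \frac{w\,S'(w) - S(w)\big(S(w)-1\big)}{w^2\,S(w)^2}.
\]
Since $w^2 S(w)^2 > 0$ for $w>0$, it suffices to prove $w\,S'(w) \le S(w)\big(S(w)-1\big)$. I would then substitute $u := 6w/5 \ge 0$ and simplify each piece: one has $S(w) = 1 + \tfrac56\sum_{q=1}^C\tfrac{u^q}{q!}$, hence $S(w)-1 = \tfrac56\sum_{q=1}^C\tfrac{u^q}{q!}$, while (using $\tfrac56\cdot\tfrac65 = 1$) a short calculation gives $w\,S'(w) = \tfrac56\sum_{q=1}^C\tfrac{q\,u^q}{q!}$. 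Plugging these in, dividing by $\tfrac56$, and cancelling the common $\sum_{q=1}^C u^q/q!$ term turns the target inequality into
\[
\sum_{q=1}^C \frac{(q-1)\,u^q}{q!} \;\le\; \frac56\left(\sum_{q=1}^C \frac{u^q}{q!}\right)^{\!2}.
\]

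The last step is a coefficient comparison in $u$. On the left, the coefficient of $u^n$ is $(n-1)/n!$ for $1\le n\le C$ (so it vanishes at $n=1$) and is $0$ otherwise. Expanding the square on the right, the coefficient of $u^n$ is $c_n := \sum_{q+r=n,\ 1\le q,r\le C}\tfrac{1}{q!\,r!}\ge 0$, and whenever $n\le C+1$ this equals $\tfrac1{n!}\sum_{q=1}^{n-1}\binom nq = \tfrac{2^n-2}{n!}$. Since the left side has no terms past $u^C$, it is enough to check $\tfrac{n-1}{n!}\le \tfrac56\cdot\tfrac{2^n-2}{n!}$ for $2\le n\le C$ (all such $n$ satisfy $n\le C+1$), i.e.\ $6n+4\le 5\cdot 2^n$; this holds at $n=2$ ($16\le 20$) and for all larger $n$ because the right side grows geometrically while the left grows linearly. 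For $n=1$ and for $n>C$ the left coefficient is $0\le c_n$. Summing over $n$ yields the displayed inequality for every $u\ge 0$, hence $g_2'(C,w)\le 0$ on $(0,\infty)$; together with the easily checked fact that $\lim_{w\to 0^+} g_2(C,w)$ is finite (it equals $1$), this gives the claim.

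I expect no real obstacle here — the argument is essentially mechanical once the derivative is put in the form above. The only point requiring a little care is the bookkeeping of the index range $n$: one must note that the left-hand polynomial has degree $C$, so the coefficient comparison is needed only for $n\le C$ (where $c_n$ has the clean binomial value $\tfrac{2^n-2}{n!}$), and that for $C<n\le 2C$ the right-hand coefficient is merely nonnegative, which already suffices.
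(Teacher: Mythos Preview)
Your proof is correct. The overall shape matches the paper's: both compute $\partial g_2/\partial w$, rewrite it with $S(w)=\tfrac16+\tfrac56\sum_{q=0}^C(6w/5)^q/q!$, and reduce the sign condition to the inequality $wS'(w)\le S(w)\bigl(S(w)-1\bigr)$.

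The difference is in how that final inequality is dispatched. The paper rearranges $g_2\ge S'/S^2$ into the displayed line $1+wS'/S\ge 1$ and stops; but the correct rearrangement is $S\ge 1+wS'/S$ (equivalently $S(S-1)\ge wS'$), which is \emph{not} immediate, so the paper's last step is at best a typo and at worst a gap. You instead substitute $u=6w/5$, reduce to
\[
\sum_{q=2}^{C}\frac{(q-1)u^q}{q!}\;\le\;\frac56\Bigl(\sum_{q=1}^{C}\frac{u^q}{q!}\Bigr)^{2},
\]
and verify it by a clean term-by-term comparison: for $2\le n\le C$ the right-hand coefficient of $u^n$ is $\tfrac56\cdot(2^n-2)/n!$ via the binomial identity, and the check $6n+4\le 5\cdot 2^n$ settles it, while for $n>C$ the left coefficient vanishes. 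Your route is therefore more explicit and actually supplies the justification the paper's printed argument omits. (The remark about $\lim_{w\to 0^+}g_2=1$ is correct but not needed for the monotonicity claim itself.)
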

\begin{proof}
Taking the derivative of $g_2(C, w)$ with respect to $w$ yields
\[ \frac{\partial g_2(C, w)}{\partial w} = \frac{1}{w} \cdot \left(\frac{\sum_{q=0}^{C-1} \frac{1}{q!} \left(\frac{6 \cdot w}{5} \right)^q}{\left( \frac{1}{6} + \frac{5}{6} \cdot \sum_{q=0}^C \frac{1}{q!} \left( \frac{6 \cdot w}{5} \right)^q \right)^2} - g_2(C, w) \right), \]
and it therefore suffices to show that $g_2(C, w) \geq \left( \sum_{q=0}^{C-1} \frac{1}{q!} \left(\frac{6 \cdot w}{5} \right)^q \right) / \left( \frac{1}{6} + \frac{5}{6} \cdot \sum_{q=0}^C \frac{1}{q!} \left( \frac{6 \cdot w}{5} \right)^q \right)^2$.
Rearranging terms using the definition of $g_2(C, w)$, we find that this is equivalent to showing the following inequality, which indeed holds for all $w\geq 0$.
\begin{align*} 
1 + w \cdot \frac{\sum_{q=0}^{C-1} \frac{1}{q!} \left(\frac{6 \cdot w}{5} \right)^q}{\frac{1}{6} + \frac{5}{6} \cdot \sum_{q=0}^C \frac{1}{q!} \left( \frac{6 \cdot w}{5} \right)^q} & \geq 1. \qedhere \end{align*}
\end{proof}

Finally, we prove that limiting the inventory sufficiently results in a worse approximation of the optimal (unbounded-inventory) online algorithm.
\UpperBoundApproxRatioInventoryC*
\begin{proof}
We consider the following instance of the single-good stationary prophet inequality problem: the seller's good is supplied at rate $\lambda > 0$ and perishes at rate $\mu = 1$, and there is a single buyer type which arrives at rate $\gamma = \lambda$ and bids $v = 1$.
As there is only one buyer type, the optimal online policy, regardless of the seller's inventory capacity, is to always sell an available item to any buyer that arrives.
Therefore, by the PASTA property \cite{wolff1982poisson} and \Cref{lem:Pr-available-exact}, the expected average revenue of this policy when the seller has inventory capacity $C \in \mathbb{Z}_{> 0}$ is
\[ v \cdot \gamma \cdot \mathbb{P}_C \left[ A \geq 1 \right] = \lambda \cdot \left(1 - \left(1 + \sum_{q=1}^C \prod_{r=1}^q \frac{\lambda}{r + \lambda} \right)^{-1} \right). \]
As $\lambda$ goes to infinity, the expected average revenue of the optimal online policy when the seller has inventory capacity $C$ relative to that of the optimal online policy when the seller has unbounded inventory capacity goes to
\[ \lim_{\lambda \rightarrow \infty} \frac{\lambda \cdot \mathbb{P}_C \left[ A \geq 1 \right]}{\lambda \cdot \mathbb{P} \left[ A \geq 1 \right]} = \lim_{\lambda \rightarrow \infty} \frac{1 - \left(1 + \sum_{q=1}^C \prod_{r=1}^q \frac{\lambda}{r + \lambda} \right)^{-1}}{1 - \left(1 + \sum_{q=1}^\infty \prod_{r=1}^q  \frac{\lambda}{r + \lambda} \right)^{-1}} = \frac{1 - \left(1 + \sum_{q=1}^C 1 \right)^{-1}}{1 - \left(1 + \sum_{q=1}^\infty 1 \right)^{-1}} = \frac{C}{C+1}. \qedhere \]
\end{proof}

\subsection{Multi-good problem}\label{appendix:multi-good}

\PrReachesAndAvailable*
\begin{proof}
Fix a buyer type $j \in \calB$ and an ordering $\sigma$ of the goods.
Note that the execution of \Cref{alg:main} when a buyer of type $j$ arrives is equivalent to the following: Before the buyer considers any of the goods, the seller determines which goods to permit a sale of, meaning the seller samples a set of permissible goods from the product distribution $Ber(p_{1j}) \times \dots \times Ber(p_{nj})$.
Then, as the buyer iterates through the goods according to order $\sigma$, the seller sells this buyer an item of the first good that he reaches that is both available and permissible for this buyer.
Therefore, in order for the buyer to reach good $i$, there must be no items of good $i^\prime$ available for each good $i^\prime$ that precedes $i$ in the order $\sigma$ (i.e, $\sigma(i^\prime) < \sigma(i)$) and which is permissible for the buyer.

Fix the set $H_j \subseteq \calG$ of permissible goods for this buyer, and let $D_{i, \sigma, H_j} = \{ i ^\prime \in H_j : \sigma(i^\prime) < \sigma(i) \}$.
Since every available item is also present, having no items of good $i^\prime$ present for each $i^\prime \in D_{i, \sigma, H_j}$ is a sufficient condition to guarantee that the buyer reaches good $i$.
Therefore, letting $Y \in \R^{2n}$ be the vector whose elements, which we refer to as $Y_{A_i}$ and $Y_{P_i}$, represent the number of items of good $i$ available and the negative of the number of items of good $i$ present, respectively, under \Cref{alg:main}, we have
\begin{equation}
\label{eq:R-ij-A-i-geq-Y}
    \bbP_C \left[ R_{ij} \wedge A_i \geq 1 \mid \sigma, H_j \right] \geq \bbP_C \left[ Y \geq e_{A_i} - \infty \cdot \sum_{i^\prime \in \bar D_{i, \sigma, H_j}} e_{P_{i^\prime}} \,\middle|\, \sigma,\, H_j \right].
\end{equation}
We use $e_{A_i}$ and $e_{P_i}$ to denote the vectors with all zeros except at the elements corresponding to $A_i$ and $P_{i}$ which are $1$.
We can think of $Y$ as simply the (augmented) state of the marketplace under \Cref{alg:main}, where the set $\calY_C \subseteq \mathbb{R}^{2n}$ of valid states is such that for any $y \in \calY_C$, we have $0 \leq y_{A_i} \leq C$, $y_{P_i} \leq 0$, and $y_{A_i} \leq |y_{P_i}|$.
Under \Cref{alg:main}, the stochastic process governing $Y$ is described by intensity matrix $Q_C$, where for any $y, y^\prime \in \calY_C$,
\[
    Q_C(y, y^\prime) = \begin{cases}
        \lambda_i &  y^\prime = y + e_{A_i} - e_{P_i} \text{ and } y_{A_i} < C  \\
        y_{A_i} \cdot \mu_i &  y^\prime = y - e_{A_i} + e_{P_i} \\
        \left(|y_{P_i}| - y_{A_i} \right) \cdot \mu_i &  y^\prime = y + e_{P_i} \\
        \alpha \cdot \sum_{j \in \calB} \bbP_C \left[ R_{ij} \mid y \right]\cdot \frac{x_{ij}^*}{w_i} &  y^\prime = y - e_{A_i} \text{ and } y_{A_i} > 0 \\
        0 & \text{o.w.}
        \end{cases}
\]
and $Q_C(y, y) = - \sum_{y^\prime \in \calY_C : y^\prime \neq y} Q_C(y, y^\prime)$.

Although the availability of good $i$ and the presence of other goods $i^\prime \neq i$ are correlated under $Q_C$, we show that $Y$ stochastically dominates a stochastic process $\tilde Y$ under which they are in fact independent.
Informally, we define the dynamics governing $\tilde Y$ to correspond to, in some sense, every good (simultaneously) coming first in the ordering such that an arriving buyer reaches each good with probability $1$.
Put another way, $\tilde Y$ can be thought of as a collection of $n$ independent single-good instances, where each instance consists of a different good $i \in \calG$ and the full set of buyers $\calB$.
Of course, this does not reflect the dynamics of \Cref{alg:main} (or any other of feasible policy for the multi-good problem) but is still a useful tool, as we will see.
More specifically, we let $\tilde Y$ represent the state of a stochastic process on the same space $\calY_C$ governed by intensity matrix $\tilde Q_C$, where for any $y, y^\prime \in \calY_C$,
\[
    \tilde Q_C(y, y^\prime) = \begin{cases}
    \lambda_i &  y^\prime = y + e_{A_i} - e_{P_i} \text{ and } y_{A_i} < C \\
    y_{A_i} \cdot \mu_i &  y^\prime = y - e_{A_i} + e_{P_i} \\
    \left(|y_{P_i}| - y_{A_i} \right) \cdot \mu_i &  y^\prime = y + e_{P_i} \\
    \alpha \cdot \frac{\lambda_i}{w_i} &  y^\prime = y - e_{A_i} \text{ and } y_{A_i} > 0 \\
    0 & \text{o.w.}
    \end{cases}
\]
and $\tilde Q_C(y, y) = - \sum_{y^\prime \in \calY_C : y^\prime \neq y} \tilde Q_C(y, y^\prime)$.
Observe that $Q_C$ and $\tilde Q_C$ are identical except for the rate at which they transition to states with exactly one fewer available item, a rate which is higher for $\tilde Y$.
Intuitively, this creates more ``downwards pressure'' for variable $\tilde Y$ than for $Y$. This intuition is borne out by the following lemma, 
which we prove following this proof.

\begin{restatable}{cla}{YDominantesYTilde}
\label{cla:Y-dominates-Y-tilde}
For any permutation $\sigma$ over $\calG$ and any $H_j \subseteq \calG$, we have that
$$[Y \mid \sigma,\, H_j] \succeq [\tilde Y \mid \sigma,\, H_j].$$
\end{restatable}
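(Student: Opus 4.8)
The plan is to verify the hypothesis of \Cref{lem:stochastic-dominance} for the pair $Y$ (with intensity matrix $Q_C$) and $\tilde Y$ (with intensity matrix $\tilde Q_C$). Since a fresh ordering and permissible set are drawn at each buyer arrival, and (by the PASTA property) the state of the process seen by an arriving buyer has the stationary distribution and is independent of that buyer's ordering $\sigma$ and permissible set $H_j$, conditioning on $\sigma$ and $H_j$ does not change the stationary laws of $Y$ or $\tilde Y$; hence it suffices to show $Y \succeq \tilde Y$. By \Cref{lem:stochastic-dominance} this reduces to proving that, for every $y \ge \tilde y$ in $\calY_C$ and every upward closed $S \subseteq \calY_C$ with $y, \tilde y$ either both in $S$ or both outside $S$,
\[ \sum_{z \in S} Q_C(y, z) \;\ge\; \sum_{z \in S} \tilde Q_C(\tilde y, z). \]

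The key structural point is that $Q_C$ and $\tilde Q_C$ coincide on all transitions except, for each good $i$, the ``sell'' move $y \mapsto y - e_{A_i}$ (which exists only when $y_{A_i} > 0$): its rate under $\tilde Q_C$ is $\alpha\lambda_i/w_i$, whereas under $Q_C$ it is $\alpha \sum_{j \in \calB} \bbP_C[R_{ij} \mid y]\,x_{ij}^*/w_i$, which is no larger because $\bbP_C[R_{ij}\mid y] \le 1$ and $\sum_{j} x_{ij}^* \le \lambda_i$ by constraint~\eqref{eqn:flow-constraint-seller} (recall $\mathbf{x}^* = \arg\max_\mathbf{x}\mathrm{LP}_\mathrm{off}(\calI)$ is feasible). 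Writing $\delta_i \ge 0$ for this rate gap at $y$ (with $\delta_i = 0$ if $y_{A_i} = 0$), the row $Q_C(y, \cdot)$ is obtained from $\tilde Q_C(y, \cdot)$ by moving, for each $i$, rate $\delta_i$ off the sell target $y - e_{A_i}$ and onto the self-loop at $y$. I would therefore split
\[ \sum_{z \in S} Q_C(y, z) - \sum_{z \in S} \tilde Q_C(\tilde y, z) = \underbrace{\sum_{z \in S}\big(Q_C(y, z) - \tilde Q_C(y, z)\big)}_{(\mathrm I)} + \underbrace{\sum_{z \in S}\big(\tilde Q_C(y, z) - \tilde Q_C(\tilde y, z)\big)}_{(\mathrm{II})} \]
and show each part is non-negative. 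Part $(\mathrm I)$ equals $\sum_i \delta_i\big(\mathds{1}[y \in S] - \mathds{1}[y - e_{A_i} \in S]\big)$; since $y - e_{A_i} \le y$ and $S$ is upward closed, $y - e_{A_i} \in S$ forces $y \in S$, so each summand is non-negative and $(\mathrm I) \ge 0$ --- this uses nothing but upward-closedness of $S$.

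Part $(\mathrm{II})$ being non-negative for all such $y, \tilde y, S$ is exactly the condition of \Cref{lem:stochastic-dominance} for the process $\tilde Y$ to be stochastically monotone, and this is the crux. The decisive simplification is that under $\tilde Q_C$ each good's sell rate $\alpha\lambda_i/w_i$ no longer depends on the other goods' states, so the $n$ goods evolve as mutually independent chains and $\tilde Y$ is a product of independent single-good processes; since the product order on $\R^{2n}$ is the product of the per-good orders, it suffices to establish stochastic monotonicity of the single-good chain on states $(a, p)$ with $0 \le a \le -p$ and $a \le C$, whose transitions are supply $(a,p) \mapsto (a+1, p-1)$ at rate $\lambda\,\mathds{1}[a < C]$, perish-of-available $(a,p) \mapsto (a-1, p+1)$ at rate $a\mu$, perish-of-sold $(a,p) \mapsto (a, p+1)$ at rate $(-p-a)\mu$, and sell $(a,p) \mapsto (a-1, p)$ at rate $(\alpha\lambda/w)\,\mathds{1}[a > 0]$.

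For this single-good chain I would verify Brandt's criterion by a finite case analysis over the four transition types, pairing each transition out of the smaller state with the corresponding one out of the larger state; the targets stay comparable and the rates are monotone in the appropriate direction, so the accounting is routine for supply and perish-of-available alone. The main obstacle is the interaction of the capacity indicator $\mathds{1}[a < C]$ with the two ``downward'' moves (sell and perish-of-available), which point in a direction incomparable to supply. These are handled by treating separately the two cases of Brandt's criterion --- both of $y, \tilde y$ inside $S$, using the identity $\sum_{z \in S} Q(u,z) = -\sum_{z \notin S} Q(u,z)$ to reformulate the inequality as ``the larger state leaks out of $S$ no faster''; or both outside $S$, working directly with rates into $S$ --- and in each case using that $S$ is upward closed together with $y \ge \tilde y$ to show that any transition penalizing the larger state (a blocked supply move at $a = C$, or a faster perish-of-available rate) is already compensated by the membership pattern of the transition targets in $S$. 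Combining $(\mathrm I) \ge 0$ with $(\mathrm{II}) \ge 0$ gives the generator inequality, hence $Y \succeq \tilde Y$ by \Cref{lem:stochastic-dominance}.
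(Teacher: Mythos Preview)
Your approach is correct and is a genuinely different organization from the paper's. The paper verifies Brandt's generator inequality for the pair $(Q_C,\tilde Q_C)$ in one pass: it fixes $y\ge\tilde y$, splits on ``both in $S$'' versus ``both outside $S$'', and then walks through the four transition types $\Delta^{(1)}_i,\dots,\Delta^{(4)}_i$, matching each outgoing rate of $\tilde Q_C$ at $\tilde y$ against one or two outgoing rates of $Q_C$ at $y$. The sell transition is the only place where the two generators differ, and the paper handles that comparison inline as one of the cases.

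Your decomposition $(\mathrm I)+(\mathrm{II})$ cleanly separates the two effects that the paper conflates: $(\mathrm I)$ isolates exactly where $Q_C$ and $\tilde Q_C$ disagree (only the sell rate and the compensating diagonal), and its non-negativity follows from upward-closedness alone, with no reference to $\tilde y$; $(\mathrm{II})$ is then purely the stochastic monotonicity of $\tilde Y$. The reduction of $(\mathrm{II})$ to the single-good chain via the product structure is valid---once the sell rate no longer depends on other goods, $\tilde Y$ really is a product of independent two-dimensional chains, and monotonicity of each factor gives monotonicity of the product (using both directions of \Cref{lem:stochastic-dominance}). This is a nice conceptual gain: it explains \emph{why} the comparison works, whereas the paper's proof simply verifies that it does.

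That said, the single-good case analysis you sketch at the end is where all the real work remains, and it is essentially the same four-transition case split the paper carries out (supply, perish-of-available, perish-of-sold, sell), with the same subtleties around the capacity indicator and the incomparable directions. So your route does not shorten the proof so much as reorganize it: the paper front-loads the bookkeeping into one comparison of two generators, while you defer it to the monotonicity of a simpler generator. Your sketch of that final step is a bit brief---``the accounting is routine'' and ``already compensated by the membership pattern'' stand in for the actual pairing of rates---but the strategy is sound and would go through with the same two-case argument the paper uses.
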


\Cref{cla:Y-dominates-Y-tilde} implies that for any $y \in \calY$, $\bbP_C \left[ Y \geq y \right] \geq \bbP_C \left[ \tilde Y \geq y \right]$ and therefore
\begin{equation}
\label{eq:Y-geq-tilde-Y}
    \bbP_C \left[ Y \geq e_{A_i} - \infty \cdot \sum_{i^\prime \in \bar D_{i, \sigma, H_j}} e_{P_{i^\prime}} \,\middle|\, \sigma, H_j \right] \geq \bbP_C \left[ \tilde Y \geq e_{A_i} - \infty \cdot \sum_{i^\prime \in \bar D_{i, \sigma, H_j}} e_{P_{i^\prime}} \,\middle|\, \sigma, H_j \right].
\end{equation}
Due to the independence of the availability of good $i$ and the presence of any other good $i^\prime \neq i$ under $\tilde Q_C$, we have
\begin{equation}
\label{eq:tilde-Y-geq-tilde-A-i-R-ij}
    \bbP_C \left[ \tilde Y \geq e_{A_i} - \infty \cdot \sum_{i^\prime \in \bar D_{i, \sigma, H_j}} e_{P_{i^\prime}} \,\middle|\, \sigma, H_j \right] = \bbP_C \left[ \tilde Y_{A_i} \geq 1 \,\middle|\, \sigma, H_j \right] \cdot \bbP_C \left[ \tilde Y_{P_{i^\prime}} = 0 \ \forall\, i^\prime \in D_{i, \sigma, H_j} \,\middle|\, \sigma, H_j \right].
\end{equation}
Observe that the event $\tilde Y_{A_i} \geq 1$ is exactly the event $\tilde A_i \geq 1$ and furthermore, that the availability of good $i$ under $\tilde Q$ does not depend on either the ordering $\sigma$ or the set of permissible goods $i^\prime \in H_j$.
Also, conditional on $\sigma$ and $H_j$, the event $\tilde Y_{P_{i^\prime}} = 0$ for all $i^\prime \in D_{i, \sigma, H_j}$ corresponds to the event that no good $i^\prime$ which is permissible for the buyer precedes $i$ in the ordering and is present, which is exactly $\tilde R_{ij}$.
Combining \Cref{eq:R-ij-A-i-geq-Y,eq:Y-geq-tilde-Y,eq:tilde-Y-geq-tilde-A-i-R-ij}, 
\Cref{eq:R-ij-A-i-geq-tilde-R-ij-tilde-A-i} follows.
\end{proof}

\begin{proof}[Proof of \Cref{cla:Y-dominates-Y-tilde}]
By \Cref{lem:stochastic-dominance}, it suffices to show that for all $\tilde y, y \in \calY_C$ and every upward closed set $S \subseteq \calY_C$,
\[ \sum_{\Delta : \tilde y + \Delta \in S} \tilde Q_C(\tilde y, z) \leq \sum_{\Delta : \tilde y + \Delta \in S} Q_C(y, z) \]
if $\tilde y \leq y$ and either $\tilde y, y \notin S$ or $\tilde y, y \in S$.
We fix $\tilde y, y$ and consider both the two cases individually.

\paragraph{Case 1:}
Suppose $\tilde y, y \notin S$.
For every $\Delta$ such that $\tilde y + \Delta \in S$, and $y+\Delta\in \calY_C$, we have $y + \Delta \in S$ since $\tilde y \leq y$ and $S$ is upward closed.
The upward closedness of $S$ also implies that since $\tilde y \notin S$, there must exist some coordinate $k$ such that $\Delta_k > 0$.
Inspecting the intensity matrix $\tilde Q$, it is clear that there are exactly three forms that $\Delta$ can take such that $\tilde Q_C(\tilde y, \tilde y + \Delta) > 0$: (1) $\Delta_i^{(1)} = e_{A_i} - e_{P_i}$, (2) $\Delta_i^{(2)} = e_{P_i} - e_{A_i} $, or (3) $\Delta_i^{(1)}= e_{P_i}$, for some $i \in \calG$.
Note that for (1), $\tilde y \leq y$, $\tilde y + \Delta \in S$, and $S \subseteq \calY_C$ together imply that $\tilde y_{A_i} = y_{A_i} < C$; otherwise, $\tilde y + e_{A_i} \leq y$ and therefore $\tilde y + \Delta \leq y$, meaning $y \in S$ by the upward closedness of $S$, contradicting $y \notin S$. We conclude that if $\tilde y + \Delta \in S$, then $y+ \Delta\in \calY_C$, and hence $y_\Delta \in S$.
Similarly for (2) and (3), $\tilde y \leq y$ and $\tilde y + \Delta \in S$ imply that $\tilde y_{P_i} = y_{P_i}$; otherwise, if $\tilde y_{P_i} < y_{P_i}$, then $\tilde y + e_{P_i} \leq y$ and therefore $\tilde y + \Delta \leq \tilde y + e_{P_i} \leq y$, which, by the upward closedness of $S$, contradicts $y \notin S$.
We now relate the intensities under $\tilde{Q}_C$ and $Q_C$.

Fix $\Delta$ such that $y + \Delta \in S$ and first suppose that $\Delta = \Delta_i^{(1)}$ for some $i \in \calG$.
Comparing the intensity matrices, we have that
\[ \tilde Q_C\left(\tilde y, \tilde y + \Delta_i^{(1)}\right) = \lambda_i = Q_C\left(y, y + \Delta_i^{(1)}\right). \]
Next suppose that $\Delta = \Delta_i^{(2)}$ for some $i \in \calG$.
Since $\tilde y + \Delta_i^{(2)} \in S$, by the upward closedness of $S$, $\tilde y + \Delta_i^{(3)} \in S$ and also $y + \Delta_i^{(2)}, y + \Delta_i^{(3)} \in S$.
Furthermore, we have that
\[ \tilde Q_C\left(\tilde y, \tilde y + \Delta_i^{(2)}\right) + \tilde Q_C\left(\tilde y, \tilde y + \Delta_i^{(3)}\right) = |\tilde y_{P_i}| \cdot \mu_i = |y_{P_i}| \cdot \mu_i = Q_C\left(y, y + \Delta_i^{(2)}\right) + Q_C\left(y, y + \Delta_i^{(3)}\right), \]
where the equality is due to the fact that $\tilde y_{P_i} = y_{P_i}$, as argued in the preceding paragraph.
Lastly, suppose that $\Delta = \Delta_i^{(3)}$ for some $i \in \calG$ such that $\tilde y + \Delta_i^{(2)} \notin S$.
If $\tilde y_{A_i} = y_{A_i}$, then
\[ \tilde Q_C\left(\tilde y, \tilde y + \Delta_i^{(3)}\right) = \left( |\tilde y_{P_i}| - \tilde y_{A_i} \right) \cdot \mu_i = \left( |y_{P_i}| - y_{A_i} \right) \cdot \mu_i = Q_C\left(y, y + \Delta_i^{(3)}\right). \]
If $\tilde y_{A_i} < y_{A_i}$, then $\tilde y \leq y - e_{A_i}$, which implies $\tilde y + \Delta_i^{(3)} \leq y + \Delta_i^{(2)}$, from which it follows that $y + \Delta_i^{(2)} \in S$ by the upward closedness of $S$.
Therefore,
\[ \tilde Q_C\left(\tilde y, \tilde y + \Delta_i^{(3)}\right) = \left(|\tilde y_{P_i}| - \tilde y_{A_i} \right) \cdot \mu_i \leq |\tilde y_{P_i}| \cdot \mu_i = |y_{P_i}| \cdot \mu_i = Q_C\left(y, y + \Delta_i^{(2)}\right) + Q_C\left(y, y + \Delta_i^{(3)} \right). \]

Putting this all together, it follows that
\[ \sum_{\Delta : \tilde y + \Delta \in S} \tilde Q_C(\tilde y, \tilde y + \Delta) \leq \sum_{\Delta : y + \Delta \in S} Q_C(y, y + \Delta). \]

\paragraph{Case 2:}
Suppose $y, \tilde y \in S$.
In this case it suffices to show that
\[ \sum_{\Delta : \tilde y + \Delta \in \calY_C \setminus S} Q_C(y, z) \leq \sum_{\Delta : \tilde y + \Delta \in \calY_C \setminus S} \tilde Q_C(\tilde y, z). \]
For every $\Delta$ such that $y + \Delta \notin S$, we have that $\tilde y + \Delta \notin S$ since $\tilde y \leq y$ and $S$ is upward closed.
The upward closedness of $S$ also implies that since $y \in S$, there must exist some coordinate $k$ such that $\Delta_k < 0$.
From the intensity matrix $Q_C$, there are three forms that $\Delta$ can take such that $Q_C(y, y + \Delta) > 0$: (1) $\Delta_i^{(1)} = e_{A_i} - e_{P_i}$, (2) $\Delta_i^{(2)} = e_{P_i} - e_{A_i} $, or (4) $\Delta_i^{(4)}= -e_{A_i}$, for some $i \in \calG$.
Note that for (2) and (4), $\tilde y \leq y$ and $\tilde y \in S$ immediately imply that $\tilde y_{A_i} = y_{A_i}$; otherwise, if $\tilde y_{A_i} < y_{A_i}$, then $\tilde y \leq y - e_{A_i} \leq y + \Delta$, which, by the upward closedness of $S$, contradicts $y + \Delta \notin S$.

Fix $\Delta$ such that $\tilde y + \Delta \notin S$ and first suppose $\Delta = \Delta_i^{(1)}$ for some $i \in \calG$.
In this case, we have
\[  Q_C\left(y, y + \Delta_i^{(1)}\right) = \lambda_i = \tilde Q_C\left(\tilde y, \tilde y + \Delta_i^{(1)}\right). \]
Next suppose $\Delta = \Delta_i^{(2)}$ for some $i \in \calG$.
Since $y_{A_i} = \tilde y_{A_i}$,
\[ Q_C\left(y, y + \Delta_i^{(2)}\right) = y_{A_i} \cdot \mu_i = \tilde y_{A_i} \cdot \mu_i = \tilde Q_C\left(\tilde y, \tilde y + \Delta_i^{(2)}\right). \]
Lastly, suppose $\Delta = \Delta_i^{(3)}$.
We have
\[ Q_C\left(y, y + \Delta_i^{(3)}\right) = \alpha \cdot \sum_{j \in \calB} \bbP_C \left[ R_{ij} | y \right] \cdot \frac{x_{ij}^*}{w_i} \leq \alpha \cdot \frac{\lambda_i}{w_i} = \tilde Q_C\left(\tilde y, \tilde y + \Delta_i^{(3)}\right), \]
where the inequality follows from the fact that $\bbP_C \left[ R_{ij} | y \right] \leq 1$ trivially and Constraint~\ref{eqn:flow-constraint-seller} from $\mathrm{LP}_\mathrm{off}$.

Therefore,
\[ \sum_{\Delta : y + \Delta \in \calY_C \setminus S} Q_C(y, y + \Delta) \leq \sum_{\Delta : \tilde y + \Delta \in \calY_C \setminus S} \tilde Q_C(\tilde y, \tilde y + \Delta). \qedhere \]
\end{proof}

\PrJReachesI*
\begin{proof}
Fix some good $i \in \calG$ and some buyer $j \in \calB$ and consider \Cref{alg:main} when a buyer of type $j$ arrives.
By the upper bound of \Cref{cor:Pr-available-lower-upper-bounds}, the probability that the seller permits the sale of some good $i^\prime \neq i$ that is present and precedes good $i$ in the randomly chosen ordering $\sigma$ is
\[ \sum_{i^\prime \in \calG} \bbP \left [ \sigma(i^\prime) < \sigma(i) \right] \cdot \alpha \cdot \frac{x_{i^\prime j}^*}{\gamma_j \cdot w_{i^\prime}} \cdot \left( 1 - \left(1 + \sum_{q=1}^C \left(\frac{\lambda_{i^\prime}}{\mu_{i^\prime}} \right)^q \right)^{-1} \right) \leq \frac{\alpha}{2} \cdot \sum_{i^\prime \in \calG} \frac{x_{i^\prime j}^*}{\gamma_j} \leq \frac{\alpha}{2}. \]
The first inequality holds since $w_{i^\prime} = 1 - \exp(- \lambda_{i^\prime} / \mu_{i^\prime}) \geq 1 - (1 + \sum_{q=1}^C \frac{\lambda_{i^\prime}}{\mu_{i^\prime}})^{-1}$ for all $C$, and the second inequality follows from Constraint \eqref{eqn:flow-constraint-buyer} from $\mathrm{LP}_\mathrm{off}$.
\end{proof}



\section{Mechanism design implications}\label{appendix:mechanism-design}

By standard results in mechanism design, our posted-price mechanisms imply truthful mechanisms for approximating the optimal social welfare and revenue of any mechanism for this problem. 
For completeness, we discuss implications of our policies to mechanism design for the single-good problem in our model. 

In the mechanism design setting, buyers arrive as before with their real value $v_j$ for the good again drawn from the known distribution $\calD$, though here they may misreport their true value for the good. 
Our randomized posted-price mechanism sets a price-probability pair $(\bar{v},\bar{p})$, and upon arrival of a buyer who observes an available item and bids $b$, the mechanism sells the item at price $\bar{v}$ if the reported bid is strictly higher than $\bar{v}$, and sells the item at the same price with probability $\bar{p}$ if the bid is equal to $\bar{v}$ (else, an item is not sold).
The utility of the buyer with true valuation $v_j$ for the good is $v_j-\bar{v}$. 
It is easy to see that this policy is weakly DSIC (dominant-strategy incentive compatible) and individually rational.\footnote{For weakly DSIC mechanisms, a weakly dominant strategy for a buyer is to report his true value. An individually rational mechanism is one in which a buyer loses nothing from entering the market.}
Since the social welfare (buyer utility + seller revenue) is precisely the gain, $v_j = (v_j - \bar{v}) + \bar{v}$, it is immediate that our posted-price policies yield weakly DSIC and individually rational mechanisms which match our policies' approximation ratios.

\begin{cor}
There exists a weakly DSIC and individually rational mechanism for the single-good stationary prophet inequality problem whose expected social welfare is a $\nicefrac{1}{2}$- and $0.656$-approximation of the social-welfare-maximizing offline and online mechanisms, respectively.
\end{cor}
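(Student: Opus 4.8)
The plan is to show that the randomized posted-price policies built in the proofs of \Cref{thm:1/2-competitive-single-good-policy,thm:0.656-approximate-single-good-policy} are, after reinterpretation in the strategic setting, exactly the mechanisms claimed, with their approximation guarantees transferring unchanged. First I would fix, for each of those policies, the price--probability pair $(\bar v,\bar p)$ it induces (namely $(v_{\ell+1},p_{\ell+1})$ in the notation of \Cref{obs:alg-posted-price}), and let the mechanism act as follows on an arriving buyer who sees an available item and bids $b$: sell at price $\bar v$ if $b>\bar v$, sell at price $\bar v$ with probability $\bar p$ if $b=\bar v$, and do not sell if $b<\bar v$. Weak DSIC and individual rationality are the standard single-parameter facts for a fixed posted price: a buyer with true value $v_j$ affects only whether the price-$\bar v$ transaction happens, obtains expected utility $\max\{0,\ \bar p'\cdot (v_j-\bar v)\}\ge 0$ by reporting truthfully (with $\bar p'\in\{\bar p,1\}$ according to whether $v_j=\bar v$), and can neither improve this by misreporting nor ever transact at a loss.

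The key step is that, in this mechanism, social welfare coincides with the seller's gain: a sale to a buyer of true value $v_j$ contributes buyer utility $v_j-\bar v$ and seller revenue $\bar v$, summing to $v_j$, so summing over transactions and taking the infinite-horizon time average, the mechanism's expected average social welfare equals the expected average gain of the underlying policy. By \Cref{thm:1/2-competitive-single-good-policy,thm:0.656-approximate-single-good-policy} this is at least $\nicefrac12\cdot\optoff(\calI)$ and at least $0.656\cdot\opton(\calI)$, respectively.

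Finally I would bound the benchmarks by these same quantities. In the DSIC equilibrium any mechanism allocates items as a (randomized) function of the realized true valuations, so its realized welfare is exactly the sum of values of the buyer--item pairs it matches; the feasible allocation rules of an offline (resp.\ online) mechanism are therefore a subset of those available to an offline (resp.\ online) value-maximizing policy, whence the social-welfare-maximizing offline (resp.\ online) mechanism has expected average welfare at most $\optoff(\calI)$ (resp.\ $\opton(\calI)$). Chaining the two inequalities yields the stated $\nicefrac12$- and $0.656$-approximations. I expect no real obstacle in this argument; the only point needing a little care is exactly this last reduction --- that admitting strategic buyers cannot let a welfare-maximizing mechanism outperform the omniscient value-maximizing policies $\optoff$ and $\opton$ --- and it holds because in the truthful equilibrium the mechanism's induced allocation is itself a feasible such policy.
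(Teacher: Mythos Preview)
Your proposal is correct and follows essentially the same approach as the paper: reinterpret the posted-price policies of \Cref{thm:1/2-competitive-single-good-policy,thm:0.656-approximate-single-good-policy} as mechanisms, note they are weakly DSIC and IR by the standard posted-price argument, and use the identity (buyer utility $+$ seller revenue $=v_j$) so that social welfare equals the policy's gain. Your final step---arguing that the welfare-maximizing offline/online mechanism cannot exceed $\optoff$/$\opton$ because its truthful-equilibrium allocation is itself a feasible policy---is made more explicit than in the paper, but is exactly the implicit reasoning behind the paper's one-line derivation of the corollary.
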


It is quite standard in the literature to go through Myerson's lemma \cite{myerson1981optimal} to leverage mechanisms which approximately maximize social welfare to obtain mechanisms which approximately maximize the seller's revenue (going via so-called ``virtual valuations'') . Unfortunately, this approach does not work for randomized policies under discrete distributions over buyer valuations, as in our setting. However, it is not hard to leverage our randomized posted-price policies to obtain deterministic posted-price policies, for which the above equivalence of Myerson does hold.

\begin{restatable}{cla}{RandomizedRoundedToDeterministic}
\label{cla:randomized-rounded-to-deterministic}
For any randomized posted-price policy $\calP$ for the single-good stationary prophet inequality problem, there exists a deterministic posted-price mechanism with at least half the expected average gain of $\calP$.
\end{restatable}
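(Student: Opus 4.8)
The plan is to take the randomized posted-price policy $\calP$, characterized by a price-probability pair $(\bar v, \bar p)$, and derandomize it by "rounding" $\bar p$ to either $0$ or $1$, choosing whichever of the two resulting deterministic policies yields higher expected average gain. Concretely, let $\calP_0$ be the deterministic posted-price policy with threshold strictly above $\bar v$ (i.e., it sells only to bids $> \bar v$, equivalently $(\bar v, 0)$), and let $\calP_1$ be the deterministic policy with threshold at $\bar v$ (i.e., it sells to all bids $\geq \bar v$, equivalently $(\bar v, 1)$). I will show that $\max\{\mathrm{gain}(\calP_0), \mathrm{gain}(\calP_1)\} \geq \tfrac{1}{2}\mathrm{gain}(\calP)$.

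First I would set up the comparison using the selling-rate machinery already developed. Recall from \Cref{eqn:s-ij-C-single-good} that for a single-good posted-price policy the selling rate to buyer type $j$ is $s_j = \gamma_j \cdot p_j \cdot \bbP[A \ge 1]$, where $p_j \in \{1, \bar p, 0\}$ according to whether $v_j > \bar v$, $v_j = \bar v$, or $v_j < \bar v$, and the total permitted-arrival rate is $\gamma^* = \sum_j \gamma_j p_j$. The key structural point is that $\bbP[A\ge 1]$ depends on the policy \emph{only} through $\gamma^*$ (via \Cref{lem:Pr-available-exact}), and it is monotone \emph{decreasing} in $\gamma^*$. Write $G$ for the set of buyer types with $v_j > \bar v$ (always-sell under both $\calP_0,\calP_1,\calP$) and let $t$ denote the single type with $v_t = \bar v$ (if any). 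The gain of $\calP$ decomposes as $\mathrm{gain}(\calP) = \bbP_{\calP}[A\ge1]\cdot\big(\sum_{j\in G}\gamma_j v_j + \bar p\,\gamma_t \bar v\big)$, and similarly $\mathrm{gain}(\calP_1) = \bbP_{\calP_1}[A\ge1]\cdot\big(\sum_{j\in G}\gamma_j v_j + \gamma_t\bar v\big)$ and $\mathrm{gain}(\calP_0) = \bbP_{\calP_0}[A\ge1]\cdot\sum_{j\in G}\gamma_j v_j$.

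The main step — and the part I expect to be the real obstacle — is the following monotonicity/convexity argument showing one of the two endpoints recovers half of $\mathrm{gain}(\calP)$. Since $\calP_0$ has the smallest permitted rate and $\calP_1$ the largest, with $\calP$ in between, monotonicity of $\bbP[A\ge1]$ gives $\bbP_{\calP_0}[A\ge1] \ge \bbP_{\calP}[A\ge1] \ge \bbP_{\calP_1}[A\ge1]$. Hence $\mathrm{gain}(\calP_0) \ge \bbP_{\calP}[A\ge1]\sum_{j\in G}\gamma_j v_j$ and $\mathrm{gain}(\calP_1) \ge \bbP_{\calP}[A\ge1]\big(\sum_{j\in G}\gamma_j v_j + \gamma_t\bar v\big)$; wait — the second inequality needs care since $\bbP_{\calP_1} < \bbP_{\calP}$, so instead I will argue directly. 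We have $\mathrm{gain}(\calP_1) \ge \bbP_{\calP_1}[A\ge1]\cdot\gamma_t\bar v$ and separately $\mathrm{gain}(\calP_0) \ge \bbP_{\calP}[A\ge1]\cdot\sum_{j\in G}\gamma_j v_j \ge \bbP_{\calP}[A\ge1]\cdot\bar p\sum_{j\in G}\gamma_j v_j$. To close the bound I need $\mathrm{gain}(\calP_1) \ge \bbP_{\calP}[A\ge1]\cdot\bar p\,\gamma_t\bar v$, i.e. that lowering the availability from $\bbP_{\calP}$ to $\bbP_{\calP_1}$ loses at most a factor $\bar p$ on the type-$t$ contribution; this holds because increasing the permitted rate from $\gamma^*_{\calP} = \sum_{j\in G}\gamma_j + \bar p\gamma_t$ to $\gamma^*_{\calP_1} = \sum_{j\in G}\gamma_j + \gamma_t$ multiplies the availability by at least $\bar p$ — a fact I would extract from the explicit birth-death formula (intuitively, the throughput $\gamma^*\cdot\bbP[A\ge1]$ is nondecreasing in $\gamma^*$, so $\gamma^*_{\calP_1}\bbP_{\calP_1} \ge \gamma^*_{\calP}\bbP_{\calP}$, and since $\gamma^*_{\calP_1}\le \gamma^*_{\calP}/\bar p$ when... ) — establishing this throughput-monotonicity rigorously from \Cref{lem:Pr-available-exact} is the crux. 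Adding the two lower bounds, $\mathrm{gain}(\calP_0) + \mathrm{gain}(\calP_1) \ge \bbP_{\calP}[A\ge1]\cdot\bar p\big(\sum_{j\in G}\gamma_j v_j + \gamma_t\bar v\big) \ge \mathrm{gain}(\calP)$ once we also fold in the $\bbP_{\calP_0}\ge\bbP_{\calP}$ slack on the $G$-term; hence $\max\{\mathrm{gain}(\calP_0),\mathrm{gain}(\calP_1)\}\ge\tfrac12\mathrm{gain}(\calP)$, and since $\calP_0,\calP_1$ are deterministic posted-price policies, the claim follows. Finally I would remark that this transformation preserves the posted-price structure and (by \Cref{obs:alg-posted-price}-type reasoning) is implementable, and that the same argument goes through with bounded capacity $C$ since all the monotonicity facts about $\bbP_C[A\ge1]$ used above hold for every finite $C$ as well.
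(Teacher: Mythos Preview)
Your approach is correct and is essentially the paper's: define the two deterministic endpoints $\calP_0,\calP_1$ and show $\mathrm{gain}(\calP_0)+\mathrm{gain}(\calP_1)\ge\mathrm{gain}(\calP)$ by charging the $>\bar v$ and $=\bar v$ contributions of $\calP$ to $\calP_0$ and $\calP_1$ respectively, whence the better of the two is at least half. The paper's argument for the $\calP_1$ half is marginally more direct---it interposes the policy that sells only to buyers bidding exactly $\bar v$ and observes that $\calP_1$ dominates it, rather than going through the factor-$\bar p$ availability inequality you flag as ``the crux''---but both rest on the same underlying throughput monotonicity, and (once you move the stray $\bar p$ inside the parenthesis in your final sum so it multiplies only the $\gamma_t\bar v$ term) your computation lands on exactly the paper's inequality.
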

\begin{proof}
Fix a randomized posted-price mechanism $\calP$ for the stationary prophet inequality problem characterized by price-probability pair $(\bar v, \bar p)$.
If $\bar p \in \{0, 1\}$, then $\calP$ is in fact a deterministic posted-price mechanism and the claim is trivially true.
Define two deterministic posted-price mechanisms which we refer to as $\calP^0$ and $\calP^1$.
In particular, mechanism $\calP^0$ accepts all bids strictly greater than $\bar v$ but rejects any bid of $\bar v$ or less.
On the other hand, mechanism $\calP^1$ accepts all bids greater than or equal to $\bar v$ but rejects any bid strictly less than $\bar v$.
We denote the expected average gains of $\calP^0$ and $\calP^1$ as $r^{\calP^0}$ and $r^{\calP^1}$, respectively. 

Let $r_{>\bar v}^\calP$ denote the expected average gain under $\calP$ from selling the good to buyers with value strictly greater than $\bar v$ and let $r_{=\bar v}^\calP$ denote the expected average gain under $\calP$ from selling the good to buyers with value exactly $\bar v$.
Note that we trivially have $r_{>\bar v}^\calP + r_{=\bar v}^\calP = r^\calP$.
It is clear that $r^{\calP^0} \geq r_{>\bar v}^\calP$ since the stationary probability under $\calP^0$ that an item is available when a buyer with value strictly greater than $\bar v$ arrives is at least the same probability under $\calP$.
We also have that $r^{\calP^1} \geq r_{=\bar v}^\calP$.
This is true since $r^{\calP^1}$ is at least the expected average gain of the policy that only sells items to buyers with value exactly $\bar v$, which is trivially an upper bound on $r_{=\bar v}^\calP$.
It follows then that $r^{\calP^0} + r^{\calP^1} \geq r_{>\bar v}^\calP + r_{=\bar v}^\calP = r^\calP$ and therefore,
\[ \max \left\{ r^{\calP^0}, r^{\calP^1} \right\} \geq \nicefrac{1}{2} \cdot r^\calP. \qedhere \]
\end{proof}

Combining the above with our algorithms of \Cref{thm:1/2-competitive-single-good-policy,thm:0.656-approximate-single-good-policy} and with Myerson's lemma then yields the following.

\begin{cor}
There exists a DSIC mechanism for the single-good stationary prophet inequality problem whose expected revenue is a $\nicefrac{1}{4}$- and $0.328$-approximation of the revenue-maximizing offline and online DSIC mechanisms, respectively.
\end{cor}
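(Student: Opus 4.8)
The plan is to combine the posted-price policies of \Cref{thm:1/2-competitive-single-good-policy,thm:0.656-approximate-single-good-policy}, the rounding lemma \Cref{cla:randomized-rounded-to-deterministic}, and Myerson's characterization of revenue-optimal mechanisms \cite{myerson1981optimal}. First, \Cref{thm:1/2-competitive-single-good-policy} (resp.\ \Cref{thm:0.656-approximate-single-good-policy}) gives a \emph{randomized} posted-price policy whose expected average gain---equivalently, social welfare---is at least $\nicefrac12\cdot\optoff(\calI)$ (resp.\ $0.656\cdot\opton(\calI)$). Applying \Cref{cla:randomized-rounded-to-deterministic} to each yields a \emph{deterministic} posted-price mechanism whose expected average gain is at least $\nicefrac14\cdot\optoff(\calI)$ (resp.\ $0.328\cdot\opton(\calI)$); a deterministic posted-price mechanism (accept a bid iff it is at least the posted threshold, and charge that threshold) is DSIC and individually rational, so we obtain DSIC mechanisms that $\nicefrac14$- and $0.328$-approximate the welfare-maximizing offline and online DSIC mechanisms.

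Next I would lift these welfare guarantees to revenue guarantees via virtual valuations. For each buyer type $j$, let $\bar\phi_j$ be the ironed virtual value of $\calD$ at type $j$, and let $\calI^\phi$ be the instance identical to $\calI$ except that a buyer of type $j$ has valuation $\phi_j^+\triangleq\max\{0,\bar\phi_j\}$; this is a valid stationary prophet inequality instance. By Myerson's lemma applied to each arriving buyer, whose allocation probability under any DSIC mechanism is monotone in that buyer's own reported bid, the expected payment of each buyer equals the expectation of their ironed virtual value times their allocation probability; averaging over the Poisson arrivals, the expected average revenue of any DSIC mechanism for $\calI$ equals its expected average \emph{virtual welfare} (the average of ironed virtual values of served buyers), and the latter is at most the optimal welfare of $\calI^\phi$. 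Moreover, the welfare-optimal offline (resp.\ online) allocation for $\calI^\phi$ is, by a standard exchange argument, monotone in each buyer's virtual value and hence (after ironing) corresponds to a DSIC mechanism for $\calI$ attaining that virtual welfare as revenue; therefore the revenue of the revenue-maximizing offline (resp.\ online) DSIC mechanism for $\calI$ equals exactly $\optoff(\calI^\phi)$ (resp.\ $\opton(\calI^\phi)$).

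Finally, run the deterministic welfare-approximating posted-price mechanism from the first paragraph on the instance $\calI^\phi$. Since $\bar\phi_j$ is non-decreasing in $v_j$ and buyer types are discrete, a price threshold among the modified values $\{\phi_j^+\}_{j}$ corresponds to a price threshold among the original values $\{v_j\}_{j}$, so this mechanism is itself a deterministic posted-price mechanism on $\calI$ (with a price determined by $\calD$ through the virtual valuations), hence DSIC. Its expected average virtual welfare is at least $\nicefrac14\cdot\optoff(\calI^\phi)$ (resp.\ $0.328\cdot\opton(\calI^\phi)$), which by the previous paragraph equals its expected average revenue on $\calI$, while $\optoff(\calI^\phi)$ (resp.\ $\opton(\calI^\phi)$) is the optimal offline (resp.\ online) revenue. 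This yields the claimed $\nicefrac14$- and $0.328$-approximations.

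The step I expect to be the main obstacle is the second paragraph: carefully verifying that the per-buyer Myerson characterization transfers to the continuous-time, infinite-horizon stationary setting---so that maximizing the average revenue rate is exactly maximizing the average virtual-welfare rate, simultaneously for offline and online DSIC mechanisms---together with the bookkeeping around ironing for non-regular discrete distributions and around buyers with negative virtual value (for whom the thresholding $\phi_j^+$ and a nonnegative posted price guarantee they are never served, matching the revenue-optimal mechanism). The rounding step and the final substitution are then routine.
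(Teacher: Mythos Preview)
Your proposal is correct and follows exactly the route the paper indicates: the paper's own ``proof'' of this corollary is a single sentence stating that one combines \Cref{thm:1/2-competitive-single-good-policy,thm:0.656-approximate-single-good-policy} with \Cref{cla:randomized-rounded-to-deterministic} and Myerson's lemma, and you have faithfully expanded precisely that outline---round the randomized posted-price policy to a deterministic one (losing a factor of two), then pass to virtual valuations so that welfare guarantees on $\calI^\phi$ become revenue guarantees on $\calI$. Your discussion of the bookkeeping around ironing, nonnegative virtual values, and the per-buyer monotonicity needed to invoke Myerson in the stationary setting is more careful than what the paper itself provides, but the approach is the same.
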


\bibliographystyle{acmsmall}
\bibliography{abbreviations,bibliography,ultimate}

\end{document}